\newtheorem{lemma}{Lemma}
\newtheorem{corollary}{Corollary}
\newcommand*\bigcdot{\mathpalette\bigcdot@{.5}}
\newcommand*\bigcdot@[2]{\mathbin{\vcenter{\hbox{\scalebox{#2}{$\m@th#1\bullet$}}}}}
\newcommand{\deq}{\stackrel{\bigcdot}{=}}
\definecolor{azure}{rgb}{0.0, 0.5, 1.0}
\definecolor{darkblue}{rgb}{0.15,0.35,0.7}
\definecolor{reddish}{rgb}{0.65, 0.2, 0.2}
\definecolor{brandeisblue}{rgb}{0.0, 0.44, 1.0}
\definecolor{ceruleanblue}{rgb}{0.16, 0.32, 0.75}
\definecolor{indigo(dye)}{rgb}{0.0, 0.25, 0.42}
\newcommand{\overbar}[1]{\mkern 1.5mu\overline{\mkern-1.5mu#1\mkern-1.5mu}\mkern 1.5mu}
\newcommand{\TT}{T\overbar{T}}
\newcommand{\ul}{\underline}
\newcommand{\fL}{\mathfrak{L}}
\newcommand{\fJ}{\mathfrak{J}}
\DeclareSymbolFont{stixsymbols}{LS1}{stixscr}{m}{n}
\DeclareMathSymbol{\kay}{\mathalpha}{stixsymbols}{"6B}
\DeclareMathSymbol{\hay}{\mathalpha}{stixsymbols}{"68}
\newcommand\CC[1]{\mathfrak{#1}}
\renewcommand\section{\@startsection {section}{1}{\z@}%
                               {-3.5ex \@plus -1ex \@minus -.2ex}
                               {2.3ex \@plus.2ex}%
                               {\normalfont\large\bfseries}}
\renewcommand\subsection{\@startsection{subsection}{2}{\z@}%
                                 {-3.25ex\@plus -1ex \@minus -.2ex}%
                                 {1.5ex \@plus .2ex}%
                                 {\normalfont\bfseries}}
\let\non\nonumber
\newfont{\goth}{ygoth.tfm scaled 1200}                   
\numberwithin{equation}{section}
\begin{document}
\begin{titlepage}
\begin{flushright}
\today
\end{flushright}
\vspace{5mm}

\begin{center}
{\Large \bf 
Integrable Higher-Spin Deformations of Sigma Models from Auxiliary Fields}
\end{center}

\begin{center}

{\bf
Daniele Bielli${}^{a}$,
Christian Ferko${}^{b}$,
Liam Smith${}^{c}$,\\
Gabriele Tartaglino-Mazzucchelli${}^{c}$
} \\
\vspace{5mm}

\footnotesize{
${}^{a}$
{\it 
High Energy Physics Research Unit, Faculty of Science \\ 
Chulalongkorn University, Bangkok 10330, Thailand
}
 \\~\\
${}^{b}$
{\it 
Center for Quantum Mathematics and Physics (QMAP), 
\\ Department of Physics \& Astronomy,  University of California, Davis, CA 95616, USA
}
 \\~\\
${}^{c}$
{\it 
School of Mathematics and Physics, University of Queensland,
\\
 St Lucia, Brisbane, Queensland 4072, Australia}
}
\vspace{2mm}
~\\
\texttt{d.bielli4@gmail.com,
caferko@ucdavis.edu,
liam.smith1@uq.net.au,
g.tartaglino-mazzucchelli@uq.edu.au
}\\
\vspace{2mm}

\end{center}

\begin{abstract}
\baselineskip=14pt

We construct a new infinite family of integrable deformations of the principal chiral model (PCM) parameterized by an interaction function of several variables, which extends the formalism of \cite{Ferko:2024ali}, and includes deformations of the PCM by functions of both the stress tensor and higher-spin conserved currents. We show in detail that every model in this class admits a Lax representation for its equations of motion, and that the Poisson bracket of the Lax connection takes the Maillet form, establishing the existence of an infinite set of Poisson-commuting conserved charges. We argue that the non-Abelian T-dual of any model in this family is classically integrable, and that T-duality ``commutes'' with a general deformation in this class, in a sense which we make precise. Finally, we demonstrate that these higher-spin auxiliary field deformations can be extended to accommodate the addition of a Wess-Zumino term, and we exhibit the Lax connection in this case.

\end{abstract}
\vspace{5mm}

\vfill
\end{titlepage}

\tableofcontents

\section{Introduction}

Conserved quantities have long played an important role in the study of quantum field theory, both because these quantities have special properties and because their existence constrains the dynamics of a theory in useful ways. The most familiar examples are vector conserved currents $j_\mu$, which arise from conventional global symmetries via the Noether procedure and give rise to local operators which satisfy constraints such as Ward identities. However, there are also many well-known examples of currents $j_{\mu_1 \ldots \mu_s}$ which transform in higher-spin representations of the Lorentz group; perhaps the most ubiquitous is the energy-momentum tensor $T_{\mu \nu}$, which exists in any translationally-invariant field theory.\footnote{Another class of examples includes totally antisymmetric currents $j_{\mu_1 \ldots \mu_s}$ in $d$-dimensional QFTs, where $1 < s < d$, which are associated with higher-form global symmetries \cite{Gaiotto:2014kfa}; see \cite{Sharpe:2015mja,Gomes:2023ahz,Schafer-Nameki:2023jdn,Brennan:2023mmt,Bhardwaj:2023kri,Shao:2023gho} for reviews.}

The existence of any conserved current of spin greater than $1$, besides the stress tensor (and, in theories with supersymmetry, the spin-$\frac{3}{2}$ supercurrent) imposes much stronger constraints on the dynamics of a QFT than the conditions associated with ordinary spin-$1$ currents. For instance, under mild assumptions, any three dimensional conformal field theory with a conserved current of spin greater than $2$ is essentially free \cite{Maldacena:2011jn}.\footnote{Analogous statements in higher dimensions have been discussed in \cite{Stanev:2013qra,Alba:2013yda,ABDALLA1982181}.} However, the situation in two spacetime dimensions is especially rich: here the existence of higher-spin conserved currents is a signature of integrability \cite{PhysRevD.17.2134,PARKE1980166}. Although they are not forced to be free, such $2d$ integrable quantum field theories (IQFTs) are sufficiently constrained that one can often solve for their dynamics exactly, which places them in a desirable ``Goldilocks zone'' of complexity and thus makes them an attractive theoretical laboratory.

Given the amount of research interest in $2d$ IQFTs, one would like to generate many examples of integrable models, and -- if possible -- to chart out the space of all such integrable theories. One way to make progress towards this goal is to find ways of \emph{deforming} a given integrable field theory in a way which preserves integrability, therefore producing a parameterized family of IQFTs. In the construction of such deformations, conserved quantities again play a starring role. It was shown in \cite{Smirnov:2016lqw} (building on earlier work \cite{Zamolodchikov:2004ce}) that one can construct an infinite family of integrability-preserving deformations of any $2d$ IQFT which are defined using bilinears in conserved quantities. For instance, consider a totally symmetric spin-$s$ current $j_{\mu_1 \ldots \mu_s}$ with non-vanishing ``positive'' components
\begin{align}
    j_{s +} = j_{\underbrace{+ + \ldots +}_{s \text{ times } } } \, , \qquad j_{(s-2)+} = j_{- \underbrace{+ \ldots +}_{s - 1  \text{ times } } } \, ,
\end{align}
along with ``negative'' components
\begin{align}
    j_{s -} = j_{\underbrace{- - \ldots -}_{s \text{ times } } } \, , \qquad j_{(s-2) -} = j_{+ \underbrace{- \ldots -}_{s - 1  \text{ times } } } \, ,
\end{align}
which satisfies the conservation equation
\begin{align}
    \partial^{\mu_1} j_{\mu_1 \ldots \mu_s} = 0 \, .
\end{align}
Then the analysis of \cite{Smirnov:2016lqw} shows that the coincident point limit
\begin{align}\label{OK_defn_intro}
    \mathcal{O}_s ( x ) = \lim_{y \to x} \left( j_{s +} ( x ) j_{s -} ( y ) - j_{(s - 2) +} ( x ) j_{(s - 2) -} ( y ) \right) \, ,
\end{align}
gives rise to a well-defined local operator in the spectrum of the theory. One can then deform the theory by adding this integrated local operator to its action, which generates a family of new theories that nonetheless are still integrable. When $s = 2$ and the conserved current used in this construction is the energy-momentum tensor, the object produced in this way is known as the $\TT$ operator \cite{Zamolodchikov:2004ce,Cavaglia:2016oda}. 
However, for the purposes of this discussion, nothing is special about $s=2$; any higher-spin current, of which there are typically infinitely many in integrable field theories, serve equally well for generating integrable deformations.

Let us emphasize that the quantum-mechanical definition of these integrable deformations requires that the deforming operator take precisely the form (\ref{OK_defn_intro}), which is bilinear in components of the currents with a particular relative coefficient. Therefore, for studies of \emph{quantum integrability}, the functional dependence of the operators $\mathcal{O}_k$ is quite constrained. Quantum integrability is a deep and interesting topic, with a long history that notably includes the original observations by the Zamolodchikov brothers that there exist $2d$ IQFTs with factorized S-matrices \cite{ZAMOLODCHIKOV1978525,ZAMOLODCHIKOV1979253}; the quantum integrable deformations (\ref{OK_defn_intro}) dress such factorized S-matrices with momentum-dependent phases known as Castillejo-Dalitz-Dyson (CDD) factors \cite{PhysRev.101.453}, which preserve factorization. Despite the many fascinating aspects of quantum integrability, for many purposes, even \emph{classically} integrable field theories are of considerable interest. One might ask whether more general deformations by higher-spin currents, beyond the specific bilinears (\ref{OK_defn_intro}), preserve integrability at the classical level.

In this work, we will address this question in the context of a specific integrable $2d$ field theory, the principal chiral model (PCM).\footnote{We also consider the non-Abelian T-dual of the PCM, and the PCM with Wess-Zumino (WZ) term.} The PCM belongs to a sub-class of $2d$ IQFTs known as integrable sigma models, which describe the dynamics of a mapping from a two-dimensional spacetime manifold into a particular target space. Sigma models find applications in many areas of theoretical physics, especially in string theory, where they describe the embedding of the string worldsheet into a target spacetime.\footnote{In particular, integrable sigma models which appear in descriptions of strings on $\mathrm{AdS}$ spacetimes have played an important role in holography; see \cite{Beisert:2010jr,Demulder:2023bux} for reviews.} Another famous example is the $O(3)$ sigma model, which has been used in condensed matter physics to describe the continuum field theory arising from a limit of spin chains \cite{HALDANE1983464}. The PCM, in particular, can also be viewed as a toy model for four-dimensional Yang-Mills theory, since the quantum theory is asymptotically free but becomes strongly coupled in the IR.

In addition to its many physical applications, the principal chiral model has the advantage that it admits many known integrable deformations (besides the ones associated with deformations by current bilinears, which we have introduced above). Examples of these include the addition of a Wess-Zumino term \cite{ABDALLA1982181}, Yang-Baxter deformations \cite{Klimcik:2002zj,Klimcik:2008eq}, $\lambda$ deformations \cite{Sfetsos:2013wia}, and many others; see \cite{Zarembo:2017muf,Orlando:2019his,Seibold:2020ouf,Klimcik:2021bjy,Hoare:2021dix} for reviews.

The main question of interest in this article -- whether deformations of the PCM by more general functions of higher-spin currents (\ref{OK_defn_intro}) are classically integrable -- has already been answered in the special case $s = 2$, in which case it is known that deformations by \emph{arbitrary} functions of the spin-$2$ conserved current (namely, the energy-momentum tensor) preserve classical integrability \cite{Ferko:2024ali}. This result was established by using an auxiliary field technique, inspired by the four-dimensional Ivanov-Zupnik formalism \cite{Ivanov:2002ab,Ivanov:2003uj}, which involves an interaction function of a single real variable. This family of auxiliary field deformations of the principal chiral model has been dubbed the ``auxiliary field sigma model'' or AFSM, and has recently been understood in terms of $4d$ Chern-Simons theory \cite{Fukushima:2024nxm}.

Our main novel result is to generalize the auxiliary field sigma model so that the interaction function depends on several real variables, each of which is constructed from a higher-spin combination of auxiliary fields. At least to leading order around the principal chiral model, we will see that this generalized family of auxiliary field models includes deformations of the PCM by \emph{arbitrary} functions of Lorentz scalars constructed from higher-spin conserved currents, much like the original AFSM of \cite{Ferko:2024ali} incorporates deformations by arbitrary functions of the energy-momentum tensor. We will conjecture, with some evidence, that this generalized family includes deformations by arbitrary functions of higher-spin currents at all orders. To the best of our knowledge, these observations represent the first results at the level of deformed Lagrangians which obey flow equations driven by combinations of higher-spin currents. We will also prove that every model in this family is classically integrable, which means that -- assuming our conjecture is correct -- this would answer our original question in the affirmative, and one could indeed conclude that general higher-spin current deformations of the PCM preserve classical integrability.

 Another motivation for the analysis in this work, besides understanding integrable deformations, concerns \emph{duality}. It is sometimes the case that two, apparently different, physical theories actually describe the same underlying system. In this case, we say that the two theories have ``dual descriptions'' of the system, or that they are related by a duality. Dualities are ubiquitous in physics; the main example of interest here, which we review in section \ref{sec:t_duality}, is T-duality, which was first noticed in the setting of string compactifications and which relates the principal chiral model to an equivalent (but seemingly different) dual model.

One can often learn more about the dual descriptions of a system by studying deformations on both sides of the duality. A famous example in $2d$ field theory is that of the duality between the massive Thirring theory and the sine-Gordon model \cite{PhysRevD.11.2088} (see also the recent reviews \cite{Torrielli:2022byn,Torrielli:2024bpa}). These two theories are described, in notation similar to that of \cite{tong2018gauge}, by the Lagrangians
\begin{align}
    \mathcal{L}_{g, m}^\psi = i \overbar{\psi} \gamma^\mu \partial_\mu \psi - m \overbar{\psi} \psi - g \overbar{\psi} \gamma^\mu \psi \overbar{\psi} \gamma_\mu \psi \, ,
\end{align}
and
\begin{align}
    \mathcal{L}_{g, m}^\phi = \frac{1}{2} \left( \frac{1}{4 \pi} + \frac{g}{2 \pi^2} \right) \left( \partial_\mu \phi \right)^2 + \frac{m}{\pi \epsilon} \cos ( \phi ) \, ,
\end{align}
respectively, where $\epsilon$ is a regularization parameter. These dual descriptions for a pair of two-parameter families of deformed theories can be visualized in the following diagram, where marginal deformations move horizontally and relevant deformations flow downward:
\begin{align}
    \begin{tikzcd}[row sep=large, column sep=large,ampersand replacement=\&]
    \& \mathcal{L}_{0, 0}^{\phi} \arrow[<->,color=cyan,"\text{dual}", swap,dl] \arrow[color=ForestGreen, "{\color{ForestGreen} g}", rr] \arrow[color=red, "{\color{red} m}", dd, pos=0.7] \& \& \mathcal{L}_{g, 0}^\phi \arrow[color=cyan,<->,"\text{dual}",dl] \arrow[color=red, "{\color{red} m}", dd] \\
    \mathcal{L}_{0,0}^\psi \arrow[color=ForestGreen, "{\color{ForestGreen} g}", rr, crossing over, pos = 0.6] \arrow[color=red, "{\color{red} m}", swap, dd] \& \& \mathcal{L}_{g, 0}^\psi \\
    \& \mathcal{L}_{0, m}^{\phi} \arrow[color=cyan,<->,"\text{dual}",dl] \arrow[color=ForestGreen, "{\color{ForestGreen} g}", pos=0.35, rr] \& \& \mathcal{L}_{g, m}^\phi \arrow[color=cyan, <->,"\text{dual}",dl] \\
    \mathcal{L}_{0, m}^\psi \arrow[color=ForestGreen, "{\color{ForestGreen} g}", swap, rr] \& \& \mathcal{L}_{g, m}^\psi \arrow[color=red, "{\color{red} m}", from=uu, crossing over, pos=0.6]\\
    \end{tikzcd} \, .
\end{align}
Although the degrees of freedom in these models are very different -- the Thirring model describes the dynamics of a massive fermionic field $\psi$ with a four-fermion interaction, while the sine-Gordon model describes a compact scalar field $\phi$ subject to a cosine potential -- the two models are in fact exactly equivalent. Furthermore, the deformations of the two models map into one another. Changing the radius of the compact scalar $\phi$ from its self-dual value $R_0^2 = \frac{1}{4 \pi}$ to another value $R^2 = \frac{1}{4 \pi} + \frac{g}{2 \pi^2}$ maps onto the activation of a four-fermion interaction with coupling constant $g$ in the dual model. Likewise, deforming the Thirring model by tuning the mass parameter $m$, in the fermionic duality frame, corresponds to adjusting the coefficient $m$ of the cosine potential in the bosonic duality frame. Thus, in the duality between the massive Thirring and sine-Gordon theories, these deformations look very different in the two duality frames: a mass term is quite dissimilar to a cosine potential, and a rescaling of the boson kinetic term looks rather unlike a four-fermion interaction.

Unlike the Thirring/sine-Gordon example, our auxiliary field deformations of the principal chiral model will have almost identical behavior on both sides of the duality. This is a strength of the family of deformations we consider here, and is a consequence of the fact that our deformations are built from conserved currents. In a sense, this means that the deformations are ``intrinsically defined'' since one can construct such deformations for any theory with higher-spin currents without needing to know the specific details about the fundamental fields in the model from which these currents are comprised. We will illustrate this general lesson about deformations and dual descriptions by studying auxiliary field deformations of the T-dual PCM explicitly, and we will find that they interact in a particularly simple way with the duality transformation. In particular, these deformations ``commute with T-duality'' in a sense which we will explain, and we check that they preserve classical integrability in both duality frames (as they should).

The structure of this paper is as follows. In section \ref{sec:hs_afsm}, we introduce the models of interest in this work, present their equations of motion, and explain their connection to higher-spin deformations. Section \ref{sec:integrability} shows that these models are classically integrable, in the sense that they possess an infinite set of Poisson-commuting conserved charges. In section \ref{sec:t_duality}, we discuss the T-duals of our deformed models, which are related to the original models by a canonical transformation and are therefore also integrable. Section \ref{sec:extensions} shows how to combine our auxiliary field deformations with the activation of a Wess-Zumino term. Finally, section \ref{sec:conclusion} summarizes our results and presents directions for future research. In an attempt to be thorough and pedagogical, we have devoted appendices \ref{app:eom} through \ref{app:tdual_details} to presenting the details of many elementary manipulations used in the body of this manuscript, which we hope will make this work accessible to students and researchers in adjacent areas. 

\textbf{Note added}. While this work was in preparation, the preprint \cite{Fukushima:2024nxm} appeared on the arXiv, which has some overlap with section \ref{sec:extensions} of this article. In particular, our section \ref{sec:wess_zumino} constructs a Lax representation for the equations of motion of an auxiliary field sigma model with Wess-Zumino term and an interaction function $E ( \nu_2, \ldots , \nu_N )$ of several variables. In \cite{Fukushima:2024nxm}, this was done in the special case where $E ( \nu_2 )$ depends on only one of these variables, and their construction agrees with ours in this case (up to differences of conventions).

\section{Auxiliary Field Sigma Model with Higher-Spin Couplings}\label{sec:hs_afsm}

In this section, we will introduce the family of integrable deformations of the principal chiral model which will be the focus of this work. We will also explain the basic features of these models, such as their equations of motion. Many of the technical steps in this discussion will proceed similarly to those in \cite{Ferko:2024ali}, which corresponds to a special sub-class of our models in which the interaction function $E$ is a function of one variable rather than of several. However, we find it instructive to present these arguments in somewhat more detail than those which appeared in the original treatment of \cite{Ferko:2024ali}.

\subsection{Definition of Deformed Models}\label{sec:models_defn}

We focus on classical field theories which are defined on a flat, two-dimensional spacetime manifold $\Sigma$. We will sometimes refer to $\Sigma$ as the worldsheet, and we choose coordinates $\sigma^\alpha = ( \sigma, \tau )$ on $\Sigma$. For concreteness, one can imagine that $\Sigma$ is the plane $\mathbb{R}^{1,1}$, in which case both the spatial coordinate $\sigma$ and the time coordinate $\tau$ are non-compact, or one can take $\Sigma = S^1 \times \mathbb{R}$ to be a cylinder in which the spatial coordinate is compactified with identification $\sigma \sim \sigma + 2 \pi$.

In particular, our interest lies in two-dimensional sigma models where the target space $G$ is a Lie group, whose Lie algebra will be written as $\mathfrak{g}$. The fundamental field of these models is denoted as $g ( \sigma, \tau )$, which is a map from the worldsheet $\Sigma$ into $G$. From this field $g$, one can construct two important quantities, the left- and right-invariant Maurer-Cartan forms, which are defined by
\begin{align}\label{left_right_maurer_cartan}
    j = g^{-1} d g \, , \text{ and } \qquad \tilde{j} = - ( dg ) g^{-1} \, ,
\end{align}
respectively. All of our subsequent discussion can be phrased either in terms of the left-invariant Maurer-Cartan form $j$ or, equivalently, in terms of its right-invariant analogue $\tilde{j}$. To avoid repeating ourselves, we will phrase the entire construction using the left-invariant form $j$, although the reader should keep in mind that one could straightforwardly reformulate our presentation in terms of $\tilde{j}$.

It is convenient to introduce local coordinates $\phi^\mu$ on the Lie group $G$, where $\mu = 1 , \ldots , \dim ( G )$ is a target-space index. One can then express the group-valued field $g$ as $g ( \phi^\mu ( \sigma^\alpha ) )$. Note that we will always use early Greek letters like $\alpha$, $\beta$ for indices on the worldsheet $\Sigma$, and middle Greek letters like $\mu$, $\nu$ for target-space indices on the Lie group. From $\phi^\mu$ one can define the pull-back map
\begin{align}\label{pullback}
    \frac{\partial \phi^\mu}{\partial \sigma^\alpha} \, ,
\end{align}
which allows us to convert between indices on $G$ and those on $\Sigma$. For instance, one can define the pull-back of the left-invariant Maurer-Cartan form by
\begin{align}
    j_\alpha = \frac{\partial \phi^\mu}{\partial \sigma^\alpha} j_\mu \, .
\end{align}
We find it useful to introduce light-cone coordinates $\sigma^{\pm} = \frac{1}{2} ( \tau \pm \sigma )$ on $\Sigma$. These indices are raised or lowered with the flat worldsheet metric $\eta_{\alpha \beta}$ or its inverse $\eta^{\alpha \beta}$, whose non-zero components in light-cone coordinates are
\begin{align}
    \eta_{+ - } = \eta_{- + } = - 2 \, , \qquad \eta^{+ -} = \eta^{- + } = - \frac{1}{2} \, .
\end{align}
By virtue of its definition, the form $j_\alpha$ satisfies the Maurer-Cartan identity, which can be written using our light-cone notation as
\begin{align}\label{mc_identity}
    \partial_+ j_- - \partial_- j_+ + [ j_+ , j_- ] = 0 \, .
\end{align}
We emphasize that (\ref{mc_identity}) is an off-shell relation, which holds identically and is not a consequence of the equations of motion for any particular model. When the Euler-Lagrange equations are satisfied, the form $j_\alpha$ will satisfy additional on-shell constraints. One familiar example is the principal chiral model, whose Lagrangian is
\begin{align}\label{pcm_lagrangian}
    \mathcal{L}_{\text{PCM}} &= \frac{1}{2} \eta^{\alpha \beta} \tr ( j_\alpha j_\beta ) \nonumber \\
    &= - \frac{1}{2} \tr ( j_+ j_- ) \, .
\end{align}
Here ``$\tr$'' is a matrix trace over a representation of the Lie algebra to which $j$ belongs.
The equation of motion which arises from variation of (\ref{pcm_lagrangian}) is
\begin{align}
    \partial_\alpha j^\alpha = 0 \, ,
\end{align}
so when the field $j_\alpha$ of the PCM is on-shell, it satisfies both the Maurer-Cartan identity (\ref{mc_identity}) and the conservation equation $\partial_+ j_- + \partial_- j_+ = 0$.

The objective of this article is to study a class of models which deforms the PCM (\ref{pcm_lagrangian}) through the activation of specific interactions. The most economical way to present this family of deformed models is to introduce, in addition to the physical group-valued field $g$, an additional Lie algebra valued vector field $v_\pm$ on the worldsheet, which is purely auxiliary in the sense that its equations of motion are algebraic.

Much like the combination $\tr ( j_+ j_- )$ which appears in the PCM Lagrangian (\ref{pcm_lagrangian}), any trace of a product involving equal numbers of $v_+$ and $v_-$ factors is both a c-number (as opposed to an element of $\mathfrak{g}$) and a Lorentz scalar. Useful combinations of this type include
\begin{align}
    \nu_2 = \tr ( v_+ v_+ ) \tr ( v_- v_- ) \, , \quad \ldots \, , \quad \nu_k = \tr ( v_+^k ) \tr ( v_-^k ) \, ,
\end{align}
for any integer $k \geq 2$. For each choice of finite-dimensional Lie algebra $\mathfrak{g}$, not all of the combinations $\nu_k$ will be independent. One way to see this is by representing the elements of $\mathfrak{g}$ as $N \times N$ matrices for some fixed $N$. In this case, by the Cayley-Hamilton theorem, we know that for any $N \times N$ matrix $M$, the first $N$ traces
\begin{align}
    \tr ( M^j ) \, , \qquad j = 1 , \ldots , N \, , 
\end{align}
are functionally independent, but all higher traces $\tr ( M^j )$ for $j > N$ can be expressed in terms of lower traces. Similarly, only finitely many traces $\nu_1 , \ldots , \nu_N$ are independent variables, after which the $\nu_k$ with larger indices are functions of those with smaller indices. Despite this, we will sometimes find it convenient to speak of the complete set of $\nu_k$ for all $k \in \mathbb{N}$, even though only a finite subset is functionally independent. 

We are now in a position to introduce our class of deformed PCM-like models. Letting $N$ be the number of functionally independent traces $\nu_j$ as above, consider the Lagrangian
\begin{align}\label{deformed_family}
    \mathcal{L}_{\text{AFSM}} = \frac{1}{2} \mathrm{tr} ( j_+ j_- ) + \mathrm{tr} ( v_+ v_- ) + \mathrm{tr} ( j_+ v_- + j_- v_+ ) + E ( \nu_2 , \nu_3 , \ldots , \nu_N ) \, ,
\end{align}
where $E$ is an arbitrary differentiable function of $N - 1$ variables. We refer to (\ref{deformed_family}) as a higher-spin auxiliary field sigma model, or simply an auxiliary field sigma model (AFSM).

Let us next give the Euler-Lagrange equations for a general model of the form (\ref{deformed_family}). The detailed derivation of these equations of motion has been relegated to appendix \ref{app:eom}. The equation of motion for the physical group-valued field $g$ can be written as
\begin{align}\label{j_eom}
    \partial_+ ( j_- + 2 v_- ) + \partial_- ( j_+ + 2 v_+ ) = 2 \left( [ v_- , j_+ ] + [ v_+, j_- ] \right) \, .
\end{align}
To give the equation of motion for the auxiliary field $v_{\pm}$, we first introduce some notation. We expand any Lie algebra valued quantity $X$ as
\begin{align}
    X = X^A T_A \, ,
\end{align}
where $T_A$ is a basis of generators for the representation which we are considering for $\mathfrak{g}$. We use capital early Latin letters like $A$, $B$, $C=1,\ldots,\dim{G}$ for indices in the adjoint representation of the Lie algebra. All such indices are lowered with the Killing-Cartan form,
\begin{align}\label{killing_cartan}
    \gamma_{AB} = \tr ( T_A T_B ) \, .
\end{align}
We assume that $\gamma_{AB}$ is non-degenerate, or equivalently that the Lie algebra $\mathfrak{g}$ is semi-simple, and we raise indices with the inverse $\gamma^{AB}$ of the Killing-Cartan form (\ref{killing_cartan}).

In terms of these quantities, the equation of motion for $v_{\pm}$ can be written as
\begin{align}\label{v_eom}
    0 = j_{\pm} + v_{\pm} + \sum_{n = 2}^{N} n E_n \tr ( v_{\pm}^n ) v_{\mp}^{A_1} v_{\mp}^{A_2} \ldots v_{\mp}^{A_{n-1}} T^{A_n} \tr ( T_{(A_1} T_{A_2} \ldots T_{A_{n} )} ) \, ,
\end{align}
where we have defined the shorthand
\begin{align}
    E_n = \frac{\partial E}{\partial \nu_n} \, ,
\end{align}
and where our convention for symmetrization of indices is
\begin{align}
    X_{(A_1 \ldots A_n)} = \frac{1}{n!} \sum_{\sigma \in S_n} X_{A_{\sigma ( 1 ) } A_{\sigma ( 2 ) } \ldots A_{\sigma ( n ) }} \, .
\end{align}
Just as the status of the physical field $g$ differs from that of the auxiliary field $v_{\pm}$, as the latter does not give rise to a true propagating degree of freedom, the two equations of motion (\ref{j_eom}) and (\ref{v_eom}) also play different roles in our analysis. While (\ref{j_eom}) is interpreted as a dynamical condition, we view (\ref{v_eom}) as simply an algebraic constraint. We will highlight this distinction by introducing another piece of notation: we use the symbol $\deq$ to indicate that two quantities are equal when the auxiliary field equation of motion is satisfied.

\subsection{Implications of Equations of Motion}\label{sec:implications}

We now turn to studying the implications of the Euler-Lagrange equations associated with the theories (\ref{deformed_family}), and in particular, the consequences of the auxiliary field equation of motion. We will find that the structure of this equation implies several simple relations which will be useful in establishing other properties of this family of models, such as integrability.

We are especially interested in expressions involving commutators of fields, since such commutators appear in the flatness conditions which a would-be Lax connection should satisfy in order for the theory to admit a zero-curvature representation. First let us consider the commutator of $v_{\mp}$ with $j_{\pm}$. Using the auxiliary field equation of motion (\ref{v_eom}), this is
\begin{align}\label{last_term_vanish}
    [ v_{\mp} , j_{\pm} ] &\deq \left[ v_{\mp} , - v_{\pm} - \sum_{n = 2}^{N} n E_n \tr ( v_{\pm}^n ) v_{\mp}^{A_1} \ldots v_{\mp}^{A_{n-1}} T^B \tr ( T_{(B} T_{A_1} \ldots T_{A_{n-1} )} ) \right] \nonumber \\
    &= [ v_\pm , v_\mp ] + \sum_{n=2}^{N} n E_n \tr ( v_{\pm}^n ) v_{\mp}^{A_1} \ldots v_{\mp}^{A_{n-1}} \tr ( T_{(B} T_{A_1} \ldots T_{A_{n-1} )} ) [ T^B , v_{\mp}^C T_C ] \nonumber \\
    &= [ v_\pm , v_\mp ] + \sum_{n=2}^{N} n E_n \tr ( v_{\pm}^n ) v_{\mp}^{A_1} \ldots v_{\mp}^{A_{n-1}} v_{\mp}^C \tr ( T_{(B} T_{A_1} \ldots T_{A_{n-1} )} ) \tensor{f}{^B_C^D} T_D \, .
\end{align}
To analyze the terms in the sum of the final line of (\ref{last_term_vanish}), it is helpful to define
\begin{align}
    M_{B A_1 \ldots A_{n-1}} = \tr ( T_{(B} T_{A_1} \ldots T_{A_{n-1} )} ) \, .
\end{align}
Note that each term in this sum involves a contraction against $v_{\mp}^{A_1} \ldots v_{\mp}^{A_{n-1}} v_{\mp}^C$, which is totally symmetric in its indices. Thus we are free to symmetrize over $A_1$, $\ldots$, $A_{n-1}$, $C$ without penalty. However, upon performing this symmetrization, one finds that
\begin{align}\label{gen_jac_body}
    M_{B ( A_1 \ldots A_{n-1}}  \tensor{f}{^B_{C)}^D} = 0 \, .
\end{align}
The relation (\ref{gen_jac_body}) is a simple corollary of a result which is sometimes known as the ``generalized Jacobi identity'' for Lie algebras, which we discuss and prove in appendix \ref{app:jac}, and which itself is a consequence of the ad-invariance of the trace. It follows that each term in the sum of the final line of (\ref{last_term_vanish}) vanishes identically, and we conclude that
\begin{align}\label{fundamental_commutator_identity}
    [ v_{\mp} , j_{\pm} ] \deq [ v_{\pm} , v_{\mp} ] \, .
\end{align}
The relation (\ref{fundamental_commutator_identity}) allows us to rewrite the physical equations of motion for auxiliary field sigma models in an especially simple way, assuming that the auxiliary field equation of motion is satisfied. For instance, we note that (\ref{fundamental_commutator_identity}) implies
\begin{align}
    [ v_- , j_+ ] + [ v_+ , j_- ] \deq [ v_+ , v_- ] + [ v_- , v_+ ] = 0 \, ,
\end{align}
and thus the terms on the right side of the $g$-field equation of motion (\ref{j_eom}) vanish when the $v_{\pm}$ field equation of motion is satisfied. This means that the Euler-Lagrange equation for $g$ can be written as
\begin{align}\label{j_eom_deq}
    0 \deq \partial_+ ( j_- + 2 v_- ) + \partial_- ( j_+ + 2 v_+ ) \, .
\end{align}
The form of (\ref{j_eom_deq}) suggests that one should define a modified current,
\begin{align}\label{frakJ_defn}
    \CC{J}_{\pm} = - \left( j_{\pm} + 2 v_{\pm} \right) \, ,
\end{align}
so that the $g$-field equation of motion can be represented as a conservation condition,
\begin{align}\label{afsm_eom_dot}
    \partial_\alpha \CC{J}^\alpha \deq 0 \, .
\end{align}
The reason for the choice of an overall minus sign in the definition (\ref{frakJ_defn}) is the following. If we choose the trivial interaction function $E = 0$, the auxiliary field equation of motion (\ref{v_eom}) becomes $v_{\pm} \deq - j_{\pm}$. In this case,  (\ref{frakJ_defn}) becomes
\begin{align}
    \CC{J}_{\pm} = - \left( j_{\pm} - v_{\pm} \right) = j_{\pm} \qquad \text{(when } E = 0 \text{) } \, ,
\end{align}
so this choice of sign ensures that $\CC{J}_{\pm}$ and $j_{\pm}$ coincide in the trivial case. We also note that, when $E = 0$, eliminating the auxiliary field using its equation of motion gives
\begin{align}
    \mathcal{L} \deq - \frac{1}{2} \tr ( j_+ j_- ) \, ,
\end{align}
which agrees with the Lagrangian (\ref{pcm_lagrangian}) for the undeformed PCM.

Let us also remark that, in terms of the modified currents (\ref{frakJ_defn}), the Lagrangian (\ref{deformed_family}) can be expressed as
\begin{align}
    \mathcal{L}_{\text{AFSM}} = \frac{1}{2} \tr \left( \CC{J}_+ \CC{J}_- \right) - \tr ( v_+ v_- ) + E ( \nu_2 , \ldots , \nu_N ) \, .
\end{align}
To conclude this subsection, we record a few more implications of the equations of motion which will be useful in section \ref{sec:integrability}. By virtue of the definition (\ref{frakJ_defn}) and the auxiliary field equation of motion (\ref{v_eom}), one has
\begin{align}
    [ \CC{J}_+ , j_- ] &= - [ j_+ , j_- ] - 2 [ v_+ , j_- ] \nonumber \\
    &\deq - [ j_+ , j_- ] - 2 [ v_- , v_+ ] \, ,
\end{align}
and likewise
\begin{align}
    [ j_+ , \CC{J}_- ] &= - [ j_+ , j_- ] - 2 [ j_+ , v_- ] \nonumber \\
    &\deq - [ j_+ , j_- ] - 2 [ v_- , v_+ ] \, ,
\end{align}
and comparing these two expressions gives
\begin{align}\label{imp_one}
    [ \CC{J}_+ , j_- ] \deq [ j_+ , \CC{J}_- ] \, .
\end{align}
Using similar manipulations, we also find
\begin{align}\label{imp_two}
    [ \CC{J}_+ , \CC{J}_- ] &= [ j_+ , j_- ] + 2 [ v_+, j_- ] + 2 [ j_+ , v_- ] + 4 [ v_+ , v_- ] \nonumber \\
    &\deq [j_+ , j_- ] + 2 [ v_- , v_+ ] + 2 [ v_- , v_+ ] + 4 [ v_+ , v_- ] \nonumber \\
    &= [ j_+ , j_- ] \, ,
\end{align}
and then one gets the useful result $[ \CC{J}_+ , \CC{J}_- ] \deq[ j_+ , j_- ]$. The utility of these formulas is due to the fact that, when we are interested in ``$\deq$ equations'' that hold when the auxiliary field equations of motion are satisfied, we may work directly with the fields $j_{\pm}$ and $\CC{J}_{\pm}$ rather than referring to the auxiliary fields $v_{\pm}$ directly. Furthermore, these objects $j_{\pm}$ and $\CC{J}_{\pm}$ interact in an especially simple way.

\subsection{Connection to Higher-Spin Deformations}\label{sec:higher_spin}

It is worth pausing to comment on the interpretation of the deformation function $E$, and in particular the role of the different variables $\nu_k$. Let us first illustrate by considering an infinitesimal deformation around the undeformed principal chiral model which depends on a single variable $\nu_k$ for some fixed $k \geq 2$. That is, we take the interaction function to be
\begin{align}
    E ( \nu_2 , , \ldots , \nu_N ) = \lambda \nu_k \, ,
\end{align}
and work to leading order in $\lambda$. In this case, by the equation of motion for $v_{\pm}$, one has
\begin{align}
    v_{\pm} &\deq - j_{\pm} - k \lambda E_k \tr ( v_{\pm}^k ) v_{\mp}^{A_1} v_{\mp}^{A_2} \ldots v_{\mp}^{A_{n-1}} T^{A_k} \tr ( T_{(A_1} T_{A_2} \ldots T_{A_{k} )} ) \nonumber \\
    &= - j_{\pm} - k \lambda E_k \mathcal{V}_{\pm} \, ,
\end{align}
where we have defined
\begin{align}
    \mathcal{V}_{\pm} = \tr ( v_{\pm}^k ) v_{\mp}^{A_1} v_{\mp}^{A_2} \ldots v_{\mp}^{A_{n-1}} T^{A_k} \tr ( T_{(A_1} T_{A_2} \ldots T_{A_{k} )} ) 
\end{align}
to ease notation. Eliminating $v_{\pm}$ using its equation of motion in (\ref{deformed_family}) gives
\begin{align}
    \mathcal{L} &\deq \frac{1}{2} \tr ( j_+ j_- ) + \tr \left( \left( j_+ + \lambda \mathcal{V}_+ \right) \left( j_- + \lambda \mathcal{V}_- \right) \right) - \tr \left( j_+ \left( j_- + \lambda \mathcal{V}_- \right) + j_- \left( j_+ + \lambda \mathcal{V}_+ \right) \right) \nonumber \\
    &\qquad + \lambda \tr ( j_+^k ) \tr ( j_-^k ) + \mathcal{O} ( \lambda^2 ) \nonumber \\
    &= - \frac{1}{2} \tr ( j_+ j_- ) + \lambda \tr ( j_+^k ) \tr ( j_-^k ) + \mathcal{O} ( \lambda^2 ) \, .
\end{align}
Therefore, at leading order in the parameter $\lambda$, this interaction function has implemented a deformation of the principal chiral model by the operator
\begin{align}\label{Ok_defn}
    \mathcal{O}_k = \mathcal{J}_{k+} \mathcal{J}_{k-} \, , \qquad \mathcal{J}_{k \pm} = \tr ( j_{\pm} ^k ) \, .
\end{align}
This operator is a scalar bilinear constructed from spin-$k$ conserved currents in the principal chiral model.\footnote{Although our discussion is classical, we note that the combinations (\ref{Ok_defn}) are of the type (\ref{OK_defn_intro}) which give rise to well-defined local operators by the point-splitting argument of \cite{Smirnov:2016lqw}, as mentioned in the introduction. Thus these higher-spin combinations also give integrable deformations of the PCM at the quantum level.} Indeed, to see that the objects $\mathcal{J}_{k \pm}$ are conserved for the PCM, one can combine the equations of motion and Maurer-Cartan identity,
\begin{align}
    \partial_+ j_- + \partial_- j_+ = 0 \, , \qquad \partial_+ j_- - \partial_- j_+ + [ j_+ , j_- ] = 0 \, ,
\end{align}
to find
\begin{align}
    \partial_{\pm} j_{\mp} + \frac{1}{2} [ j_{\pm} , j_{\mp} ] = 0 \, .
\end{align}
Then, for any $k$, one finds that
\begin{align}\label{conservation_higher_spin}
    \partial_{\pm} \mathcal{J}_{k \mp} &= k \tr \left( j_{\mp}^{k-1} \partial_{\pm} j_{\mp} \right) \nonumber \\
    &= - \frac{k}{2} \tr \left( j_{\mp}^{k-1} [ j_{\pm} , j_{\mp} ] \right)] \nonumber \\
    &= - \frac{k}{2} \tr \left( j_{\mp}^{k-1} j_{\pm} j_{\mp} - j_{\mp}^{k-1} j_{\mp} j_{\mp} \right) \nonumber \\
    &= 0 \, ,
\end{align}
where in the last step we have used cyclicity of the trace.

For the case $k = 2$, these conserved currents are nothing but the non-zero components of the energy-momentum tensor for the PCM,
\begin{align}
    \mathcal{J}_{\pm \pm} = T_{\pm \pm} \, ,
\end{align}
and the operator $\mathcal{O}_k$ in equation (\ref{Ok_defn}) is simply $T_{++} T_{--}$, which yields the leading-order $\TT$ deformation. More generally, it was shown in \cite{Ferko:2024ali} that a general interaction function $E ( \nu_2 )$ corresponds to a model obtained by solving a flow equation of the form
\begin{align}\label{stress_tensor_flow}
    \frac{\partial \mathcal{L}}{\partial \lambda} = f \left( T_{\alpha \beta}^{(\lambda)} , \lambda \right) \, , \qquad \mathcal{L} ( \lambda = 0 ) = \mathcal{L}_{\text{PCM}} \, , 
\end{align}
which is driven by some function of the stress tensor $T_{\alpha \beta}^{(\lambda)}$ associated with $\mathcal{L} ( \lambda )$. It is possible to argue for this conclusion, in the case $k = 2$, because the energy-momentum tensor can be computed for a general theory in the class (\ref{deformed_family}) using the Hilbert definition,
\begin{align}
    T_{\alpha \beta} = - \frac{2}{\sqrt{| g | } } \frac{\delta S}{\delta g^{\alpha \beta}} \, ,
\end{align}
after coupling the theory to a background metric $g_{\alpha \beta}$. For specific stress tensor deformations of the form (\ref{stress_tensor_flow}), the solutions for the interaction function $E ( \nu_2 )$ are known in closed form. For instance, the solution to the root-$\TT$ \cite{Ferko:2022cix,Conti:2022egv,Babaei-Aghbolagh:2022uij,Babaei-Aghbolagh:2022leo} flow equation
\begin{align}
    \frac{\partial \mathcal{L}}{\partial \gamma} = \frac{1}{\sqrt{2}} \sqrt{ T^{\alpha \beta} T_{\alpha \beta} - \frac{1}{2} \left( \tensor{T}{^\alpha_\alpha} \right)^2 } \, ,
\end{align}
with initial condition $E ( \gamma = 0 ) = 0$, is given by
\begin{align}
    E_{\text{Root-} \TT} ( \nu_2 ) = \tanh \left( \frac{\gamma}{2} \right) \sqrt{ \nu_2 } \, ,
\end{align}
while the solution to the $\TT$ flow
\begin{align}
    \frac{\partial \mathcal{L}}{\partial \lambda} = T^{\alpha \beta} T_{\alpha \beta} - \left( \tensor{T}{^\alpha_\alpha} \right)^2 \, ,
\end{align}
again with initial condition $E ( \lambda = 0 ) = 0$, is
\begin{align}
    E_{\TT} ( \nu_2 ) = - \frac{3}{8 \lambda} \left( {}_{3} F_2 \left( - \frac{1}{2} , - \frac{1}{4} , \frac{1}{4} ; \frac{1}{3} , \frac{2}{3} ; - \frac{256}{27} \lambda^2 \nu_2 \right) - 1 \right) \, .
\end{align}
For $k > 2$, to the best of our knowledge, it is not known how to systematically compute the higher-spin conserved currents $\mathcal{J}_{k \pm}$ for a general theory of the form (\ref{deformed_family}). Despite this, we find it natural to conjecture -- by analogy with the case $k = 2$, and due to the perturbative analysis given above to leading order in $\lambda$ for general $k$ -- that the collection of deformed models (\ref{deformed_family}) contains all solutions to flow equations of the form 
\begin{align}\label{stress_tensor_flow_higher_spin}
    \frac{\partial \mathcal{L}}{\partial \lambda} = f \left( \mathcal{J}_{\pm 2} , \ldots , \mathcal{J}_{\pm N} , \lambda \right) \, , \qquad \mathcal{L} ( \lambda = 0 ) = \mathcal{L}_{\text{PCM}} \, , 
\end{align}
which are driven by an arbitrary function of the Lorentz scalars that can be constructed from each of the individual currents $\mathcal{J}_{\pm k}$. However, strictly speaking, we have only proven that this class of models includes \emph{some} higher-spin deformations, since it includes deformations by the operators $\mathcal{O}_k$ of (\ref{Ok_defn}) for all $k$, which coincide with these higher-spin deformations at leading order in $\lambda$. Nonetheless, in the remainder of this work we will use terms like ``higher-spin couplings'' or ``higher-spin auxiliary field sigma models'' to refer to interaction functions which may depend on all of the independent traces $\nu_k$. One may also justify this language by noting that the $\nu_k$ are themselves higher-spin combinations of auxiliary fields.

\section{Classical Integrability}\label{sec:integrability}

The conventional definition of a classically integrable field theory is one which exhibits an infinite collection of independent conserved charges, all of which are mutually Poisson-commuting, or in involution. The existence of such a set of charges severely constrains the dynamics of the theory, which often allows one to answer questions about certain physical observables exactly. Because the existence of the conserved charges in an integrable field theory may not be obvious \emph{a priori}, it is also said that such a theory exhibits a large algebra of ``hidden symmetries.'' The first step, therefore, in applying the toolkit of integrability is to establish the existence of this infinite set of charges in a theory, and to construct them.

One might proceed in identifying an infinite set of charges in several ways. One is to simply write down expressions for these charges directly. We have already seen, around equation (\ref{conservation_higher_spin}), that the PCM possesses an infinite set of local higher-spin conserved currents $\mathcal{J}_{k \pm}$, from which one can construct corresponding charges $Q_k$. However, for more general models, it is typically not obvious how to identify such an infinite set of local conserved quantities, so one must use a different strategy.

One particularly powerful approach, which has been used to establish integrability of several deformations of the PCM, involves carrying out the following two steps:

\begin{enumerate}[label = (\Roman*)]

    \item\label{lax_step} First, one recasts the classical equations of motion for the model under consideration as a ``zero-curvature condition'' for a one-form $\CC{L} ( z )$ called the Lax connection. That is, one proves that $\CC{L}$ is flat (for any value of an auxiliary variable $z \in \mathbb{C}$ called the spectral parameter) if and only if the equations of motion are satisfied.\footnote{It is important that this Lax connection depends meromorphically on the spectral parameter $z$, and not in a trivial way. For instance, if $\CC{L}$ is independent of $z$, then one cannot perform appropriate Taylor series expansions and generate conserved charges in the way which we will review in section \ref{sec:charge_review}.}

    \item\label{maillet_step} Second, one extracts an infinite set of conserved charges using a particular path-ordered exponential of the Lax connection (which is called the monodromy matrix), and demonstrates that these charges are mutually Poisson-commuting. The proof of Poisson-commutativity usually requires demonstrating that the Poisson bracket of the spatial component of the Lax connection takes a special form, such as the one considered by Skylanin \cite{Sklyanin:1980ij} or that of Maillet \cite{MAILLET198654,MAILLET1986401}.

\end{enumerate}

Because the second step \ref{maillet_step} is a refinement of the first step \ref{lax_step}, one sometimes says that a model whose equations of motion admit a zero-curvature representation is ``weakly integrable,'' and that one whose Lax connection furthermore has the special properties described in step \ref{maillet_step} is ``strongly integrable.''

The goal of this section is to follow the two-step procedure outlined above for the case of the AFSM with higher-spin couplings. We complete step \ref{lax_step}, the construction of a Lax representation which establishes weak integrability, in section \ref{sec:weak_integrability}. We then show that the Poisson bracket of this Lax connection takes the form due to Maillet, proving strong integrability and the existence of infinitely many involutive charges, in section \ref{sec:maillet}.

\subsection{Zero-Curvature Representation}\label{sec:weak_integrability}

In this subsection, we demonstrate that the classical equations of motion for any model in the family (\ref{deformed_family}) can be presented as a flatness condition for a Lax connection $\CC{L} ( z )$ that depends on a spectral parameter $z \in \mathbb{C}$. The steps of this argument depend only on the implications of the equations of motion that were developed in section \ref{sec:implications}, which are identical to those presented in \cite{Ferko:2024ali} (and to those which hold in special cases such as $\TT$ and root-$\TT$ deformations of the PCM \cite{Borsato:2022tmu}). However, we reiterate that the deformations considered in this article are more general than the ones of these previous works, as our interaction function $E$ depends on many variables $\nu_k$ rather than on a single variable $\nu_2$.

\subsubsection*{\ul{\it Definition of Lax representation for auxiliary field models}}

First let us establish some terminology. Quite generally, suppose that we are interested in a Lagrangian $\mathcal{L} ( X )$ for a set of physical fields $X$, which may carry an arbitrary collection of Lorentz and internal indices that we suppress for ease of notation. By the term ``physical fields'' we mean that the fields $X$ are genuine propagating degrees of freedom within the model, whose equations of motion are differential rather than algebraic. Let
\begin{align}
    \mathcal{E}_X = \partial_\alpha \left( \frac{\partial \mathcal{L}}{\partial ( \partial_\alpha X ) } \right) - \frac{\partial \mathcal{L}}{\partial X}
\end{align}
be the quantities which arise from variation of $\mathcal{L}$ with respect to the fields $X$. For simplicity, here we have assumed that the Lagrangian depends on first derivatives of the fields but not higher derivatives. Then the equations of motion are satisfied when
\begin{align}\label{general_eom}
    \mathcal{E}_X = 0 \, ,
\end{align}
where (\ref{general_eom}) is understood to hold for each independent field $X$, assuming that there are several such fields labeled by indices that have been suppressed.

We say that the theory described by the Lagrangian $\mathcal{L}$ admits a \emph{zero-curvature representation} -- or equivalently, that it possesses a \emph{Lax representation}, or that the model is \emph{weakly integrable} -- if there exists a field-valued one-form $\CC{L} ( z )$, which is a meromorphic function of a spectral parameter $z \in \mathbb{C}$, such that
\begin{align}\label{general_weak_integrability}
    \left( \, \mathcal{E}_X = 0 \, \right) \; \iff \; \left( \, d_\CC{L} \CC{L} = 0 \, , \, \forall z \in \mathbb{C} \, \right) \, .
\end{align}
Here $d_{\CC{L}} = d + \CC{L} \, \wedge \,$ is the covariant exterior derivative associated with the connection $\CC{L}$, and hence $d_{\CC{L}} \CC{L}$ is the curvature of $\CC{L}$, whose expression in light-cone coordinates is
\begin{align}
    d_{\CC{L}} \CC{L} = \partial_+ \CC{L}_- - \partial_- \CC{L}_+ + [ \CC{L}_+ , \CC{L}_- ] \, .
\end{align}
This definition is unambiguous when the Lagrangian involves only physical fields $X$. However, now suppose that we are interested in a Lagrangian $\mathcal{L} ( X, Y )$ which, in addition to the propagating degrees of freedom $X$, also involves a collection of auxiliary fields $Y$ with algebraic equations of motion. In this case, we should clarify \emph{which} equations of motion should be equivalent to the flatness of $\CC{L}$ in an appropriate definition of weak integrability.

The definition we will use is the one proposed in \cite{Ferko:2024ali}. Following the notation introduced above, we use the symbol $\deq$ to indicate two quantities that are equal when all of the equations of motion for the auxiliary fields $Y$ are satisfied. Then we say that the Lagrangian $\mathcal{L} ( X, Y )$ admits a zero-curvature (Lax) representation, or is weakly integrable, if
\begin{align}\label{aux_weak_integrability}
    \left( \, \mathcal{E}_X \deq 0 \, \right) \; \iff \; \left( \, d_{\CC{L}} \CC{L} \deq 0 \, , \, \forall z \in \mathbb{C} \, \right) \, ,
\end{align}
where $\CC{L} ( X, Y )$ is the Lax connection.

Our reason for using this definition of a Lax representation is that, if one integrates out the auxiliary fields in all quantities using their equations of motion, the definition (\ref{aux_weak_integrability}) agrees with (\ref{general_weak_integrability}). More precisely, suppose that an auxiliary field Lagrangian $\mathcal{L} ( X , Y )$ satisfies the condition (\ref{aux_weak_integrability}) for some Lax connection $\CC{L} ( X, Y )$. Formally, one can define the two corresponding quantities $\mathcal{L} ( X )$ and $\CC{L} ( X )$ which depend only on physical fields as
\begin{align}
    \mathcal{L} ( X ) \deq \mathcal{L} ( X , Y ) \, , \qquad \CC{L} ( X ) \deq \CC{L} ( X, Y ) \, .
\end{align}
After all auxiliary fields have been eliminated, any two quantities which were equal with respect to the ``dotted'' equality symbol $\deq$ now become equal with respect to the standard equality. Therefore, since $\mathcal{L} ( X, Y)$ and $\CC{L} ( X, Y )$ obey (\ref{aux_weak_integrability}), their auxiliary-free cousins $\mathcal{L} ( X ) $ and $\CC{L} ( X )$ obey (\ref{general_weak_integrability}). For this reason, the definition (\ref{aux_weak_integrability}) of a zero-curvature representation for an auxiliary field model is quite natural, since it is simply equivalent to the standard definition when one ``forgets the auxiliaries'' and instead views an auxiliary field model as describing only the dynamics of the physical degrees of freedom.

\subsubsection*{\ul{\it Proof of weak integrability for auxiliary field sigma models}}

Having established our definition (\ref{aux_weak_integrability}) of a Lax representation for an auxiliary field model, we now show that every member of the family of Lagrangains (\ref{deformed_family}) satisfies this condition, regardless of the interaction function $E$.

We begin by postulating the form of the Lax connection,
\begin{align}\label{lax_connection}
    \CC{L}_{\pm} = \frac{j_{\pm} \pm z \CC{J}_{\pm}}{1 - z^2} \, ,
\end{align}
where $z \in \mathbb{C}$ is the spectral parameter and $\CC{J}_{\pm}$ is defined in equation (\ref{frakJ_defn}).

We are interested in the curvature of this Lax connection (\ref{lax_connection}) when the auxiliary field equations of motion are satisfied, since this is the quantity that enters into the condition (\ref{aux_weak_integrability}). It will be helpful to first compute the commutator of the Lax connection with itself, again assuming that the $v_{\pm}$ equation of motion is obeyed. One finds
\begin{align}\label{lax_commutator}
    [ \CC{L}_+ , \CC{L}_- ] &= \frac{1}{\left( 1 - z^2 \right)^2 } [ j_+ + z \CC{J}_+ , j_- - z \CC{J}_- ] \nonumber \\
    &= \frac{1}{\left( 1 - z^2 \right)^2 } \left( [ j_+, j_- ] + z \left( [ \CC{J}_+ , j_- ] + [ j_+ , \CC{J}_- ] \right) - z^2 [ \CC{J}_+ , \CC{J}_- ] \right) \nonumber \\
    &\deq \frac{1}{\left( 1 - z^2 \right)^2 } \left( [ j_+, j_- ] - z^2 [ j_+ ,  j_- ] \right) \nonumber \\
    &= \frac{[j_+, j_-]}{1 - z^2} \, .
\end{align}
In the third line, we have used the two implications (\ref{imp_one}) and (\ref{imp_two}) of the auxiliary field equations of motion, which give $[ \CC{J}_+ , j_- ] + [ j_+ , \CC{J}_- ] \deq 0$ and $[ \CC{J}_+ , \CC{J}_- ]  \deq [ j_+ , j_- ]$, respectively.

With the commutator (\ref{lax_commutator}) in hand, it is now straightforward to compute the curvature two-form associated with the would-be Lax connection (\ref{lax_connection}):
\begin{align}
    d_{\CC{L}} \CC{L} &= \partial_+ \CC{L}_- - \partial_- \CC{L}_+ + [ \CC{L}_+ , \CC{L}_- ] \nonumber \\
    &\deq \frac{1}{1 - z^2} \left( \partial_+ \left( j_- - z \CC{J}_- \right) - \partial_- \left( j_+ + z \CC{J}_+ \right) + [ j_+ , j_- ] \right) \nonumber \\
    &= \frac{1}{1 - z^2} \left( \partial_+ j_- - \partial_- j_+ + [ j_+ , j_- ] - z \left( \partial_+ \CC{J}_- + \partial_- \CC{J}_+ \right) \right) \, .
\end{align}
Therefore, the condition that $d_{\CC{L}} \CC{L} \deq 0 $ for any $z \in \mathbb{C}$ is equivalent to the two conditions
\begin{align}\label{MC_and_eom}
    \partial_+ j_- - \partial_- j_+ + [ j_+ , j_- ] \deq 0 \, , \qquad \partial_+ \CC{J}_- + \partial_- \CC{J}_+ \deq 0 \, .
\end{align}
The first of the equations in (\ref{MC_and_eom}) is the Maurer-Cartan identity, which holds regardless of whether the auxiliary field equation of motion is satisfied. The second of these equations is the physical field equation of motion (\ref{afsm_eom_dot}) for the model.

We conclude that every model in the family (\ref{deformed_family}) obeys the definition (\ref{aux_weak_integrability}). This means that every higher-spin auxiliary field sigma model admits a Lax representation for its equation of motion, or equivalently is weakly integrable.

Again, we stress that this technical details of this argument are almost identical to those of \cite{Ferko:2024ali}, and rely only on the two implications (\ref{imp_one}) and (\ref{imp_two}) of the auxiliary field equations of motion. The novelty in the present analysis is that the same set of implications hold for arbitrary interaction functions $E ( \nu_2 , \ldots , \nu_N )$ of several variables, and thus that the larger family of auxiliary field sigma models which include these higher-spin couplings also admit zero-curvature representations for their equations of motion.

\subsection{Construction of Conserved Charges}\label{sec:charge_review}

We have now seen that the class of higher-spin auxiliary field sigma models is weakly integrable, in the sense that its equations of motion can be represented as the flatness of a Lax connection for every value of a spectral parameter. Although it is useful, the existence of such a zero-curvature representation is not, by itself, sufficient to establish classical integrability of a model. To show that a field theory is integrable, one must demonstrate the existence of an infinite set of independent and Poisson-commuting conserved charges in the model. The Lax representation developed above allows us to construct such an infinite set of conserved charges, but we are not guaranteed that an infinite subset of these charges will be in involution without making further assumptions about the Lax connection.

It is instructive to briefly review the construction of conserved charges from the Lax connection for a $2d$ field theory; we follow the pedagogical lecture notes \cite{Driezen:2021cpd} (see also \cite{Torrielli:2016ufi} or the book \cite{Babelon:2003qtg} for further details), to which we refer the reader for more details. Given a Lax connection $\CC{L}$, one first defines the transport matrix as the path-ordered exponential
\begin{align}\label{transport_matrix}
    T ( y, x ; z ) = \overleftarrow{\mathrm{Pexp}} \left( - \int_x^y d \sigma \, \mathfrak{L}_\sigma ( z ) \right) \, .
\end{align}
One can then take suitable limits of the endpoints of integration $x$ and $y$ to define the monodromy matrix. The appropriate limits depend on the topology of the spatial manifold on which the field theory is defined. As we mentioned above, we are interested in the cases where $\Sigma = \mathbb{R}^{1,1}$ for theories on the plane, or $\Sigma = \mathbb{R} \times S^1$ for models defined on a cylinder. In the case of the plane, the monodromy matrix is given by
\begin{align}
    T ( \infty, - \infty ; z ) = \overleftarrow{\mathrm{Pexp}} \left( - \int_{\infty}^{\infty} d \sigma \, \mathfrak{L}_\sigma ( z ) \right) \, ,
\end{align}
and given suitable fall-off conditions on the fields, any power of this monodromy matrix is independent of the time coordinate $\tau$:
\begin{align}
    \partial_\tau \left( T ( \infty, - \infty ; z )^n \right) = 0 \, .
\end{align}
For theories on the cylinder with identification $\sigma \sim \sigma + 2 \pi$, we instead take $x = 0$ and $y = 2 \pi$ to define the monodromy matrix as
\begin{align}
    T ( 2 \pi , 0 ; z ) = \overleftarrow{\mathrm{Pexp}} \left( - \int_{0}^{2 \pi} d \sigma \, \mathfrak{L}_\sigma ( z ) \right) \, ,
\end{align}
and in this case the trace of any power of the monodromy matrix is conserved,
\begin{align}
    \partial_\tau \left( \Tr \left(T ( 2 \pi , 0 ; z )^n \right) \right) = 0 \, .
\end{align}
These conservation conditions follow from the flatness of the Lax connection, which is equivalent to the equations of motion, and therefore they hold only on-shell.

To handle both cases of the plane and cylinder simultaneously, it is convenient to define
\begin{align}
    T ( z ) = \begin{cases} T ( \infty, - \infty ; z ) \, , &\quad \Sigma = \mathbb{R}^{1,1} \\ \Tr \left( T ( 2 \pi , 0 ; z ) \right) \, ,  &\quad \Sigma = S^1 \times \mathbb{R}  \end{cases} \, .
\end{align}
Given any point $z_0$ about which $T ( z )$ is analytic, one can then perform a Taylor expansion
\begin{align}\label{Q_from_T}
    T ( z ) = \sum_{n = 0}^{\infty} Q_n \left( z - z_0 \right)^n \, .
\end{align}
It then follows that each of the expansion coefficients $Q_n$ is independent of time, which furnishes us with an infinite set of conserved charges. However, as we mentioned above, it is not automatic that any infinite subset of these charges are mutually Poisson-commuting.

However, there are several additional assumptions which are known to be sufficient to guarantee the existence of such an infinite set of charges in involution. To discuss these, we must first introduce some notation. Given an element $X \in \mathfrak{g}$, define the combinations
\begin{align}\label{doubled_X}
    X_1 = X \otimes 1 \, , \qquad X_2 = 1 \otimes X \, ,
\end{align}
where $1$ is the identity matrix. Because our charges $Q_n$ are constructed from the monodromy matrix, the Poisson bracket of two charges is related to the bracket of two copies of the monodromy matrix, which is itself determined by the Poisson bracket of the spatial component of the Lax connection (since it is this component which appears in the path-ordered exponential (\ref{transport_matrix})). Therefore, additional assumptions about the Poisson bracket of the spatial component of the Lax connection will imply corresponding statements about Poisson brackets of conserved charges.

One sufficient condition for involution of the conserved charges, due to Sklyanin \cite{Sklyanin:1980ij}, is the assumption that the bracket of the Lax connection takes the form
\begin{align}\label{ultralocal}
    \left\{ \CC{L}_{\sigma, 1} ( \sigma, z ) \, , \, \CC{L}_{\sigma, 2} ( \sigma', z' ) \right\} = [ \CC{L}_{\sigma, 1} ( \sigma, z ) + \CC{L}_{\sigma, 2} ( \sigma, z' ) , r_{12} (z , z' ) ] \delta ( \sigma - \sigma' ) \, ,
\end{align}
where $r_{12}$ is called the $r$-matrix. Suppose we assume that the Poisson bracket of the Lax connection takes the form (\ref{ultralocal}), where the $r$-matrix is antisymmetric, $r_{12} ( z, z' ) = - r_{21} ( z', z )$, independent of the fields, and satisfies the classical Yang-Baxter equation,
\begin{align}\label{cybe}
    0 = [ r_{12} ( z_1, z_2 ) \, , \, r_{13} ( z_1, z_3 ) ] + [ r_{12} ( z_1, z_2 ) \, , \, r_{23} ( z_2, z_3 ) ] + [ r_{32} ( z_3, z_2 ) \, , \, r_{13} ( z_1, z_3 ) ]  \, .
\end{align}
(This collection of assumptions is merely sufficient, rather than necessary, and can be relaxed as we shall see in a moment.) Then it follows from these conditions that the charges $Q_n$ obtained from the monodromy matrix as in equation (\ref{Q_from_T}) are all in involution,
\begin{align}
    \left\{ Q_n , Q_m \right\} = 0 \, .
\end{align}
Therefore, assuming the existence of a Lax connection whose bracket takes the so-called ``ultralocal'' form (\ref{ultralocal}), one can conclude that the associated model possesses an infinite set of conserved charges in involution. This property is sometimes called \emph{strong integrability}.

However, even for the undeformed principal chiral model, the above assumptions are too strong because the Poisson bracket of the Lax connection does not take the required ultralocal form (\ref{ultralocal}). Rather, one finds a structure which includes an additional ``non-ultralocal'' term involving a derivative of a delta function, and where the $r$-matrix associated to the ultralocal term is not antisymmetric. 

In situations where the Poisson bracket of the Lax exhibits such a modified structure, a generalization of the above argument due to Maillet \cite{MAILLET198654,MAILLET1986401} can be applied to conclude that there still exists an infinite set of Poisson-commuting charges. More precisely, suppose that the Poisson bracket of the spatial component of the Lax connection can be written as
\begin{align}\label{maillet_bracket}
    \left\{ \CC{L}_{\sigma, 1} ( \sigma, z ) \, , \, \CC{L}_{\sigma, 2} ( \sigma', z' ) \right\} &= [ r_{12} ( z, z' ) \, , \, \CC{L}_{\sigma, 1} ( \sigma, z ) ] \delta ( \sigma - \sigma' ) - [ r_{21} ( z', z ) \, , \, \CC{L}_{\sigma, 2} ( \sigma, z' ) ] \delta ( \sigma - \sigma' ) \nonumber \\
    &\qquad - s_{12}( z, z' ) \partial_\sigma \delta ( \sigma - \sigma' ) \, .
\end{align}
Now the $r$-matrix $r_{12} ( z, z' )$ is still taken to satisfy (\ref{cybe}) but is no longer assumed to be antisymmetric, and indeed the symmetric part of this $r$-matrix is called
\begin{align}
    s_{12} ( z , z' ) = r_{12} ( z, z' ) + r_{21} ( z' , z ) \, .
\end{align}
We refer to the expression (\ref{maillet_bracket}) as the Maillet form of the Poisson brackets, or as the $r/s$ form due to the presence of the two matrices $r$ and $s$. When the $r$-matrix is purely antisymmetric, we see that $s_{12} = 0$, so the non-ultralocal term vanishes and the expression (\ref{maillet_bracket}) reduces to the previous ultralocal form of the bracket (\ref{ultralocal}) considered by Sklyanin.

A new subtlety emerges when one considers theories for which the Poisson bracket of the Lax takes the non-ultralocal form (\ref{maillet_bracket}). In this case, there is a discontinuity in the bracket of the monodromy matrices $\{ T_1 ( x', y' ; z ) , T_2 ( x, y ; z' ) \}$ when $x' \to x$, $y' \to y$. As a result, if one takes a naive coincident-point limit, it appears that the Jacobi identity is violated (even when the $r$-matrix satisfies the classical Yang-Baxter equation).

The resolution proposed by Maillet is to perform a regularization procedure which involves symmetrizing nested Poisson brackets and coincident-point limits, as described in \cite{MAILLET1986401} and nicely reviewed in \cite{Dorey:2006mx,Vicedo:2008ryn}. Using this prescription, the regularized Poisson brackets in fact \emph{do} satisfy the Jacobi identity. Using these regularized brackets to define the coincident-point limits of the Poisson brackets of the monodromy matrices, one then finds
\begin{align}
    \Tr_{12} \left\{ \Tr_1 \left( T_1 ( y, x ; z )^n \right) \, , \, \Tr_2 \left( T_2 ( y, x ; z' )^m \right) \right\} = 0 \, ,
\end{align}
which likewise allows us to conclude that
\begin{align}
    \left\{ Q_n, Q_m \right\} = 0 \, ,
\end{align}
for the charges $Q_n$ constructed from the monodromy matrix in (\ref{Q_from_T}).

For this reason, constructing $r$ and $s$ matrices such that the Poisson bracket of the Lax connection for a given theory takes the form (\ref{maillet_bracket}) also constitutes a proof of ``strong integrability'' of the model, in the sense that it implies the existence of an infinite set of conserved charges in involution. We also note that the argument of Maillet applies equally well to theories with constraints, for which the Poisson brackets should be replaced with the appropriate Dirac brackets, which will be relevant for auxiliary field sigma models.

\subsection{Maillet $r/s$ Matrices}\label{sec:maillet}

Having reviewed the procedure for constructing an infinite set of Poisson-commuting conserved charges, assuming that the Lax connection for the theory obeys (\ref{maillet_bracket}), let us now carry out the construction of the Maillet $r/s$ matrices for the family of auxiliary field sigma models (\ref{deformed_family}). We first develop some aspects of the canonical structure of these theories.

\subsubsection*{\ul{\it Canonical structure of AFSM}}

Recall that our conventions for light-cone coordinates are such that, for any vector $V_\alpha$,
\begin{align}
    V_{\pm} = V_\tau \pm V_\sigma \, , \qquad V^{\pm} = \frac{1}{2} \left( V^\tau \pm V^\sigma \right) \, ,
\end{align}
Therefore in coordinates $(\tau, \sigma)$, the Lagrangian (\ref{deformed_family}) is
\begin{align}\label{afsm_lag_for_canonical}
    \mathcal{L}_{\text{AFSM}} = \frac{1}{2} \tr \left( j_\tau^2 - j_\sigma^2 \right) + \tr \left( v_\tau^2 - v_\sigma^2 \right) + 2 \tr \left( j_\tau v_\tau - j_\sigma v_\sigma \right)+ E \left( \nu_2 , \ldots , \nu_N \right) \, .
\end{align}
As in section \ref{sec:models_defn}, we choose local coordinates $\phi^\mu$ on the Lie group $G$, and we convert between middle Greek (e.g. $\mu$, $\nu$) target-space indices and early Greek (e.g. $\alpha$, $\beta$) worldsheet indices using the pull-back map (\ref{pullback}). The Maurer-Cartan form $j_\alpha$ can be written in terms of target-space coordinates as
\begin{align}\label{j_to_target}
    j_\alpha = \partial_\alpha \phi^\mu j^A_\mu T_A \, ,
\end{align}
whereas we expand the auxiliary fields as
\begin{align}\label{v_expansion}
    v_\alpha = v_\alpha^A T_A \, .
\end{align}
Note that the auxiliary fields are naturally defined as Lie algebra valued forms on the worldsheet $\Sigma$, while the Maurer-Cartan form is intrinsically defined in target space and thus must be pulled back to the worldsheet.

Using the expansions (\ref{j_to_target}) and (\ref{v_expansion}), the Lagrangian (\ref{afsm_lag_for_canonical}) becomes
\begin{align}
    \mathcal{L}_{\text{AFSM}} &= \frac{1}{2} \tr \left( \partial_\tau \phi^\mu j_\mu^A T_A \partial_\tau \phi^\nu j_\nu^B T_B - \partial_\sigma \phi^\mu j_\mu^A T_A \partial_\sigma \phi^\nu j_\nu^B T_B \right) + \tr \left( v_\tau^A v_\tau^B T_A T_B - v_\sigma^A v_\sigma^B T_A T_B \right) \nonumber \\
    &\qquad + 2 \tr \left( \partial_\tau \phi^\mu j_\mu^A T_A v_\tau^B T_B - \partial_\sigma \phi^\mu j_\mu^A T_A v_\sigma^B T_B \right) + E ( \nu_2 , \ldots, \nu_N ) \, ,
\end{align}
or in terms of the Killing-Cartan form $\gamma_{AB} = \tr ( T_A T_B )$,
\begin{align}
    \mathcal{L}_{\text{AFSM}} &= \gamma_{AB} \left( \frac{1}{2} \left( \partial_\tau \phi^\mu \partial_\tau \phi^\nu - \partial_\sigma \phi^\mu \partial_\sigma \phi^\nu \right) j_\mu^A  j_\nu^B + v_\tau^A v_\tau^B - v_\sigma^A v_\sigma^B + 2 \partial_\tau \phi^\mu j_\mu^A v_\tau^B - \partial_\sigma \phi^\mu j_\mu^A v_\sigma^B \right) \nonumber \\
    &\qquad + E \left( \nu_2 , \ldots , \nu_N \right) \, .
\end{align}
The momentum conjugate to the target-space coordinate $\phi^\mu$ is
\begin{align}
    \pi_\mu = \frac{\partial \mathcal{L}}{\partial ( \partial_\tau \phi^\mu ) } = \partial_\tau \phi^\nu j_\mu^A j_\nu^B \gamma_{AB} + 2 j_\mu^A v_\tau^B \gamma_{AB} \, ,
\end{align}
or in terms of the pull-back $j_\tau^B = ( \partial_\tau \phi^\nu ) j_\nu^B$,
\begin{align}\label{pi_from_J}
    \pi_\mu = j_\mu^A \left( j_\tau^B + 2 v_\tau^B \right) \gamma_{AB} \, .
\end{align}
We now recognize the same combination of fields $\fJ_\alpha = - ( j_\alpha + 2 v_\alpha )$ which appeared in equation (\ref{frakJ_defn}). In terms of $\fJ_\tau$, the conjugate momentum can be written
\begin{align}
    \pi_\mu = - j_\mu^A \fJ_\tau^B  \gamma_{AB} \, .
\end{align}
It is also convenient to define the field $j^\mu_A$, which satisfies
\begin{align}
    j^\mu_A j^B_\mu = \tensor{\delta}{_A^B} \, .
\end{align}
Since the Maurer-Cartan form plays the role of the vielbein $e_\mu^A$ on a Lie group, where $\mu$ is a ``curved'' index and $A$ is a ``flat'' (tangent space) index, this new field $j^\mu_A$ can be interpreted as the inverse vielbein. In terms of this inverse field, we can solve for $\fJ_\tau^B$ as
\begin{align}\label{J_to_pi}
    \fJ_\tau^B = - j^\mu_A \pi_\mu \gamma^{AB} \, .
\end{align}
Since $\pi_\mu$ are the momenta which are canonically conjugate to the fundamental fields $\phi^\mu$ of the model, the fundamental Poisson brackets for this theory are simply
\begin{align}\label{fundamental_poisson}
    \left\{ \pi_\mu ( \sigma ) , \phi^\nu ( \sigma' ) \right\} = \tensor{\delta}{^\nu_\mu} \delta ( \sigma - \sigma' ) \, ,
\end{align}
which holds at equal times $\tau$. The Poisson bracket of any other quantities $A$, $B$ is
\begin{align}\label{bracket_defn}
    \left\{ A ( \sigma ) , B ( \sigma' ) \right\} &= \int d \sigma'' \left( \frac{\delta A ( \sigma )}{\delta \phi^\rho ( \sigma'' )} \frac{\delta B ( \sigma' )}{\delta \pi_\rho ( \sigma'' ) } - \frac{ \delta B ( \sigma' )}{\delta \phi^\rho ( \sigma'' )} \frac{\delta A ( \sigma )}{\delta \pi_\rho ( \sigma'' ) } \right) \, .
\end{align}
The Hamiltonian of the AFSM is given by the Legendre transform
\begin{align}\label{H_AFSM_one}
    \mathcal{H}_{\text{AFSM}} &= \pi_\mu \dot{\phi}^\mu - \mathcal{L}_{\text{AFSM}} \nonumber \\
    &= j_\tau^A \left( j_{\tau, A} + 2 v_{\tau, A} \right) - \Bigg( \frac{1}{2} j_\tau^A j_{\tau, A} - \frac{1}{2} j_{\sigma}^A j_{\sigma, A} + v_\tau^A v_{\tau, A} - v_{\sigma}^A v_{\sigma, A} + 2 j_\tau^A v_{\tau, A} \nonumber \\
    &\qquad \qquad \qquad \qquad \qquad \qquad \qquad - 2 v_\sigma^A v_{\sigma, A} + E ( \nu_2 , \ldots , \nu_N ) \Bigg) \nonumber \\
    &= \frac{1}{2} j_\tau^A j_{\tau, A} + \frac{1}{2} j_\sigma^A j_{\sigma, A} - v_\tau^A v_{\tau, A} + v_\sigma^A v_{\sigma, A} + 2 j_\sigma^A v_{\sigma, A} - E ( \nu_2 , \ldots , \nu_N ) \, .
\end{align}
We must now express everything in terms of the canonical momentum. Note that, by virtue of the definition (\ref{frakJ_defn}) of $\CC{J}_\alpha$, one has
\begin{align}\label{jj_to_JJ}
    \frac{1}{2} j_\tau^A j_{\tau, A} = \frac{1}{2} \fJ_\tau^A \fJ_{\tau,A} - 2 j_\tau^A v_{\tau, A} - 2 v_\tau^A v_{\tau, A} \, ,
\end{align}
and substituting (\ref{jj_to_JJ}) into (\ref{H_AFSM_one}) gives
\begin{align}
    \mathcal{H}_{\text{AFSM}} = \frac{1}{2} \fJ_\tau^A \fJ_{\tau, A} + \frac{1}{2} j_\sigma^A j_{\sigma, A}  + v_\tau^A v_{\tau, A} + v_\sigma^A v_{\sigma, A} + 2 \left(  j_\sigma^A v_{\sigma, A} + \fJ_\tau^A v_{\tau, A} \right) - E ( \nu_2 , \ldots , \nu_N )  \, .
\end{align}
We can then express this in terms of momenta using the inverse relation (\ref{J_to_pi}) to find
\begin{align}\label{final_afsm_ham}
    \mathcal{H}_{\text{AFSM}} &= \frac{1}{2} j^\mu_A \pi_\mu j^{\nu, A} \pi_\nu + \frac{1}{2} j_\sigma^A j_{\sigma, A}  + v_\tau^A v_{\tau, A} + v_\sigma^A v_{\sigma, A} + 2 \left( j_\sigma^A v_{\sigma, A} - j_\mu^A \pi^\mu v_{\tau, A} \right) - E ( \nu_2 , \ldots , \nu_N ) \, .
\end{align}
This completes the determination of the AFSM Hamiltonian in terms of the fields $\phi^\mu$ and conjugate momenta $\pi_\mu$. Note that $j^\mu_A = j^\mu_A ( \phi )$ depend on the coordinates $\phi^\mu$.

\subsubsection*{\ul{\it Constraints and Dirac brackets}}

Let us make a brief technical aside about the Hamiltonian structure of our auxiliary field sigma models. In addition to the canonical momentum $\pi_\mu$ associated with the target-space coordinates $\phi^\mu$, there is also a conjugate momentum
\begin{align}\label{auxiliary_momentum}
    \mathfrak{p}_\alpha = \frac{\partial \mathcal{L}}{\partial ( \partial_\tau v_\alpha ) } = 0 \, ,
\end{align}
which vanishes because the Lagrangian does not depend on derivatives of auxiliary fields.

Ordinarily, one would pass from the configuration space of a field theory to the phase space by inverting to express the time derivatives of fields in terms of canonical momenta. Here this process fails, since the naive set of phase space coordinates $( \phi^\mu , \pi_\mu , v^\alpha , \mathfrak{p}_\alpha )$ cannot be expressed in terms of the fields and their time derivatives; instead the phase space coordinates are linearly dependent, since $\mathfrak{p}_\alpha = 0$.

This is a ubiquitous feature of classical mechanical systems subject to constraints, whose solution is well-known: such a model should be treated using the formalism for constrained quantization due to Dirac \cite{Dirac:1964:LQM}. See also \cite{hanson1976constrained,henneaux1992quantization,gitman2012quantization,prokhorov2011hamiltonian,wipf_constraints} for pedagogical discussions of the Dirac method in various settings, including applications to gauge theories.

Let us briefly review the steps of the Dirac procedure. Consider a Hamiltonian system which is subject to a collection of constraint functions $F^I$, $I = 1 , \ldots, M$, which depend on the phase space coordinates, and which must vanish for points on the phase space manifold which satisfy the constraints. It is convenient to introduce the symbol $\approx$ to indicate that two quantities are equal when the constraints are satisfied. For instance, in our case there are two constraints $F_\tau = \mathfrak{p}_\tau$ and $F_\sigma = \mathfrak{p}_\sigma$ which are constrained to vanish due to (\ref{auxiliary_momentum}), so
\begin{align}\label{constraints_vanishing}
    F_\alpha \approx 0 \, .
\end{align}
The subspace of the phase space manifold on which all of the constraint functions $F^I$ vanish is called the constraint surface. The functions $F^I$ are referred to as primary constraints, and a quantity which equals zero on the constraint surface is said to ``vanish weakly.''

It is worth pointing out that, in order to obtain consistent results within the Dirac formalism, no ``weak equations'' can be used in intermediate steps of calculations, such as inside of Poisson brackets. That is, one may use only ``strong equations'' (those that hold identically, without using the constraints) before evaluating derivatives, computing Poisson brackets, etc.. Only at the end of a calculation can one use the constraint equations $F_\alpha \approx 0$, which projects the resulting final expressions onto the constraint surface.

Note that the relation $\approx$ is related to, but distinct from, the dotted equality symbol $\deq$. The latter is used in the Lagrangian formulation, and indicates that two quantities agree when the $v_\alpha$ equation of motion is satisfied, whereas $\approx$ is used in the Hamiltonian formalism and indicates that the conjugate momentum $\mathfrak{p}_\alpha$ is vanishing. The relationship between the two is that the Hamiltonian equations of motion defined using the so-called primary Hamiltonian, which is defined by
\begin{align}\label{primary_hamiltonian}
    H_p = H + \int d^d x \, \lambda_I ( x ) F^I ( x ) \, ,
\end{align}
are the same as the Lagrangian equations of motion for the original system. Here the Hamiltonian, $H = \int d^d x \, \mathcal{H}$, is the integral of the ordinary Hamiltonian density, and $d$ is the number of spatial dimensions (so that $d+1$ is the total spacetime dimension). The objects $\lambda_I$ play the role of additional Lagrange multiplier fields, which do not have any associated canonical momenta.

The $F^I$ are referred to as ``primary constraints'' since they are the original constraint conditions that are direct consequences of the initial Hamiltonian $H$. In addition, there can be additional ``secondary constraints'' which are generated by time evolution of the primary constraints. More precisely, the time evolution of any constraint function $F^I$ is given by
\begin{align}\label{FI_dot}
    \dot{F}^I \approx \left\{ F^I , H_p \right\} \, .
\end{align}
If the quantity (\ref{FI_dot}) does not vanish weakly, then time evolution maps points on the constraint surface onto points outside the constraint surface, which is inconsistent. For each function $F^I$ such that (\ref{FI_dot}) is non-zero, we obtain an additional constraint function
\begin{align}
    \widehat{F}^I = \left\{ F^I , H_p \right\} \, .
\end{align}
One must then iterate this procedure, appending the secondary constraint to the original list of primary constraints and checking whether the new list of constraints is closed under time evolution; if not, one incorporates further tertiary constraints, and so on. Eventually this process terminates and one arrives at a complete list of constraints.

For the AFSM Hamiltonian, there will be secondary constraints arising from time evolution of the constraint (\ref{auxiliary_momentum}). As an example,
\begin{align}
    \left\{ H_{\text{AFSM}} \, , \, \mathfrak{p}_{\tau, A} \right\} = v_{\tau, A} - 2 j^\mu_A \pi_\mu - \frac{\partial E}{\partial v_\tau^A} \, ,
\end{align}
with a similar expression for $\mathfrak{p}_\sigma^A$.

Now, suppose that all of the constraints have been enumerated and collected into a list $F^I$, where we no longer distinguish between primary constraints, secondary constraints, and so on. For each $I$, we say that $F^I$ is a first-class constraint if
\begin{align}
    \left\{ F^I , F^J \right\} = 0 \text{ for all } J \, .
\end{align}
Otherwise, if $F^I$ has a non-vanishing Poisson bracket with at least one other constraint function $F^J$, we will decorate this index $I$ with a tilde, writing $F^{\widetilde{I}}$, and refer to it as a second-class constraint. For systems with second-class constraints, such as the AFSM, one requires a generalization of the Poisson bracket with the property that the bracket of a second-class constraint with any other quantity must be zero. This generalization is provided by the Dirac bracket, which is defined by
\begin{align}\label{dirac_brackets}
    \left\{ f , g \right\}_{\text{D}} = \left\{ f, g \right\} - \sum_{\widetilde{I}, \widetilde{J}} \big\{ f, F^{\widetilde{I}} \big\} \; \left( D^{-1} \right)_{\widetilde{I} \widetilde{J}} \; \big\{ F^{\widetilde{J}} , g \big\} \, ,
\end{align}
where sum runs only over those values of the indices $\widetilde{I}$, $\widetilde{J}$ for which the corresponding constraints $F^{\widetilde{I}}$, $F^{\widetilde{J}}$ are second-class, and where the matrix $D^{\widetilde{I} \widetilde{J}}$ is defined by
\begin{align}\label{M_matrix_dirac}
    D^{\widetilde{I} \widetilde{J}} = \big\{ F^{\widetilde{I}} , \widetilde{F}^J \big\} \, .
\end{align}
Any brackets that do not carry a subscript $D$, which indicates the Dirac bracket, are assumed to be ordinary Poisson brackets. This construction is possible because the matrix $D^{\widetilde{I} \widetilde{J}}$ defined in (\ref{M_matrix_dirac}) is always invertible, and the Dirac bracket $\left\{ \, \cdot \, , \, \cdot \, \right\}_D$ constructed in this way enjoys all of the same algebraic properties as the ordinary Poisson bracket, including antisymmetry, bilinearity, the Jacobi identity, and the Leibniz rule. 

In summary, because the auxiliary field sigma model is a constrained Hamiltonian system with second-class constraints, the Dirac bracket (\ref{dirac_brackets}) is the appropriate machinery to use in this setting. However, by construction, the Dirac brackets coincide with the ordinary Poisson brackets on the constraint surface, when all $F^I$ are equal to zero. The distinction between the two structures is that, when one uses Dirac brackets, the constraint equations can be used as ``strong equations'' in intermediate steps of calculations, whereas for the Poisson brackets, these constraints must be viewed as ``weak equations'' that \emph{cannot} be used in intermediate steps, as they would lead to inconsistent results.

The upshot of this discussion is that, if one is computing brackets of quantities for which the constraint equations play no role, then one is free to use either the Dirac or Poisson brackets, and the results will coincide. As it turns out, the brackets which appear in the construction of the Maillet $r/s$ matrices do not depend on the constraints at all: they only involve the canonical fields $\phi^\mu$ and their canonical momenta $\pi_\mu$, and the auxiliary fields (or their conjugate momenta) enter only through the implicit dependence of $\pi_\mu$ on $v_\tau$ via equation (\ref{pi_from_J}), but not explicitly. Said differently, the auxiliary field equations of motion are not used anywhere in the computation of the brackets of the Lax connection.

For this reason, in the following argument, we will not distinguish between the Dirac and Poisson brackets, since we are guaranteed that they will produce identical results for the quantities of interest. We will sometimes use the generic term ``canonical bracket'' to refer to the Poisson or Dirac bracket, in settings where the two coincide.

\subsubsection*{\ul{\it Maillet form of canonical brackets}}

We now turn to the computation of the (Poisson or Dirac) bracket of the spatial component of the Lax connection with itself. This spatial component is given by
\begin{align}\label{spatial_lax}
    \fL_\sigma &= \frac{1}{2} \left( \fL_+ - \fL_- \right) \nonumber \\
    &= \frac{j_\sigma + z \fJ_\tau}{ 1 - z^2 } \, ,
\end{align}
so we are interested in computing
\begin{align}\label{lax_bracket_intermediate}
    \left\{ \fL_{\sigma, 1} ( \sigma, z ) , \fL_{\sigma, 2} ( \sigma', z' ) \right\} &= \frac{1}{(1 - z^2) ( 1 - z^{\prime 2} ) } \left( \left\{ j_{\sigma, 1} ( \sigma ) + z \fJ_{\tau, 1} ( \sigma ) , j_{\sigma, 2} ( \sigma' ) + z' \fJ_{\tau, 2} ( \sigma' ) \right\} \right) \nonumber \\
    &= \frac{1}{(1 - z^2) ( 1 - z^{\prime 2} ) } \Bigg( \left\{ j_{\sigma, 1} ( \sigma ) , j_{\sigma, 2} ( \sigma' ) \right\} + z \left\{ \fJ_{\tau, 1} ( \sigma ) , j_{\sigma, 2} ( \sigma' )  \right\}  \nonumber \\
    &\qquad + z' \left\{ j_{\sigma, 1} ( \sigma ) , \fJ_{\tau, 2} ( \sigma' ) \right\}  + z z' \left\{ \fJ_{\tau, 1} ( \sigma )  , \fJ_{\tau, 2} ( \sigma' ) \right\} \Bigg) \, .
\end{align}
The quantity (\ref{lax_bracket_intermediate}) depends on several other canonical brackets involving the fields $j_\sigma$ and $\fJ_\tau$. We have collected the details of the computations of these intermediate quantities in appendix \ref{app:pb}. Here we simply quote the results; at the end of the calculations one finds
\begin{align}\label{js_brackets_body}
    \{ \fJ_{\tau,1} ( \sigma ) , \fJ_{\tau,2} ( \sigma' ) \} &= [ \fJ_{\tau, 2} , C_{12} ]  \delta ( \sigma - \sigma' ) \, , \nonumber \\
    \left\{ \fJ_{\tau, 1} ( \sigma ) , j_{\sigma, 2} ( \sigma' ) \right\} &= [ j_{\sigma, 2}, C_{12} ] \delta ( \sigma - \sigma' ) - C_{12} \delta' ( \sigma - \sigma' ) \, , \nonumber \\
    \left\{ j_{\sigma, 1} ( \sigma ) , j_{\sigma, 2} ( \sigma' ) \right\} &= 0 \, .
\end{align}
Here we have introduced the Casimir $C_{12}$  defined as
\begin{align}\label{casimir_defn}
    C_{12} = \gamma^{AB} T_A \otimes T_B \, .
\end{align}
We now substitute the three expressions (\ref{js_brackets_body}) into the bracket (\ref{lax_bracket_intermediate}) and use the fact that
\begin{align}\label{j_FJ_swap}
    \left\{ j_{\sigma, 1} ( \sigma ) , \fJ_{\tau, 2} ( \sigma' ) \right\} = \left\{ \fJ_{\tau, 1} ( \sigma ) , j_{\sigma, 2} ( \sigma' ) \right\} \, ,
\end{align}
which is shown around equation (\ref{app_swap_j_FJ}), to obtain
\begin{align}\label{lax_pb_intermediate}
    \left\{ \fL_{\sigma, 1} ( \sigma, z ) , \fL_{\sigma, 2} ( \sigma', z' ) \right\} &= \frac{1}{(1 - z^2) ( 1 - z^{\prime 2} ) } \Bigg( ( z + z ' )  \left( [ j_{\sigma, 2} , C_{12} ] \delta ( \sigma - \sigma' ) - C_{12} \partial_\sigma \delta ( \sigma - \sigma' ) \right)   \nonumber \\
    &\qquad \qquad \qquad \qquad \qquad \qquad + z z' [ \fJ_{\tau, 2} , C_{12} ] \delta ( \sigma - \sigma' )  \Bigg) \, .
\end{align}
Collecting terms, we find that (\ref{lax_pb_intermediate}) can be written as
\begin{align}\label{final_lax_bracket}
    \left\{ \fL_{\sigma, 1} ( \sigma, z ) , \fL_{\sigma, 2} ( \sigma', z' ) \right\} = \frac{1}{(1 - z^2) ( 1 - z^{\prime 2} ) } &\Bigg( \left[ ( z + z ' ) j_{\sigma, 2} + z z' \fJ_{\tau, 2} , C_{12} \right] \delta ( \sigma - \sigma' ) \nonumber \\
    &\qquad - ( z + z' ) C_{12} \partial_\sigma \delta ( \sigma - \sigma' )  \Bigg) \, .
\end{align}
We would like to show that (\ref{final_lax_bracket}) takes the form (\ref{maillet_bracket}) for an appropriate choice of the $r$-matrix (and hence its symmetric part, the $s$-matrix). We make a standard ansatz for the $r$-matrix in terms of a twist function $\varphi ( z )$,
\begin{align}\label{r_from_twist}
    r_{12} ( z, z' ) &= \frac{C_{12}}{z - z'} \varphi^{-1} ( z' ) \, ,
\end{align}
which solves the classical Yang-Baxter equation for any choice of $\varphi ( z )$. Furthermore, we choose this $\varphi$ to match the twist function of the undeformed PCM,
\begin{align}\label{pcm_twist}
    \varphi ( z ) &= \frac{ z^2 - 1 }{z^2} \, . 
\end{align}
Since the transpose of the $r$ matrix is
\begin{align}
    r_{21} ( z', z ) &= \frac{C_{12}}{z' - z} \varphi^{-1} ( z ) \, , 
\end{align}
with this choice of twist function we find that the $s$-matrix is
\begin{align}
    s_{12} ( z , z' ) &= r_{12} ( z, z' ) + r_{21} ( z', z ) = \frac{C_{12} ( z + z' ) }{ ( 1 - z^2 ) ( 1 - z^{\prime 2} ) } \, .
\end{align}
Using this ansatz, we now compute the right side of equation (\ref{maillet_bracket}), which is the prescribed form of the Maillet bracket. This gives
\begin{align}\label{maillet_intermediate}
    &[ r_{12} ( z, z' ) \, , \, \CC{L}_{\sigma, 1} ( \sigma, z ) ] \delta ( \sigma - \sigma' ) - [ r_{21} ( z', z ) \, , \, \CC{L}_{\sigma, 2} ( \sigma, z' ) ] \delta ( \sigma - \sigma' )  - s_{12}( z, z' ) \partial_\sigma \delta ( \sigma - \sigma' ) \nonumber \\
    &\qquad = \frac{\varphi^{-1} ( z' ) }{z - z'} [ C_{12} , \fL_{\sigma, 1} ( \sigma, z ) ] \delta ( \sigma - \sigma' )  - \frac{\varphi^{-1}( z)}{z' - z} [ C_{12}, \fL_{\sigma, 2} ( \sigma, z' ) ] \delta ( \sigma - \sigma' ) \nonumber \\
    &\qquad \qquad  - \frac{C_{12} ( z + z' ) }{ ( 1 - z^2 ) ( 1 - z^{\prime 2} ) } \partial_\sigma \delta ( \sigma - \sigma' ) \nonumber \\
    &\qquad = \delta ( \sigma - \sigma' ) \Bigg( \frac{z^{\prime 2}}{ ( z^{\prime 2} - 1 ) ( z - z' ) } [ C_{12} , \frac{j_{\sigma, 1} + z \fJ_{\tau,1}}{1 - z^2} ]
    - \frac{z^2}{ ( z^2 - 1 ) (z' - z)} [ C_{12} , \frac{j_{\sigma, 2} + z \fJ_{\tau, 2} }{ 1 - z^{\prime 2}} ] \Bigg) 
     \nonumber \\
     &\qquad \qquad
    - \frac{C_{12} ( z + z' ) }{ ( 1 - z^2 ) ( 1 - z^{\prime 2} ) } \partial_\sigma \delta ( \sigma - \sigma' ) \, ,
\end{align}
where we have substituted the expression (\ref{lax_connection}) for the Lax connection of the auxiliary field sigma model. 
Equation (\ref{maillet_intermediate}) further simplifies to
\begin{align}
     &[ r_{12} ( z, z' ) \, , \, \CC{L}_{\sigma, 1} ( \sigma, z ) ] \delta ( \sigma - \sigma' ) - [ r_{21} ( z', z ) \, , \, \CC{L}_{\sigma, 2} ( \sigma, z' ) ] \delta ( \sigma - \sigma' )  - s_{12}( z, z' ) \partial_\sigma \delta ( \sigma - \sigma' ) \nonumber \\
     &\qquad = \frac{1}{( 1 - z^2 ) ( 1 - z^{\prime 2} ) } \Bigg( \left[ C_{12} , z z' \fJ_{\tau, 1} + ( z + z' ) j_{\sigma, 1} \right] \delta ( \sigma - \sigma' ) - ( z + z' ) C_{12} \partial_\sigma \delta ( \sigma - \sigma' ) \Bigg) \, .
\end{align}
Moreover, after using $[C_{12}, X_1] = - [ C_{12}, X_2 ] = [ X_2, C_{12} ]$, which is proved around equation (\ref{C12_X1_to_C12_X2}), one finds
\begin{align}\label{final_maillet_comparison}
    &[ r_{12} ( z, z' ) \, , \, \CC{L}_{\sigma, 1} ( \sigma, z ) ] \delta ( \sigma - \sigma' ) - [ r_{21} ( z', z ) \, , \, \CC{L}_{\sigma, 2} ( \sigma, z' ) ] \delta ( \sigma - \sigma' )  - s_{12}( z, z' ) \partial_\sigma \delta ( \sigma - \sigma' ) \nonumber \\
     &\qquad = \frac{1}{( 1 - z^2 ) ( 1 - z^{\prime 2} ) } \Bigg( \left[ z z' \fJ_{\tau, 2}  + ( z + z' ) j_{\sigma, 2 } , C_{12} \right] \delta ( \sigma - \sigma' )  - ( z + z' ) C_{12} \partial_\sigma \delta ( \sigma - \sigma' ) \Bigg) \, .
\end{align}
We now compare equation (\ref{final_maillet_comparison}) with the Poisson bracket (\ref{final_lax_bracket}) of the Lax connection computed above, and find that they agree. We conclude that the spatial component of the AFSM Lax connection obeys the condition (\ref{maillet_bracket}), where the $r$-matrix is given by equations (\ref{r_from_twist}) and (\ref{pcm_twist}), exactly as in the case of the principal chiral model.\footnote{Unlike the AFSM, many other integrable deformations of the PCM lead to modifications of the twist function $\varphi ( z )$. See \cite{Lacroix:2018njs} and, references therein, for examples and a comprehensive discussion.} By the construction reviewed in section \ref{sec:charge_review}, it follows that every member of the family (\ref{deformed_family}) of auxiliary field sigma models possesses an infinite set of conserved charges in involution. Remarkably, this result holds uniformly for every choice of interaction function $E$, and in all cases the $r$-matrix and twist function are identical (including the case $E = 0$, which is the PCM). The analysis of this section shows how the use of auxiliary fiends dramatically simplifies the analysis of integrability for general classes of higher-spin deformations of the PCM. We will now show how simplifications occur also for T-duality.

\section{T-Duality}\label{sec:t_duality}

T-duality is an equivalence between certain pairs of $2d$ sigma models, which was first discovered in the study of string propagation on target spacetimes which feature a torus \cite{Kikkawa:1984cp,Sakai:1985cs,Sathiapalan:1986zb}. 
For instance, the worldsheet theory of a closed string embedded in a spacetime which involves a circle $S^1_R$ of radius $R$ is equivalent to the theory of a closed string propagating on a target spacetime with a circle $S^1_{\widetilde{R}}$ of radius $\widetilde{R} = \frac{\alpha'}{R}$, where $\alpha' = \ell_s^2$ determines the string length. This duality interchanges string states with momentum along the circular dimension, whose energies scale as $\frac{1}{R}$, with states that wind the circle, and thus have energies that scale linearly with $R$. We will return to this example around equation (\ref{momentum_winding}).

T-duality for strings on spacetimes compactified on a torus is referred to as \emph{Abelian} T-duality, since the isometry group of a torus $T^n$ is the Abelian group $U ( 1 )^n$. This was first studied in the flat case \cite{Ginsparg:1986wr,Giveon:1988tt,Shapere:1988zv,Giveon:1989yf}, soon after in more general curved backgrounds \cite{Buscher:1985kb,Buscher:1987qj,Buscher:1987sk,Rocek:1991ps,Giveon:1991jj} and later on further extended to take into account the presence of extra fields beyond the metric, Kalb-Ramond two-form and dilaton \cite{Siegel:1993xq,Siegel:1993th,Bergshoeff:1995as,Hassan:1995je,Hassan:1999bv,Hassan:1999mm,Cvetic:1999zs,Kulik:2000nr,Bandos:2003bz,Benichou:2008it} (see also the reviews \cite{Giveon:1994fu,Alvarez:1994dn}). The study of T-duality in target spacetimes which possess \emph{non-Abelian} isometry groups was initiated in \cite{delaOssa:1992vci} (see also \cite{Nappi:1979ig,Fridling:1983ha,Fradkin:1984ai} for earlier studies) and immediately attracted a lot of interest \cite{Alvarez:1993qi,Gasperini:1993nz,Curtright:1994be,Sfetsos:1994vz,Alvarez:1994np,Alvarez:1994wj,Elitzur:1994ri,Gasperini:1994du,Lozano:1995jx}, as an extension of the gauging procedure considered in \cite{Rocek:1991ps,Giveon:1991jj}. Given a $2d$ sigma model on a background with group $G$ of isometries, T-duality can be understood as the gauging of a subgroup $H \subseteq G$, followed by a modification of the initial action by means of a Lagrange multiplier term which enforces the flatness of the gauge fields. Integrating out the multipliers allows one to set the gauge fields to be pure gauge,\footnote{\label{higher-genus-footnote}While this is true for topologically trivial worldsheets, issues may arise in the case of higher genus surfaces. In the Abelian setting these have been overcome by recasting T-duality as a canonical transformation \cite{Alvarez:1994wj,Curtright:1994be,Lozano:1995jx}, while in the non-Abelian case the problem still represents an open question.} so as to recover the initial model, while integrating out the gauge fields produces a new sigma model. Upon gauge fixing the degrees of freedom of the initial model, the T-dual is then described in terms of Lagrange multipliers, which play the role of dual coordinates.

The case of non-Abelian T-duality is considerably more subtle than its Abelian counterpart. Whereas Abelian T-duality is an exact equivalence between conformal field theories at the quantum level \cite{Rocek:1991ps,Giveon:1991jj}, there is no conclusive argument which demonstrates that its non-Abelian cousin represents a genuine quantum duality \cite{Giveon:1993ai}. One technical challenge is understanding the effects of worldsheet quantum corrections, organized as an expansion in $\alpha'$. These are also present in the context of Abelian T-duality \cite{Tseytlin:1991wr,Panvel:1992he,Bergshoeff:1995cg,Haagensen:1997er,Kaloper:1997ux,Jack:1999av,Parsons:1999ze} and in the non-Abelian setting, or more generally for Poisson-Lie T-duality, have been studied in \cite{Borsato:2020wwk,Hassler:2020tvz,Codina:2020yma}.
Another issue is the corrections due to finite string coupling $g_s$, which remain an open problem in the non-Abelian setting, as mentioned in footnote \ref{higher-genus-footnote}.  Overall, except in certain limiting cases -- such as holographic scenarios at large $N$, where these corrections can be ignored -- the higher-order terms in $\alpha'$, $g_s$ are not sufficiently well-understood to decisively establish whether non-Abelian T-duality survives in the quantum theory. As an additional peculiarity, the non-Abelian setting is affected by a systematic loss of symmetries, since only the isometries which commute with the gauged ones survive the procedure and remain local in the T-dual model, while the others become generically delocalised \cite{Giveon:1993ai,Plauschinn:2013wta,Bugden:2018pzv}. Due to these features, the non-Abelian dualisation procedure is still nowadays regarded as a relation between different theories.

These unusual features have not diminished the interest in understanding non-Abelian T-duality; to the contrary, it has fostered the development of various related lines of investigation. As a prime example, the search for a formulation of sigma models exhibiting manifest duality invariance prompted a re-evaluation of the role of the isometries, leading to novel frameworks such as Poisson-Lie T-duality \cite{Klimcik:1995ux,Klimcik:1995jn,Klimcik:1995dy}, Double Field Theory \cite{Hull:2009mi,Hohm:2010pp} and Exceptional Field Theory \cite{Berman:2010is,Hohm:2013vpa}, as well as to recent proposals for the generalisation of the gauging procedure, known as non-isometric T-duality \cite{Chatzistavrakidis:2015lga,Chatzistavrakidis:2016jci,Chatzistavrakidis:2016jfz,Bouwknegt:2017xfi} and spherical T-duality \cite{Bouwknegt:2014oka,Bouwknegt:2014ima,Bouwknegt:2015eta} (see also \cite{Demulder:2019bha,Thompson:2019ipl,Aldazabal:2013sca,Hohm:2013bwa,Berman:2020tqn,Bugden:2018pzv} for a detailed list of references). The AdS/CFT correspondence represents another important setting where non-Abelian T-duality has been playing a prominent role, as starting from \cite{Sfetsos:2010uq,Lozano:2011kb} many novel examples of supergravity solutions and holographic pairs have been uncovered \cite{Lozano:2012au,Itsios:2012zv,Lozano:2014ata,Kelekci:2014ima,Lozano:2015cra,Lozano:2015bra,Lozano:2016wrs,Lozano:2017ole,Lozano:2019emq,Lozano:2019ywa,Lozano:2021rmk,Ramirez:2021tkd}. See also the reviews \cite{Lozano:2021xxs,Lozano:2022fqk,Ramirez:2022fkc} and \cite{Borsato:2021gma,Borsato:2021vfy} for recent works in related directions. In parallel to all this, after the generalisation of Abelian T-duality to the case of backgrounds enjoying commuting superisometries \cite{Berkovits:2008ic,Ricci:2007eq,Beisert:2008iq,Adam:2009kt,Hao:2009hw,Sfetsos:2010xa,Adam:2010hh,Bakhmatov:2010fp,Dekel:2011qw,Abbott:2015mla,Colgain:2016gdj}, the non-Abelian setting has been re-examined in a series of supersymmetric extensions charachterised by the use of BRST \cite{Grassi:2011zf} and super Poisson-Lie symmetries \cite{Eghbali:2009cp,Eghbali:2011su,Eghbali:2012sr,Eghbali:2013bda,Eghbali:2014coa,Eghbali:2020twu,Eghbali:2023sak,Eghbali:2024vxi}, Double Field Theory \cite{Astrakhantsev:2021rhj,Astrakhantsev:2022mfs,Butter:2023nxm} and the gauging procedure described above in terms of Lagrange multipliers \cite{Borsato:2016pas,Borsato:2017qsx,Borsato:2018idb,Bielli:2021hzg,Bielli:2022gmm,Bielli:2023bnh,Bielli:2024wru}. Last but not least, it has gradually been realised that T-duality exhibits deep connections with integrable deformations of sigma models, such as current-current \cite{Hassan:1992gi,Henningson:1992rn}, $\TT$ \cite{Araujo:2018rho,Sfondrini:2019smd,Apolo:2019zai,Blair:2020ops}, Yang-Baxter (or $\eta$) \cite{Klimcik:2002zj} and $\lambda$ deformations \cite{Sfetsos:2013wia}. In recent years, this has triggered many interesting results \cite{Klimcik:2015gba,Borsato:2016zcf,Osten:2016dvf,Hoare:2016wsk,Borsato:2016pas,Borsato:2017qsx,Borsato:2018idb,vanTongeren:2019dlq,Osten:2019ayq} -- we refer to \cite{Seibold:2020ouf,Hoare:2021dix,Borsato:2023dis} for additional references -- and with this in mind we shall here focus on the interplay between T-duality and higher-spin auxiliary field deformations of sigma models, extending the results of \cite{Bielli:2024khq} and at the same time providing more details on their derivation.

The PCM is a natural setting in which to study non-Abelian T-duality. Indeed, if one wishes to engineer a sigma model that describes propagation on a target spacetime which possesses a Lie group $G$ of isometries, the simplest theory to consider is the PCM for the field $g: \Sigma \to G$, since the group $G$ naturally acts on $g$ by left-multiplication (and, in fact, right multiplication) in a way which is a symmetry of the theory. However, lest these words of motivation give the impression that the PCM is merely a trivial ``toy example'' for non-Abelian T-duality, we should stress that the PCM also describes sectors of interesting string worldsheet models that play an important role for holography, such as the $S^3$ part of the worldsheet theory describing type IIB superstrings on $\mathrm{AdS}_3 \times S^3 \times T^4$, when the fermions have been set to zero. Non-Abelian T-duality of an $SU(2)$ subgroup has indeed been exploited to generate novel supergravity solutions starting from $\mathrm{AdS}_3 \times S^3 \times T^4$ and $\mathrm{AdS}_5 \times S^5$ backgrounds, where $SU(2)$ respectively arises as a subgroup of isometries of the PCM on $S^3$ and of the symmetric space $S^5$ \cite{Sfetsos:2010uq}. The construction was then extended to generic coset spaces, allowing for dualisation of the full isometry groups of $S^3$ and $S^5$, as well as the study of other backgrounds \cite{Lozano:2011kb}.

The discussion of this work is purely classical, so we will not concern ourselves with questions involving quantum corrections. Instead, we will view non-Abelian T-duality as a classical equivalence between the principal chiral model and a theory whose action is
\begin{align}\label{tdual_PCM}
    S_{\text{TD-PCM}} = - \frac{1}{2} \int d^2 \sigma \, \left( \partial_+ \Lambda_C \right) \left( M^{-1} \right)^{CD} \left( \partial_- \Lambda_D \right) \, ,
\end{align}
which we refer to as the T-dual principal chiral model, or T-dual PCM. In equation (\ref{tdual_PCM}), the fundamental degree of freedom is a Lie algebra valued field $\Lambda_A$, and $\left( M^{-1} \right)^{CD}$ denotes the inverse of a matrix $M_{CD} = M_{CD} \left( \Lambda \right)$ which will be introduced shortly in equation (\ref{MAB_defn}).

\subsection{Dualization of the AFSM}\label{sec:dualization}

In this section we carry out the non-Abelian T-dualization of our auxiliary field sigma models, following the procedure of Buscher \cite{Buscher:1987sk,Buscher:1987qj}. For an introduction, see the pedagogical lectures \cite{Thompson:2019ipl}, sections 1-2 of the thesis \cite{Bielli:2023bnh}, or the earlier review \cite{Alvarez:1994dn}.

We begin with the auxiliary field sigma model Lagrangian (\ref{deformed_family}), with a general interaction function $E$. Let us first make a brief remark about the symmetries of this theory. The undeformed PCM is invariant under a $G_L \times G_R$ global symmetry which acts on $g$ as
\begin{align}\label{g_trans}
    g \to g_L g g_R \, ,
\end{align}
where $g_L$ and $g_R$ are two arbitrary elements of the Lie group $G$. These are non-Abelian target-space isometries of the sigma model. In the deformed setting, we would like to determine how the auxiliary fields $v_{\pm}$ transform. A simple way to infer the correct $G_L \times G_R$ transformation rules of $v_\pm$
is to recall that the equations of motion for the AFSM can be written as the conservation of the modified current $\fJ_{\pm} = - \left( j_{\pm} + 2 v_{\pm} \right)$. In order for the sum $j_{\pm} + 2 v_{\pm}$ to transform covariantly under the $G_L \times G_R$ global symmetry, it must be the case that the auxiliary fields $v_{\pm}$ transform in the same way as the left-invariant Maurer-Cartan form $j_{\pm}$. Under the transformation (\ref{g_trans}) of $g$, we see that
\begin{align}
    j_{\pm} &= g^{-1} \partial_{\pm} g \longrightarrow \left( g_L g g_R \right)^{-1} \partial_{\pm} \left( g_L g g_R \right) \nonumber \\
    &= g_R^{-1} g^{-1} g_L^{-1} g_L \left( \partial_{\pm} g \right) g_R \nonumber \\
    &= g_R^{-1} j_{\pm} g_R \, ,
\end{align}
where we used that $g_L$, $g_R$ are constants. Therefore, for consistency, we also require that
\begin{align}
    v_{\pm} \to g_R^{-1} v_{\pm} g_R 
\end{align}
under the $G_L \times G_R$ action (\ref{g_trans}). With this transformation rule for the auxiliary fields, the entire AFSM action (\ref{deformed_family}) is invariant under $G_L \times G_R$. In particular, all of the interaction variables $\nu_k$ are singlets under this global symmetry.

Since we have now seen that the AFSM is invariant under $G_L \times G_R$, we may now proceed to gauge the $G_L$ global symmetry. In addition to the physical field $g$ and the auxiliary field $v_{\pm}$, we also introduce a Lie algebra valued gauge field $\omega_\alpha$ on the worldsheet. All derivatives $\partial_\alpha$ with respect to worldsheet coordinates, such as those appearing in the definition of the Maurer-Cartan form, $j_\alpha = g^{-1} \partial_\alpha g$, will be promoted to covariant derivatives,
\begin{align}
    D_\alpha = \partial_\alpha + \omega_\alpha \, .
\end{align}
We write the covariantized analogue of the Maurer-Cartan form as
\begin{align}
    j_\alpha^{(\omega)} = g^{-1} D_\alpha g \, .
\end{align}
The curvature of the connection $\omega_\alpha$ is given by
\begin{align}\label{omega_curvature}
    F_{+ - } = \partial_+ \omega_- - \partial_- \omega_+ + [ \omega_+ , \omega_- ] \, .
\end{align}
We would like to impose that the curvature (\ref{omega_curvature}) vanish, to ensure that we have a flat connection. To do this, we add a Lagrange multiplier term to the action,
\begin{align}\label{lag_mult_term}
    \mathcal{L}_{\Lambda} = \frac{1}{2} \epsilon^{\alpha \beta} \tr \left( \Lambda F_{\alpha \beta} \right) = \frac{1}{2} \Lambda_A F_{+ -}^A \, ,
\end{align}
where $\Lambda_A$ is the Lie algebra valued Lagrange multiplier, and where we have used that in our conventions $\epsilon^{+ -} = - \epsilon^{-+} = \frac{1}{2}$. Combining the Lagrange multiplier term (\ref{lag_mult_term}) with the AFSM Lagrangian \eqref{deformed_family}, after promoting $j_\alpha$ to $j_\alpha^{(\omega)}$, gives the combined action
\begin{align}\label{AFSM_master}
    S_{\text{master}} &= \int d^2 \sigma \Big{[} 
    \tr\Big(
    \frac{1}{2} j_+^{(\omega)} j_-^{(\omega)} 
    + v_+ v_- 
    +  j_+^{(\omega)} v_- + j_-^{(\omega)} v_+ 
+ \frac{1}{2} \Lambda F_{+ -}\Big)
\non\\
&~~~~~~~~~~~~~~
+ E ( \nu_2 , \nu_3 , \ldots , \nu_N ) \Big{]} \, ,
\end{align}
which we refer to as the \emph{master action}.

As one expects from the gauging of a global symmetry, the $G_L$ action has been promoted to a local symmetry; that is, the action (\ref{AFSM_master}) is invariant under
\begin{align}
    g \to h^{-1} g \, , \qquad \omega \to h^{-1} d h + h^{-1} \omega h \, ,
\end{align}
for any $h: \Sigma \to G$. By choosing $h ( \sigma, \tau ) = g ( \sigma, \tau )$, we can perform such a gauge transformation which sets $g = 1$ everywhere.\footnote{For simplicity, we have chosen to gauge the entire group $G_L$ rather than a subgroup; if we had instead gauged only a proper subgroup $H \subset G_L$, then it would not be possible to gauge-fix $g = 1$ in general.} After performing this gauge-fixing, we find that the covariantized Maurer-Cartan form collapses to $j_\alpha^{(\omega)} = \omega_\alpha$, and the action (\ref{AFSM_master}) becomes
\begin{align}\label{AFSM_master_GF}
    S_{\text{master, GF}} &= \int d^2 \sigma \Big{[} 
    \tr\Big(
    \frac{1}{2} \omega_+ \omega_- 
    + v_+ v_- 
    +  \omega_+v_- 
    + \omega_-v_+ 
+ \frac{1}{2} \Lambda F_{+ -}\Big)
\non\\
&~~~~~~~~~~~~~~
+ E ( \nu_2 , \nu_3 , \ldots , \nu_N ) \Big{]} \, .
\end{align}
Beginning from this master action (\ref{AFSM_master_GF}), we have two options:
\begin{enumerate}[label = (\alph*)]
    \item\label{master_to_afsm} we can integrate out the Lagrange multiplier field $\Lambda$ to recover the original auxiliary field sigma model; or

    \item\label{master_to_tdafsm} we can integrate out the gauge field $\omega_{\pm}$ in order to obtain the T-dual model, for which the multiplier $\Lambda_A$ is the fundamental degree of freedom.
\end{enumerate}
As a consistency check, let us first perform the first procedure \ref{master_to_afsm} and verify that we recover the original model. The equation of motion for $\Lambda_A$ simply imposes
\begin{align}
    F_{+-}^A = 0 \, ,
\end{align}
which means that the connection $\omega$ is flat. We may therefore take $\omega$ to be pure gauge,
\begin{align}
    \omega_{\pm} = \hat{g}^{-1} \partial_{\pm} \hat{g} \, ,
\end{align}
for some function $\hat{g} : \Sigma \to G$. After making this choice, we see that
\begin{align}
    \tr \left( \omega_+ \omega_- \right) &= \tr \left( \hat{g}^{-1} \partial_+ \hat{g} \hat{g}^{-1} \partial_- \hat{g} \right) \, ,
\end{align}
which motivates the definition of a new ``hatted'' Maurer-Cartan form
\begin{align}
    j_{\pm} = \hat{g}^{-1} \partial_{\pm} \hat{g} \, .
\end{align}
In terms of this field $j_{\pm}$, the master action reduces to
\begin{align}
    S_{\text{master, GF}} \xrightarrow{\text{Integrate out } \Lambda} & \int d^2 \sigma \, \Big{[} \tr\Big(
    \frac{1}{2}j_+ j_- 
    + v_+ v_-  
    +j_+ v_- + j_- v_+ \Big)
    + E ( \nu_2 , \ldots , \nu_N ) \Big{]}
    \nonumber \\
    &= S_{\text{AFSM}} \, ,
\end{align}
which is indeed the original action for the auxiliary field sigma model, as claimed.

Next we will proceed in the opposite direction \ref{master_to_tdafsm}, i.e. we will complete the T-dualization procedure by integrating out the gauge fields $\omega_\alpha^A$ using their equations of motion. To do this, it is convenient to expand the field strength in generators as
\begin{align}
    F_{+ -} = \left( \partial_+ \omega_-^A - \partial_- \omega_+^A \right) T_A + \tensor{f}{_A_B^C} \omega_+^A \omega_-^B T_C \, ,
\end{align}
and then integrate by parts in terms involving derivatives of $\omega$ to find
\begin{align}\label{AFSM_master_GF_two}
    S_{\text{master, GF}} &= \int d^2 \sigma \Bigg( \frac{1}{2} \omega_+^A \omega_{-, A} + v_+^A v_{-, A} + \omega_+^A v_{-, A} + \omega_-^A v_{+, A} \nonumber \\
    &\qquad \qquad \qquad + \frac{1}{2} \left( \omega_+^A \partial_- \Lambda_A - \omega_-^A \partial_+ \Lambda_A + \tensor{f}{_A_B^C} \omega_+^A \omega_-^B \Lambda_C \right) + E ( \nu_2 , \ldots , \nu_N ) \Bigg) \, .
\end{align}
The Euler-Lagrange equation that arises from varying (\ref{AFSM_master_GF_two}) with respect to the components $\omega_{\pm}^A$ of the gauge field are
\begin{align}\label{omega_eom}
    0 &= \frac{\delta S_{\text{master, GF}}}{\delta \omega_{+}^A} = \frac{1}{2} \omega_{-, A} + v_{-, A} + \frac{1}{2} \partial_{-} \Lambda_A + \frac{1}{2} \tensor{f}{_A_B^C} \omega_{-}^B \Lambda_C  \, , \nonumber \\
    0 &= \frac{\delta S_{\text{master, GF}}}{\delta \omega_{-}^A} = \frac{1}{2} \omega_{+, A} + v_{+, A} - \frac{1}{2} \partial_{+} \Lambda_A - \frac{1}{2} \tensor{f}{_A_B^C} \omega_{+}^B \Lambda_C \, .
\end{align}
It is natural to define the matrix
\begin{align}\label{MAB_defn}
    M_{AB} = \gamma_{AB} - \tensor{f}{_A_B^C} \Lambda_C \, ,
\end{align}
so that the equations of motion (\ref{omega_eom}) can be written as
\begin{align}
    0 &= M_{BA} \omega_-^B + \partial_- \Lambda_A + 2 v_{-, A} \, , \nonumber \\
    0 &= M_{AB} \omega_+^B - \partial_+ \Lambda_A + 2 v_{+, A} \, ,
\end{align}
whose solutions are
\begin{align}\label{omega_solns}
    \omega_-^B &= - \left( M^{-1} \right)^{AB} \left( \partial_- \Lambda_A + 2 v_{-, A} \right) \, , \nonumber \\
    \omega_+^B &= \left( M^{-1} \right)^{BA} \left( \partial_+ \Lambda_A - 2 v_{+, A} \right) \, .
\end{align}
Substituting the solutions (\ref{omega_solns}) for $\omega_{\pm}$ into the gauge-fixed master action (\ref{AFSM_master_GF_two}), and after some algebra outlined around equation (\ref{remove_gauge_fields_result}), we arrive at the T-dual AFSM action
\begin{align}\hspace{-16pt}\label{TD_AFSM}
    S_{\text{TD-AFSM}} &= \int d^2 \sigma \, \Bigg( \,  \frac{1}{2} ( \partial_- \Lambda_A )  \left( M^{-1} \right)^{AB} ( \partial_+ \Lambda_B ) 
    + v_+^A v_{-, A}
    - 2 v_{-, A} \left( M^{-1} \right)^{AB} v_{+, B} 
    \nonumber \\
    &\qquad \qquad ~~~
    + v_{-, A} \left( M^{-1} \right)^{AB} ( \partial_+ \Lambda_B )  - ( \partial_- \Lambda_A ) \left( M^{-1} \right)^{AB}  v_{+, B} 
    \nonumber \\
    &\qquad \qquad ~~~
    + E ( \nu_2 , \nu_3 , \ldots , \nu_N ) \Bigg) \, .
\end{align}
We also note that the Lagrangian $\mathcal{L}_{\text{TD-AFSM}}$ appearing in the integrand of the action (\ref{TD_AFSM}) can be expressed using the combinations
\begin{align}\label{frak_K_defn}
    \mathfrak{K}_+^B = \left( M^{-1} \right)^{BA} \left( \partial_+ \Lambda_A - 2 v_{+, A} \right) \, , \qquad \mathfrak{K}_-^B = - \left( M^{-1} \right)^{AB} \left( \partial_- \Lambda_A + 2 v_{-, A} \right) \, ,
\end{align}
in terms of which it takes the form
\begin{align}
    \mathcal{L}_{\text{TD-AFSM}} = - \frac{1}{2} \mathfrak{K}_{+, A} M^{BA} \mathfrak{K}_{-, B}  + v_+^A v_{-, A} + E ( \nu_2 , \ldots , \nu_N ) \, ,
\end{align}
as we show around equation (\ref{L_TDAFSM_with_K}). The combination $\mathfrak{K}_{\pm}$, and the related field $k_{\pm}$ to be defined in equation (\ref{kpmA_defn}), were referred to as $\tilde{j}_{\pm}$ and $\widetilde{\mathfrak{J}}_{\pm}$ in \cite{Bielli:2024khq}, respectively, but here we have chosen different notation to avoid confusing them with the right-invariant Maurer-Cartan form $\tilde{j}_{\pm}$ of equation (\ref{left_right_maurer_cartan}).

\subsubsection*{\ul{\it Reduction to T-dual PCM for $E = 0$}}

We obtained the action (\ref{TD_AFSM}) by T-dualizing a deformed auxiliary field sigma model described by an interaction function $E$. However, when $E = 0$, we know that the AFSM reduces to the usual principal chiral model. Therefore, in this limit, the T-dual action (\ref{TD_AFSM}) should reduce to that of the ordinary T-dual PCM, which was given in equation (\ref{tdual_PCM}), after integrating out the auxiliary fields. This turns out to be the case, although it is not immediately obvious from the form of (\ref{TD_AFSM}), so we now pause to verify this explicitly.

The Euler-Lagrange equations associated with $S_{\text{TD-AFSM}}$ for a general interaction function $E$ are derived in appendix \ref{app:td_afsm}. Specializing to the case $E = 0$, the solution to the auxiliary field equations of motion are
\begin{align}\label{vpmA_soln}
    v_{+}^A &\deq \left( M^{-1} \right)^{BA} \partial_+ \Lambda_B \, , 
    \qquad
    v_{-}^A \deq - \left( M^{-1} \right)^{A B} \partial_- \Lambda_B \, .
\end{align}
Substituting the solutions (\ref{vpmA_soln}) into the action (\ref{TD_AFSM}) gives
\begin{align}\hspace{-10pt}
    S_{\text{TD-AFSM}} &\deq \int d^2 \sigma \, \Bigg( \frac{1}{2} ( \partial_- \Lambda_D )  \left( M^{-1} \right)^{CD} ( \partial_+ \Lambda_C ) + 2 \left( \left( M^{-1} \right)^{C A} \partial_- \Lambda_A \right)  \left( M^{-1} \right)_{CD} \left( M^{-1} \right)^{BD} \partial_+ \Lambda_B  \nonumber \\
    &\qquad - \left( \left( M^{-1} \right)^{C E} \partial_- \Lambda_E \right) \left( M^{-1} \right)_{CD} \partial_+ \Lambda^D  - \partial_- \Lambda^C \left( M^{-1} \right)_{CD}  \left( M^{-1} \right)^{BD} \partial_+ \Lambda_B \nonumber \\
    &\qquad - \left( \left( M^{-1} \right)^{DA} \partial_+ \Lambda_D \right) \gamma_{AB} \left( \left( M^{-1} \right)^{B C} \partial_- \Lambda_C \right) \Bigg) \, ,
\end{align}
which simplifies to
\begin{align}
    S_{\text{TD-AFSM}} &\deq \int d^2 \sigma \, ( \partial_- \Lambda_A ) ( \partial_+ \Lambda_B ) \Bigg( \frac{1}{2} \left( M^{-1} \right)^{AB} + 2 \left( M^{-1} \right)^{C A}  \left( M^{-1} \right)_{CD} \left( M^{-1} \right)^{BD}  \nonumber \\
    &~~~ - \left( M^{-1} \right)^{CA} \tensor{\left( M^{-1} \right)}{_C^B}   - \tensor{\left( M^{-1} \right)}{^A_D} \left( M^{-1} \right)^{BD}  - \left( M^{-1} \right)^{B C} \tensor{\left( M^{-1} \right)}{_C^A} \Bigg)  \, .
\end{align}
To complete the derivation, one needs some useful identities for the matrix $M^{-1}$, which are derived in appendix \ref{app:MAB}. In particular, using the relations (\ref{stronger_result_one}) and (\ref{triple_product}), we find
\begin{align}
    S_{\text{TD-AFSM}} &\deq \int d^2 \sigma \, ( \partial_- \Lambda_A ) ( \partial_+ \Lambda_B ) \left( \frac{1}{2} \left( M^{-1} \right)^{AB} - \frac{1}{2} \left( M^{-1} \right)^{AB} - \frac{1}{2} \left( M^{-1} \right)^{BA} \right) \nonumber \\
    &= - \frac{1}{2} \int d^2 \sigma \,  ( \partial_+ \Lambda_B ) \left( M^{-1} \right)^{BA} \left( \partial_- \Lambda_A \right) \, .
\end{align}
This precisely matches (\ref{tdual_PCM}), confirming that our T-dual auxiliary field models correctly reduce to the T-dual PCM in the undeformed limit.

\subsubsection*{\ul{\it Exchange of equation of motion and Maurer-Cartan identity}}

Here we will briefly discuss some aspects of the equations of motion for the T-dual auxiliary field sigma model (\ref{TD_AFSM}), and how they relate to those of the original AFSM (\ref{deformed_family}). 

First let us set expectations for how the T-dual model should behave, using a heuristic argument based on observations about the master action (\ref{AFSM_master_GF_two}). Recall that, when we integrated out $\Lambda$ to perform the procedure \ref{master_to_afsm}, we found
\begin{align}
    \omega_{\pm} = j_{\pm} \, ,
\end{align}
whereas when we integrated out $\omega$ to implement the direction \ref{master_to_tdafsm}, instead, we obtained
\begin{align}
    \omega_\pm = \mathfrak{K}_{\pm} \, ,
\end{align}
where $\mathfrak{K}_{\pm}$ is defined in (\ref{frak_K_defn}). This suggests the rough identification
\begin{align}\label{j_to_frak_K}
    j_\pm = \mathfrak{K}_{\pm} \, .
\end{align}
In the original auxiliary field sigma model, when $j_{\pm}$ is related to the fundamental field as $j_{\pm} = g^{-1} \partial_{\pm} g$ and the equations of motion are satisfied, then $j_{\pm}$ is flat but not conserved, while the modified field $\CC{J}_{\pm}$ is conserved but not flat. Thus the identification (\ref{j_to_frak_K}) suggests that, when the appropriate conditions in the T-dual theory are met, the field $\mathfrak{K}_{\pm}$ should be flat but not conserved, while some other related field $k_{\pm}$ should be conserved but not flat. 

Indeed, this turns out to be the case. As we show in appendix \ref{app:td_afsm}, the equation of motion for $\Lambda_A$ is equivalent to the condition
\begin{align}\label{TD_Lambda_eom_body}
    \partial_+ \CC{K}_- - \partial_- \CC{K}_+ + [ \CC{K}_+ , \CC{K}_- ] = 0 \, ,
\end{align}
which expresses the flatness of $\mathfrak{K}_{\pm}$. 

Furthermore, the equation of motion for the auxiliary field $v_{\pm}$ takes the form
\begin{align}
    0 \deq \mathfrak{K}_{\pm} + v_{\pm} + \sum_{n = 2}^{N} n E_n \tr ( v_{\pm}^n ) v_{\mp}^{A_1} \ldots v_{\mp}^{A_{n-1}} T^B \tr ( T_{(B} T_{A_1} \ldots T_{A_{n-1} )} ) \, .
\end{align}
Therefore, using an identical argument to the one around equation (\ref{fundamental_commutator_identity}), which relies on the generalized Jacobi identity discussed in appendix \ref{app:jac} and in particular its implication (\ref{gen_jac_body}), one finds that
\begin{align}\label{tdual_fundamental_commutator}
    [ v_{\mp} , \mathfrak{K}_{\pm} ] \deq [ v_{\pm} , v_{\mp} ] \, .
\end{align}
This is exactly the same structure which we found in our study of the ordinary AFSM, except with $j_{\pm}$ replaced by $\mathfrak{K}_{\pm}$. As a consequence, versions of all of the commutator identities which we derived in section \ref{sec:implications} for the AFSM will also hold for the T-dual AFSM. If we define the related fields
\begin{align}\label{kpmA_defn}
    k_\pm = - \left( \mathfrak{K}_{\pm} + 2 v_{\pm} \right) \, ,
\end{align}
then we note that
\begin{align}
    [ k_+ , k_- ] &= [ \CC{K}_+ + 2 v_+ , \CC{K}_- + 2 v_- ]  \nonumber \\
    &= [ \CC{K}_+ , \CC{K}_- ] + 2 [ \CC{K}_+ , v_- ] + 2 [ v_+ , \CC{K}_- ] + 4 [ v_+ , v_- ] \nonumber \\
    &\deq  [ \CC{K}_+ , \CC{K}_- ] - 2 [ v_+ , v_- ] - 2 [ v_+ , v_- ] + 4 [ v_+ , v_- ] \nonumber \\
    &= [ \CC{K}_+ , \CC{K}_- ] \, ,
\end{align}
in complete analogy with equation (\ref{imp_two}). Likewise, we find
\begin{align}
    [ \CC{K}_+ , k_- ] \deq [ k_+ , \CC{K}_- ] \, ,
\end{align}
by steps identical to those around equation (\ref{imp_one}).

Furthermore, when equations (\ref{TD_Lambda_eom_body}) and (\ref{tdual_fundamental_commutator}) are satisfied, this field obeys the conservation equation
\begin{align}
    \partial_\alpha k^\alpha \deq 0 \, ,
\end{align}
as we show around equation (\ref{k_cons_final}).

In this sense, the dualization procedure has exchanged the equation of motion and Maurer-Cartan identity.\footnote{See \cite{Ivanov:1987yv} where the interchange of the Maurer-Cartan identity and equation of motion is interpreted as giving a dual description of the PCM which is related to the Drinfeld double.} In the AFSM, the combination $\CC{J}_{\pm}$, which is built from both $j_{\pm}$ and auxiliary fields, is conserved but not flat. In the T-dual model, the analogous combination $\CC{K}_{\pm}$, which involves auxiliaries in a similar way as $\CC{J}_{\pm}$ via its definition (\ref{frak_K_defn}), is instead flat but not conserved, while it is $k_{\pm}$ which is conserved but not flat.

\subsubsection*{\ul{\it Lax representation}}

The above observation concerning the interchange of the equation of motion and Maurer-Cartan identity in the T-dual auxiliary field sigma model suggests the possibility of building a Lax connection for the TD-AFSM which takes the form
\begin{align}\label{tdual_lax}
    \CC{L}_{\pm} = \frac{\CC{K}_{\pm} \pm z k_{\pm}}{1 - z^2} \, ,
\end{align}
by analogy with (\ref{lax_connection}) for the AFSM. One could then proceed to show that the flatness of the Lax connection (\ref{tdual_lax}) for any $z \in \mathbb{C}$ is equivalent to the equations of motion for the model, then compute the Poisson bracket of the spatial component of the Lax connection to argue for strong integrability, as we did in section \ref{sec:maillet}.

Even though one could carry out this procedure to establish weak and strong integrability of the T-dual AFSM, we will not perform these steps here, since we will present a simpler argument for the classical integrability of this model in section \ref{sec:canonical}. There, we will see that the T-dual AFSM is equivalent to the ordinary AFSM up to a symplectomorphism, or a change of phase space coordinates which preserves the form of Hamilton's equations and the Poisson brackets. This result immediately implies integrability of the T-dual model without explicitly carrying out the steps which we performed for the AFSM in section \ref{sec:integrability}.

\subsection{A Commuting Square of Dualities and Deformations}\label{sec:comm_square}

In section \ref{sec:dualization}, we have constructed a family of models by first coupling the principal chiral model to auxiliary fields, and then T-dualizing the result. It is natural to wonder whether these two procedures can be performed in the opposite order. That is, can one first T-dualize the undeformed PCM, and then couple the resulting theory to auxiliary fields? And if so, do the results of performing these procedures in the two different orders agree?

To answer this question, we must first specify what we mean by coupling the T-dual PCM to auxiliary fields. That is, we would like to write down an auxiliary field model which extends the undeformed T-dual PCM action (\ref{tdual_PCM}) in a way which is analogous to the extension of the principal chiral model to the auxiliary field sigma model (\ref{deformed_family}). 

Let us enumerate the properties that we would like such an extension to satisfy:

\begin{enumerate}[label = (\roman*)]
    \item The Lagrangian $\mathcal{L}_{\text{AF-TDSM}}$ should depend on the field $\Lambda_A$, which appears in the undeformed Lagrangian $\mathcal{L}_{\text{TD-PCM}}$, in addition to a vector auxiliary field $\tilde{v}_\alpha$.

    \item The desired $\mathcal{L}_{\text{AF-TDSM}}$ should include only terms of the following types: a kinetic term for $\Lambda_A$ which is quadratic in its derivatives; a term quadratic in auxiliary fields $\tilde{v}_\alpha$; terms which linearly couple $\tilde{v}_\alpha$ to $\partial_\alpha \Lambda$; and an interaction function $E$ that depends on a collection of scalars $\tilde{\nu}_k$.

    \item\label{v_to_Lambda_property} When $E = 0$, the auxiliary field equation of motion must be $\tilde{v}_\pm = - \partial_\pm \Lambda$, and after integrating out $\tilde{v}_\alpha$ using this equation of motion, $\mathcal{L}_{\text{AF-TDSM}}$ must reduce to $\mathcal{L}_{\text{TD-PCM}}$.

    \item\label{td_current_property} When $\tilde{v}_\pm = - \partial_\pm \Lambda$, each variable $\tilde{\nu}_k$ must reduce to a bilinear of the form (\ref{Ok_defn}) which is a product of the spin-$k$ conserved currents in the T-dual PCM.
\end{enumerate}
We have selected these properties because they are identical to the ones satisfied by the auxiliary field deformation (\ref{deformed_family}) of the principal chiral model. Assumption \ref{td_current_property} implies that in the special case where $E = E ( \tilde{\nu}_2 )$, the family of auxiliary field T-dual sigma models is equivalent to all deformations of $\mathcal{L}_{\text{TD-PCM}}$ by functions of the energy-momentum tensor.

We note that the Lagrangian in the integrand of the action (\ref{TD_AFSM}), which is obtained from T-dualizing the AFSM, does not satisfy properties \ref{v_to_Lambda_property} and \ref{td_current_property}. However, it is straightforward to guess a Lagrangian which satisfies all of these assumptions:
\begin{align}\hspace{-15pt}\label{AFTDSM}
    \mathcal{L}_{\text{AF-TDSM}} &= \frac{1}{2} \left( \partial_+ \Lambda_C \right) \left( M^{-1} \right)^{(CD)} \left( \partial_- \Lambda_D \right) - \frac{1}{2}  \left( \partial_+ \Lambda_C \right) \left( M^{-1} \right)^{[CD]} \left( \partial_- \Lambda_D \right) \nonumber \\
    &~~~
    + \tilde{v}_{-, C} \left( M^{-1} \right)^{(CD)} \tilde{v}_{+, D} 
    + \left( \partial_+ \Lambda_C \right) \left( M^{-1} \right)^{(CD)} \tilde{v}_{-, D} + \tilde{v}_{+, C} \left( M^{-1} \right)^{(CD)} \left( \partial_- \Lambda_D \right)
    \nonumber \\
    &~~~
    + E \left( \tilde{\nu}_2 , \ldots , \tilde{\nu}_N \right) \, ,
\end{align}
where
\begin{align}\label{td_nu_defn}
    \tilde{\nu}_k = ( - 1 )^k \tr \Big( \underbrace{\left( M^{-1} \right)^T \tilde{v}_+ \cdot \ldots \cdot \left( M^{-1} \right)^T \cdot \tilde{v}_+ }_{k \text{ copies of } \left( M^{-1} \right)^T \cdot \tilde{v}_+} \Big) \tr \Big( \underbrace{ \left( M^{-1} \right) \cdot \tilde{v}_- \cdot \ldots \cdot \left( M^{-1} \right) \cdot \tilde{v}_-  }_{k \text{ copies of } \left( M^{-1} \right) \cdot \tilde{v}_-} \Big) \, .
\end{align}
As a sanity check, let us verify that (\ref{td_nu_defn}) obeys property \ref{v_to_Lambda_property}. Varying $\tilde{v}_-^A$ gives
\begin{align}\label{vmA_variation_intermediate}
    \left( M^{-1} \right)^{(A D)} \tilde{v}_{+, D} + ( \partial_+ \Lambda_D ) \left( M^{-1} \right)^{(D A)} = 0 \, ,
\end{align}
whose solution is
\begin{align}
    \tilde{v}_+^A \deq - \partial_+ \Lambda^A \, ,
\end{align}
where as usual we use $\deq$ to indicate that two quantities are equal when the auxiliary field equations of motion are satisfied. By similar reasoning we find
\begin{align}\label{my_aux_soln}
    \tilde{v}_-^A \deq - \partial_- \Lambda^A \, .
\end{align}
When we plug these in, we find
\begin{align}\hspace{-13pt}
    \mathcal{L}_{\text{AF-TDSM}} &\deq \frac{1}{2} \left( \partial_+ \Lambda_C \right) \left( M^{-1} \right)^{(CD)} \left( \partial_- \Lambda_D \right) - \frac{1}{2}  \left( \partial_+ \Lambda_C \right) \left( M^{-1} \right)^{[CD]} \left( \partial_- \Lambda_D \right) + \partial_- \Lambda_C \left( M^{-1} \right)^{(CD)} \partial_+ \Lambda_D \nonumber \\
    &\qquad - \left( \partial_+ \Lambda_C \right) \left( M^{-1} \right)^{(CD)} \partial_- \Lambda_D  - \partial_+ \Lambda_D \left( M^{-1} \right)^{(CD)} \left( \partial_- \Lambda_D \right) \, \nonumber \\
    &= - \frac{1}{2} \partial_+ \Lambda_C \left( M^{-1} \right)^{(CD)} \partial_+ \Lambda_D - \frac{1}{2}  \left( \partial_+ \Lambda_C \right) \left( M^{-1} \right)^{[CD]} \left( \partial_- \Lambda_D \right) \nonumber \\
    &= \mathcal{L}_{\text{TD-PCM}} \, .
\end{align}
Thus, when $E = 0$, the auxiliary field action (\ref{AFTDSM}) reduces to the T-dual PCM.

It may seem puzzling that we have now constructed two different theories, (\ref{TD_AFSM}) and (\ref{AFTDSM}), which involve an interaction function $E$ of several variables and which both reduce to the T-dual PCM when $E = 0$. This suggests that the two models might in fact be equivalent, in the sense that they agree after a field redefinition. If this were true, then it would give an answer to the question which we posed at the beginning of this section, namely whether the processes of T-duality and coupling to auxiliary fields commute. Indeed, since (\ref{TD_AFSM}) was obtained from one order of these operations (couple to auxiliaries, then T-dualize) while (\ref{AFTDSM}) arose from the other order (T-dualize, then couple to auxiliaries), a proof of the physical equivalence of these two theories would amount to a demonstration of the commutativity of the two operations.

We will now see that such an equivalence does, in fact, hold, as indicated in the diagram
\begin{align}\label{comm_digram}
\begin{tikzcd}[ampersand replacement=\&]
    {S_{\text{PCM}}} \&\&\& {S_{\text{TD-PCM}}} \\
    \\
    \&\&\& {S_{\text{AF-TDSM}}} \\
    {S_{\text{AFSM}}} \&\& {S_{\text{TD-AFSM}}}
    \arrow["{\text{T-dualize}}", from=1-1, to=1-4]
    \arrow["{\text{Auxiliaries}}"{description}, from=1-1, to=4-1]
    \arrow["{\text{Auxiliaries}}"{description}, from=1-4, to=3-4]
    \arrow["{\text{T-dualize}}"', from=4-1, to=4-3]
    \arrow["{\substack{\large \text{Field} \\ \normalsize \text{Redefinition}}}"', curve={height=18pt}, tail reversed, from=4-3, to=3-4]
\end{tikzcd} \, .
\end{align}
To see this, consider the field redefinition
\begin{align}\label{v_field_redef}
    v_+^B = - \left( M^{-1} \right)^{AB} \tilde{v}_{+, A} \, , \qquad v_-^B = \left( M^{-1} \right)^{BA} \tilde{v}_{-, A} \, .
\end{align}
Let us substitute this redefinition into the Lagrangian for the T-dual AFSM,
\begin{align}\label{LTD_AFSM}
    \mathcal{L}_{\text{TD-AFSM}} \big\vert &= \frac{1}{2} ( \partial_- \Lambda_C )  \left( M^{-1} \right)^{CD} ( \partial_+ \Lambda_D ) - 2 v_{-, C} \left( M^{-1} \right)^{CD} v_{+, D} 
    + v_+^A \gamma_{AB} v_-^B
       \nonumber \\
    &\qquad 
    + v_{-, C} \left( M^{-1} \right)^{CD} \partial_+ \Lambda_D
- \partial_- \Lambda_C \left( M^{-1} \right)^{CD} v_{+, D}  + E ( \nu_2 , \ldots , \nu_N ) \, ,
\end{align}
where $\big\vert$ refers to evaluation after substituting (\ref{v_field_redef}), which gives
\begin{align}\label{field_redefn_massage}
    \mathcal{L}_{\text{TD-AFSM}} \big\vert &= \frac{1}{2} ( \partial_- \Lambda_C )  \left( M^{-1} \right)^{CD} ( \partial_+ \Lambda_D ) + 2 \left( \left( M^{-1} \right)^{CA} \tilde{v}_{-, A} \right) \left( M^{-1} \right)^{CD} \left( \left( M^{-1} \right)^{BD} \tilde{v}_{+, B}  \right)  \nonumber \\
    &\qquad 
- \left( M^{-1} \right)^{CA} \tilde{v}_{+, C} \gamma_{AB} \left( \left( M^{-1} \right)^{BD} \tilde{v}_{-, D} \right) + \left( \left( M^{-1} \right)^{CA} \tilde{v}_{-, A} \right) \left( M^{-1} \right)^{CD} \partial_+ \Lambda_D 
\nonumber \\
    &\qquad 
 + \partial_- \Lambda_C \left( M^{-1} \right)^{CD} \left( M^{-1} \right)^{AD} \tilde{v}_{+, A}  
+ E ( \nu_2 , \ldots , \nu_N ) \, .
\end{align}
To simplify the products of the $M^{-1}$ matrices, we must appeal to the identities (\ref{stronger_result_one}) and (\ref{triple_product}) which are proved in appendix \ref{app:MAB}. Using these and simplifying, (\ref{field_redefn_massage}) becomes
\begin{align}\label{td_AFSM_change_vars_before_E}
    \mathcal{L}_{\text{TD-AFSM}} \big\vert &=  \frac{1}{2} ( \partial_- \Lambda_C )  \left( M^{-1} \right)^{CD} ( \partial_+ \Lambda_D ) + \tilde{v}_{+, A} \left( M^{-1} \right)^{(AB)}  \tilde{v}_{-, B} + \tilde{v}_{-, A} \left( M^{-1} \right)^{(AB)}  \partial_+ \Lambda_B  \nonumber \\
    &\qquad + \tilde{v}_{+, A} \left( M^{-1} \right)^{(AC)} \partial_- \Lambda_B + E ( \nu_2 , \ldots , \nu_N ) \, .
\end{align}
We therefore see that all of the terms in (\ref{td_AFSM_change_vars_before_E}) match those in (\ref{AFTDSM}) with the possible exception of the interaction function, which in (\ref{td_AFSM_change_vars_before_E}) depends on the variables $\nu_k$ rather than the variables $\tilde{\nu}_k$. However, it turns out that these are also equal. Beginning with
\begin{align}
    \nu_k = \tr ( v_+^k ) \tr ( v_-^k ) \, ,
\end{align}
and making the field redefinition (\ref{v_field_redef}) gives
\begin{align}
    \nu_k \big\vert &= \tr \Big( \underbrace{ ( - 1 ) \left( M^{-1} \right)^T \tilde{v}_+ \cdot \ldots \cdot ( - 1 ) \left( M^{-1} \right)^T \cdot \tilde{v}_+ }_{k \text{ times}} \Big) \tr \Big( \underbrace{ \left( M^{-1} \right) \cdot \tilde{v}_- \cdot \ldots \cdot \left( M^{-1} \right) \cdot \tilde{v}_-  }_{k \text{ times}} \Big) \nonumber \\
    &= \tilde{\nu}_k \, .
\end{align}
Therefore, we have proven that
\begin{align}
    \mathcal{L}_{\text{TD-AFSM}} \big\vert = \mathcal{L}_{\text{AF-TDSM}} \, ,
\end{align}
after making the field redefinition (\ref{TD_AFSM}). This confirms the physical equivalence of the two models and the commutativity of the diagram (\ref{comm_digram}).

\subsection{Canonical Transformation and Integrability}\label{sec:canonical}

It has been known since the work of \cite{Alvarez:1994wj,Lozano:1995jx} that the undeformed PCM and its T-dual theory (\ref{tdual_PCM}) are related by a canonical transformation, which means that the Hamiltonian formulations of the two theories are equivalent up to a change of variables. We will also refer to a canonical transformation as a symplectomorphism or symplectic diffeomorphism, i.e. a diffeomorphism between phase space manifolds which preserves the symplectic structure.

The existence of such a canonical transformation is a powerful tool for inferring properties about the model in one duality frame from the model in the other frame. For instance, as we have already reviewed, the principal chiral model is strongly integrable in the sense that it possesses an infinite tower of conserved charges $Q_n$ which obey
\begin{align}
    \left\{ Q_n , Q_m \right\} = 0 \, .
\end{align}
Let $\widetilde{Q}_n$ be the images of these charges under the symplectic diffeomorphism which maps the PCM to the T-dual PCM. Since the Poisson brackets are preserved under any canonical transformation, we immediately conclude that
\begin{align}\label{canonical_transform_charges}
    \big\{ \widetilde{Q}_n , \widetilde{Q}_m \big\} = 0 \, ,
\end{align}
which implies that the T-dual PCM, with Lagrangian (\ref{tdual_PCM}), is also strongly integrable.

In this section, we will prove that the deformed auxiliary field sigma models are also related to their non-Abelian T-duals by a canonical transformation. In fact, perhaps surprising, they are related by the \emph{same} canonical transformation as the one which relates the undeformed PCM and its T-dual. Thus the symplectomorphism which implements non-Abelian T-duality in the AFSM is completely blind to the choice of interaction function.

We remind the reader that we have established that the AFSM is related to its T-dual by a canonical transformation only for the case in which the entire group $G_L$ has been gauged. The analysis for the gauging of a proper subgroup $H \subset G_L$ will appear elsewhere.

\subsubsection*{\ul{\it Generalities on canonical transformation approach}}

Let us first review some basic aspects of the symplectomorphism which relates the principal chiral model and its T-dual in the undeformed setting. In this case, the canonical transformation is associated with a generating function of type 1. The generating function approach to canonical transformations is perhaps more familiar in the setting of particle mechanics, rather than in field theory. In the simpler particle context, suppose that one has a mechanical system with $N$ generalized coordinates $q^i$, $i = 1 , \ldots , N$, and their corresponding conjugate momenta $p_i$. Time evolution is governed by Hamilton's equations
\begin{align}
    \dot{q}^i = \frac{\partial H}{\partial p^i} \, , \qquad \dot{p}^i = - \frac{\partial H}{\partial q^i} \, ,
\end{align}
where $H = H ( q^i , p^i , t )$ is the Hamiltonian. We might then wish to change to a new set of phase space variables $Q^i ( t )$ and $P^i ( t )$, which are related to the old ones as
\begin{align}
    Q^i = Q^i ( q_j, p_j, t ) \, , \qquad P^i = P^i ( q_j , p_j , t ) \, .
\end{align}
This change of variables is said to be canonical if there exists a new Hamiltonian function $\widetilde{H} ( Q^i, P^i, t )$, which need not agree with the old one, such that
\begin{align}
    \dot{Q}^i = \frac{\partial \widetilde{H}}{\partial P^i} \, , \qquad \dot{P}^i = - \frac{\partial \widetilde{H}}{\partial Q^i} \, ,
\end{align}
i.e. so that Hamilton's equations have the same form when written in terms of the new coordinates. If such a new Hamiltonian $\widetilde{H}$ exists, this canonical transformation is guaranteed to preserve Poisson brackets (in fact, the preservation of Poisson brackets can be taken as an alternative but equivalent definition of a canonical transformation).

One way to generate valid canonical transformations is to use the generating function approach, where a symplectomorphism is obtained from a function $F$ that depends on a subset of the old and new phase space coordinates $( q^i, p^i, Q^i, P^i )$, and can depend explicitly on time. Generating functions are classified into four types based on which phase space coordinates they depend on; a discussion of these types and their properties can be found in standard textbooks \cite{goldstein2002classical}. We will be interested in type 1 generating functions, which take the form
\begin{align}
    F = F ( q^i , Q^i , t ) \, .
\end{align}
That is, a type 1 generating function can depend on the ``old'' coordinates $q^i$ and the ``new'' coordinates $Q^i$, but not on the conjugate momenta (and, again, may have explicit time dependence). Given such a generating function, the new Hamiltonian $\widetilde{H} ( Q^i, P^i )$ and the old Hamiltonian $H ( q^i , p^i )$ are related by
\begin{align}\label{type_1_ham}
    \widetilde{H} ( Q^i, P^i ) = H ( q^i , p^i ) + \frac{\partial F}{\partial t} \, ,
\end{align}
where one must take care in interpreting (\ref{type_1_ham}) as the new and old Hamiltonian are expressed in different sets of variables. For a generating function that does not depend explicitly on time, one has the relations
\begin{align}\label{type_1_ham_no_time}
    \widetilde{H} ( Q^i, P^i ) = H ( q^i , p^i ) \, ,
\end{align}
which means that the Hamiltonian transforms as a scalar under the symplectomorphism associated with the generating function $F$.

Given a type 1 generating function $F ( q^i, Q^i )$, the momenta are determined by
\begin{align}
    p_i = \frac{\partial F}{\partial q^i} \, , \qquad P^i = - \frac{\partial F}{\partial Q^i}\ .
\end{align}
As a simple example, if we choose the generating function
\begin{align}\label{particle_swap_p_q}
    F ( q^i, Q^i ) = q^i Q^i \, , 
\end{align}
then the symplectomorphism acts as
\begin{align}
    Q^i = p^i \, , \qquad P^i = - q^i \, , 
\end{align}
which has the effect of exchanging the generalized coordinates and their canonical momenta. A similar exchange of fields and momenta occurs for the type 1 generating function that implements T-duality in the principal chiral model. We turn to a discussion of such canonical transformations in field theories next.

\subsubsection*{\ul{\it Canonical transformations in field theory; generating function for T-dual PCM}}

In field theory, the situation is somewhat more complicated because the Hamiltonian
\begin{align}
    H [ \phi^i , \pi_i ] = \int d^d x \, \mathcal{H} ( \phi^i , \pi_i ) \, ,
\end{align}
is now a functional of the fundamental fields $\phi^i$ and their conjugate momenta $\pi_i$, obtained as the spatial integral of a Hamiltonian density $\mathcal{H}$. The appropriate generalization of the symplectomorphisms considered above is a change of phase space variables which preserves the Hamilton equations of motion,
\begin{align}\label{field_theory_hamilton}
    \dot{\phi}^i = \frac{\delta H}{\delta \pi_i} \, , \qquad \dot{\pi}_i = - \frac{\delta H}{\delta \phi^i} \, ,
\end{align}
and in particular this involves \emph{functional} derivatives of the integrated Hamiltonian $H$. For instance, the analogue of equation (\ref{type_1_ham_no_time}), which expresses the condition that the Hamiltonian is a scalar under a syplectomorphism, in the context of a field theoretical canonical transformation from old fields $( \phi^i , \pi_i )$ to new fields $( \Phi^i, \Pi_i )$ is
\begin{align}
    \widetilde{H} [ \Phi^i , \Pi_i ] = H [ \phi^i , \pi_i ] \, ,
\end{align}
which is an equality of functionals. One can consider quite general transformations
\begin{align}\label{integrated_canonical_transformations}
    \int d^d x \, \Phi^i &= \int d^d x \, \mathcal{F}_{i, 1} \left( \phi^j , \partial_j \phi^k , \pi_j, x_j, t \right) \, , \nonumber \\
    \int d^d x \, \Pi^i &= \int d^d x \, \mathcal{F}_{i, 2} \left( \phi^j , \partial_j \phi^k , \pi_j, x_j, t \right) \, ,
\end{align}
which leave (\ref{field_theory_hamilton}) invariant. See for instance \cite{musicki} for a discussion of such transformations.

However, for the present purposes, we will restrict to a smaller class of field-theoretic symplectomorphisms, which we call ``local transformations'' or ``point transformations'' and which only involve changes of variables relating $\phi^i$, $\pi_i$, $\Phi^i$, and $\Pi_i$ at a fixed spacetime point, rather than involving integrals as in (\ref{integrated_canonical_transformations}).

For point transformations, one can introduce the analogue of a type 1 generating function considered in particle mechanics,
\begin{align}\label{type_1_functional}
    F = F [ \phi^i , \Phi^i , t ] = \int d^d x \, \mathcal{F} ( \phi^i, \Phi^i, t ) \, .
\end{align}
From here onwards, we will consider only time-independent generating functions. In this case, under a symplectic diffeomorphism induced by a type 1 generating functional, the canonical momenta satisfy
\begin{align}\label{type_1_field_theory_momenta}
    \pi_i = \frac{\delta F}{\delta \phi^i} \, , \qquad \Pi_i = - \frac{\delta F}{\delta \Phi^i} \, ,
\end{align}
and the Hamiltonian \emph{density} (rather than the integrated Hamiltonian) obeys
\begin{align}
    \widetilde{\mathcal{H}} ( \Phi^i , \Pi_i ) = \mathcal{H} ( \phi^i , \pi_i ) \, .
\end{align}
This restricted class of canonical transformations in field theory is sufficient for our purposes, since it includes the symplectomorphism that relates the PCM to its non-Abelian T-dual (and, as we shall see, the AFSM to its T-dual).

In particular, we now specialize to the principal chiral model, whose fundamental fields $\phi^\mu$ are the local coordinates on the target space Lie group, and the canonical momenta are written $\pi_\mu$. In the T-dual, the fundamental fields $\Phi^i$ are taken to be the Lagrange multiplier fields $\Lambda_A$, which carry a Lie algebra index $A$, and their momenta will be written as $\Pi^A$.
Then consider the type 1 generating functional
\begin{align}\label{TD_PCM_gen_F}
    F [ \phi^\mu , \Lambda_A ] = - \int d \sigma \, \tr \left( \Lambda j_\sigma \right) \, .
\end{align}
Note that $j_\sigma$ depends on $\phi^\mu$ but not on the conjugate momenta $\pi_\mu$, as we remarked around equation (\ref{jsigma_no_pie}), so this generating functional is indeed type 1. Using (\ref{type_1_field_theory_momenta}), we check around equation (\ref{final_F_to_pi_check}) that the relations for the momenta implied by (\ref{TD_PCM_gen_F}) are 
\begin{align}\label{tdual_canonical}
    \pi_\mu &= j_\mu^A \partial_\sigma \Lambda_A - f_{AB}{}^C j_\nu^A j_\mu^B\Lambda_C  \partial_\sigma \phi^\nu \, , \nonumber \\
    \Pi^A &= j_\mu^A \partial_\sigma \phi^\mu \, .
\end{align}
One can find by explicit computation that the relations (\ref{tdual_canonical}) render the Hamiltonians of the principal chiral model and its non-Abelian T-dual equal \cite{Lozano:1995jx}; see also the review \cite{Lozano:1996sc}.

It is interesting to observe that the structure of the generating functional (\ref{TD_PCM_gen_F}) is similar to that of the generating function (\ref{particle_swap_p_q}) in particle mechanics, which interchanges coordinates with momenta. The na\"ive extension of the generating function $F_1 = q^i Q^i$ to the PCM and its T-dual would involve a product of the two fundamental fields $\phi^\mu$ and $\Lambda_A$, but the generating functional relevant for the PCM instead involves the spatial derivative of the field $\phi^\mu$, namely $\partial_\sigma \phi^\mu$. In the canonical formalism such spatial derivatives are still functions of the fundamental fields but not of their momenta, so this does not spoil the property that the generating functional is type 1. However, one still has the issue that $\partial_\sigma \phi^\mu$ and $\Lambda_A$ carry different indices. One can remedy this by converting the ``curved'' $\mu$ index to a ``flat'' tangent space index $A$ using the vielbein $j_\mu^A$ to define the combination
\begin{align}
    \left( \partial_\sigma \phi^\mu \right) j_\mu^A = j_\sigma^A \, ,
\end{align}
which is precisely the combination that appears in the integrand $\mathcal{F} = j_\sigma^A \Lambda_A$ of the generating functional (\ref{tdual_canonical}). In a loose sense, this matches our physical intuition for the action of Abelian T-duality on string sigma models, which interchanges momentum and winding modes along a target space circle. The expansion of a coordinate $X ( \sigma, \tau )$ for a closed string on a target space circle of radius $R$, in a sector with momentum $p = \frac{k}{R}$ and winding $w$, is
\begin{align}\label{momentum_winding}
    X ( \sigma, \tau ) = x + 2 \alpha' \frac{k}{R} \tau + 2 R w \sigma + \ldots \, , \qquad k , w \in \mathbb{Z} \, , 
\end{align}
where $\ldots$ captures oscillator terms. Therefore the winding contribution is associated with $\partial_\sigma X$ and the momentum contribution is determined by $\partial_\tau X$, which is related to the canonical momentum. Thus a generating functional of the form (\ref{TD_PCM_gen_F}) which (roughly) interchanges worldsheet spatial gradients $\partial_\sigma \phi$ with momenta has the right flavor for T-duality.

Next we will discuss the analogue of this canonical transformation which relates higher-spin auxiliary field deformations of the principal chiral model to their non-Abelian T-duals.

\subsubsection*{\ul{\it Canonical transformation for AFSM and its T-dual}}

Our present goal is to demonstrate that the Hamiltonian densities of the auxiliary field sigma model and its T-dual agree after performing the symplectomorphism associated with the generating functional (\ref{TD_PCM_gen_F}), or equivalently, when the momenta are related to the fields by (\ref{tdual_canonical}). We have already computed the Hamiltonian density $\mathcal{H}_{\text{AFSM}}$ of the AFSM in equation (\ref{final_afsm_ham}). To begin, we must carry out the same procedure for the T-dual AFSM by performing the Legendre transform of the Lagrangian appearing in the action (\ref{TD_AFSM}). This process is carried out in appendix \ref{app:tdual_details}, and the result is
\begin{align}\label{final_tdpcm_ham}
    \mathcal{H}_{\text{TD-AFSM}} &= \frac{1}{2} \Pi_C M^{AC} \tensor{M}{_A^D} \Pi_D - \Pi_A \tensor{f}{^A_B_C} \Lambda^C \left( \partial_\sigma \Lambda^B - 2 v_\tau^B \right) + \frac{1}{2} ( \partial_\sigma \Lambda^A ) ( \partial_\sigma \Lambda_A ) + v_\tau^A v_{\tau, A} \nonumber \\
    &\qquad + v_\sigma^A v_{\sigma, A}  + 2 v_\sigma^A \Pi_A  - 2 v_\tau^A \partial_\sigma \Lambda_A - E ( \nu_2 , \ldots , \nu_N ) \, .
\end{align}
Next we must evaluate (\ref{final_tdpcm_ham}) when the momentum $\Pi_A$ satisfies (\ref{tdual_canonical}). We will decorate the symbol for the Hamiltonian density with a star to indicate that these relations are satisfied:
\begin{align}
    \mathcal{H}_{\text{TD-AFSM}}^\ast = \mathcal{H}_{\text{TD-AFSM}} \left( \Pi_A = j_\mu^A \partial_\sigma \phi^\mu \right) \, .
\end{align}
Substituting the expression (\ref{tdual_canonical}) for $\Pi_A$ into (\ref{final_tdpcm_ham}) yields
\begin{align}
    \mathcal{H}_{\text{TD-AFSM}}^\ast &= \frac{1}{2} j_{\sigma, C} M^{AC} \tensor{M}{_A^D} j_{\sigma, D} - j_{\sigma, A} \tensor{f}{^A_B_C} \Lambda^C \left( \partial_\sigma \Lambda^B - 2 v_\tau^B \right) + \frac{1}{2} ( \partial_\sigma \Lambda^A ) ( \partial_\sigma \Lambda_A ) + v_\tau^A v_{\tau, A} \nonumber \\
    &\qquad + v_\sigma^A v_{\sigma, A}  + 2 v_\sigma^A j_{\sigma, A}  - 2 v_\tau^A \partial_\sigma \Lambda_A - E ( \nu_2 , \ldots , \nu_N ) \, .
\end{align}
It is convenient to expand $M$ explicitly using its definition (\ref{MAB_defn}), which gives
\begin{align}
    M^{AC} \tensor{M}{_A^D} = \gamma^{CD} - \tensor{f}{^C_A_B} \Lambda^B \tensor{f}{^A^D_E} \Lambda^E \, .
\end{align}
Thus we have
\begin{align}\label{first_canonical_two}
    \mathcal{H}_{\text{TD-AFSM}}^\ast &= \frac{1}{2}\, j_{\sigma, C} j_{\sigma}^{C} - \frac{1}{2} j_{\sigma, C} j_{\sigma, D} \tensor{f}{^C_A_B} \Lambda^B \tensor{f}{^A^D_E} \Lambda^E - j_{\sigma, A} \tensor{f}{^A_B_C} \Lambda^C \partial_\sigma \Lambda^B  + 2 j_{\sigma, A} \tensor{f}{^A_B_C} \Lambda^C v_\tau^B \nonumber \\
    &\qquad + \frac{1}{2} ( \partial_\sigma \Lambda^A ) ( \partial_\sigma \Lambda_A ) + v_\tau^A v_{\tau, A} + v_\sigma^A v_{\sigma, A}  + 2 v_\sigma^A j_{\sigma, A}  - 2 v_\tau^A \partial_\sigma \Lambda_A - E ( \nu_2 , \ldots , \nu_N ) \, .
\end{align}
We now turn to evaluating $\mathcal{H}_{\text{PCM}}^\ast$. Plugging in the expression for $\pi_\mu$ in (\ref{tdual_canonical}) to the Hamiltonian (\ref{final_afsm_ham}) gives
\begin{align}\label{second_canonical_one}
    \mathcal{H}_{\text{AFSM}}^\ast &= \frac{1}{2} j^\mu_A \left( j_\mu^B \partial_\sigma \Lambda_B - f_{BCD} \Lambda^B j_\sigma^C j_\mu^D \right) j^{\nu, A} \left( j_\nu^E \partial_\sigma \Lambda_E - f_{EFG} \Lambda^E j_\sigma^F j_\nu^G \right) + \frac{1}{2} j_\sigma^A j_{\sigma, A} + v_\tau^A v_{\tau, A} \nonumber \\
    &\qquad + v_\sigma^A v_{\sigma, A} + 2 \left( j_\sigma^A v_{\sigma, A} - j_\mu^A \left( j^{\mu, B} \partial_\sigma \Lambda_B - f_{BCD} \Lambda^B j_\sigma^C j^{\mu, D} \right) v_{\tau, A} \right)  - E ( \nu_2 , \ldots , \nu_N )  \, .
\end{align}
We can cancel vielbeins and inverse vielbeins using
\begin{align}
    j^\mu_A j_\mu^B = \tensor{\delta}{_A^B} \, ,
\end{align}
which gives
\begin{align}\label{second_canonical_two}
    \mathcal{H}_{\text{AFSM}}^\ast &= \frac{1}{2} \left( \partial_\sigma \Lambda_A - f_{BCA} \Lambda^B j_\sigma^C \right)  \left( \partial_\sigma \Lambda^A - \tensor{f}{_E_F^A} \Lambda^E j_\sigma^F \right) + \frac{1}{2} j_\sigma^A j_{\sigma, A} + v_\tau^A v_{\tau, A} \nonumber \\
    &\qquad + v_\sigma^A v_{\sigma, A} + 2 \left( j_\sigma^A v_{\sigma, A} - \left( \partial_\sigma \Lambda^A - \tensor{f}{_B_C^A} \Lambda^B j_\sigma^C \right) v_{\tau, A} \right)  - E ( \nu_2 , \ldots , \nu_N )  \, \nonumber \\
    &= \frac{1}{2} \partial_\sigma \Lambda^A \partial_\sigma \Lambda_A - \partial_\sigma \Lambda_A \tensor{f}{_B_C^A} \Lambda^B j_\sigma^C + \frac{1}{2} f_{BCA} \Lambda^B   \tensor{f}{_E_F^A} \Lambda^E j_\sigma^F j_\sigma^C + \frac{1}{2} j_\sigma^A j_{\sigma, A} + v_\tau^A v_{\tau, A} \nonumber \\
    &\qquad + v_\sigma^A v_{\sigma, A} + 2 j_\sigma^A v_{\sigma, A} - 2 \partial_\sigma \Lambda^A v_{\tau, A} + 2 \tensor{f}{_B_C^A} \Lambda^B j_\sigma^C v_{\tau, A}   - E ( \nu_2 , \ldots , \nu_N ) \, .
\end{align}
Comparing (\ref{first_canonical_two}) and (\ref{second_canonical_two}), after using the antisymmetry of the structure constants and relabeling indices, we see that
\begin{align}\label{equal_after_canonical}
    \mathcal{H}_{\text{TD-AFSM}}^\ast = \mathcal{H}_{\text{AFSM}}^\ast \, .
\end{align}
Therefore, this canonical transformation indeed makes the Hamiltonians equal.

We reiterate that the analysis in this section proves that the auxiliary field sigma model with interaction function $E$, in the Hamiltonian formulation, is equivalent to the T-dual AFSM with the same interaction function $E$; they are the same physical theory, written in different phase space variables. In particular, this immediately implies that the TD-AFSM is classically integrable. One can see that there exists an infinite set of Poisson-commuting conserved charges in the TD-AFSM by pushing forward the charges of the AFSM under this symplectomorphism, as we alluded to around equation (\ref{canonical_transform_charges}).

Alternatively, one can argue indirectly for the involution of an infinite set of charges for the TD-AFSM by noting that the Poisson bracket of the spatial component of the Lax connection for the TD-AFSM must also take the Maillet form. Schematically, the reason for this is the following. Consider the image $\widetilde{\CC{L}}_\sigma$ of the spatial component of the Lax connection for the AFSM, given in equation (\ref{spatial_lax}), under the canonical transformation which relates the AFSM to its T-dual. We have already seen that $\CC{L}_\sigma$ satisfies equation (\ref{maillet_bracket}) for a particular choice of the $r$-matrix, so let us then push forward this entire equation under the symplectomorphism which implements the T-duality. The Poisson bracket is invariant under such a symplectomorphism, and the matrices $r_{12}$ and $s_{12}$ are also invariant under this map as they depend only on the spectral parameters $z$, $z'$ but not on the phase space variables. Thus one immediately concludes that 
\begin{align}\label{maillet_bracket_TD}
    \left\{ \widetilde{\CC{L}}_{\sigma, 1} ( \sigma, z ) \, , \, \widetilde{\CC{L}}_{\sigma, 2} ( \sigma', z' ) \right\} &= [ r_{12} ( z, z' ) \, , \, \widetilde{\CC{L}}_{\sigma, 1} ( \sigma, z ) ] \delta ( \sigma - \sigma' ) - [ r_{21} ( z', z ) \, , \, \widetilde{\CC{L}}_{\sigma, 2} ( \sigma, z' ) ] \delta ( \sigma - \sigma' ) \nonumber \\
    &\qquad - s_{12}( z, z' ) \partial_\sigma \delta ( \sigma - \sigma' ) \, ,
\end{align}
which establishes that the Poisson bracket of the Lax connection for the T-dual AFSM also takes the Maillet form. 

Of course, one could also prove (\ref{maillet_bracket_TD}) by a direct evaluation of the left and right sides using the expression for the Lax connection of the TD-AFSM. However, we will not carry out these procedures explicitly, since the integrability of the TD-AFSM is now clear.

\section{Adding a Wess-Zumino Term}\label{sec:extensions}

The auxiliary field deformations of the principal chiral model which we have constructed in this work represent just one entry in a long list of other integrable deformations of the PCM; see \cite{Zarembo:2017muf,Orlando:2019his,Seibold:2020ouf,Klimcik:2021bjy,Hoare:2021dix} for reviews. As a general rule, whenever one has multiple deformations of a given theory at one's disposal -- all of which preserve some desirable structure such as integrability -- it is advantageous to attempt to activate all of these deformations simultaneously. If one succeeds in doing so, then one generates an even larger class of deformed models, and one can sometimes even learn more about each of the original deformations by viewing them as limiting cases within this broader family.\footnote{Other examples where this philosophy of constructing multiple simultaneous deformations has been useful include Fateev's $2$-parameter deformation of the $SU(2)$ PCM \cite{FATEEV1996509}, Lukyanov's $4$-parameter generalization \cite{Lukyanov:2012zt}, bi-Yang-Baxter deformations \cite{Klimcik:2014bta} and their combination with Wess-Zumino term and TsT transformations \cite{Delduc:2017fib}, and (in a somewhat different context) the deformed Inozemtsev spin chain of \cite{Klabbers:2023jqx}.}

It is interesting to investigate whether our higher-spin auxiliary field deformations can be combined with other integrable deformations of the PCM in this way. We will consider one example of such a simultaneous deformation in this work, namely the doubly-deformed models obtained by combining our auxiliary field model with a Wess-Zumino term. In this case, we find that the resulting doubly-deformed model also admits a Lax representation for its equations of motion. Other extensions and generalizations of our auxiliary formalism, including combination with Yang-Baxter deformations \cite{yb_toappear} and applications to symmetric and semi-symmetric space sigma models \cite{toappear}, will appear elsewhere.

\subsection{Definition of AFSM with WZ Term}\label{sec:wess_zumino}

The addition of a Wess-Zumino (WZ) term \cite{WESS197195,Novikov:1982ei} gives an integrable deformation of the principal chiral model which has a long history. For a particular choice of coefficient, the PCM with WZ term gives the Wess-Zumino-Witten (WZW) model, which is an exact conformal field theory at the quantum level \cite{Witten:1983ar} (in contrast, the PCM is only classically conformally invariant). It was shown in \cite{ABDALLA1982181} that the PCM with WZ term is classically integrable for any value of the coupling constant. In the context of string sigma models, the Wess-Zumino term describes the coupling to a target-space $H$ flux which is closed but not exact, and therefore cannot be captured using a conventional $B$-field coupling for a globally defined Kalb-Ramond two-form $B_2$.

First let us recall the formulation of the principal chiral model with Wess-Zumino term, before deforming by coupling to auxiliary fields. It is convenient to introduce two constants $\hay$ and $\kay$ which multiply the PCM kinetic term and the WZ term, respectively, so that
\begin{align}\label{PCM_WZ}
    S_{\text{PCM-WZ}} = - \frac{\hay}{2} \int d^2 \sigma \, \tr ( j_+ j_- ) + \frac{\kay}{6} \int_{\mathcal{M}_3} d^3 x \, \epsilon^{ijk} \tr \left( j_i [ j_j, j_k ] \right) \, .
\end{align}
A few comments are in order. First, the integral in the WZ term is performed over a $3$-manifold $\mathcal{M}_3$ whose boundary is the worldsheet, $\partial \mathcal{M}_3 = \Sigma$. We use lowercase middle Latin letters like $i$, $j$, $k$ for indices on $\mathcal{M}_3$, in contrast with the early Latin letters like $\alpha$, $\beta$ for indices on $\Sigma$. Second, in order for this action to give rise to a well-defined $2d$ theory, it must be invariant under variations which modify the field only in the $3d$ bulk $\mathcal{M}_3$ but not on the $2d$ boundary $\Sigma$. That is, given a variation $g \to g e^{\epsilon} = g ( 1 + \epsilon )$, under which
\begin{align}\label{delta_j_3d}
    \delta j_i = \partial_i \epsilon + [ j_i , \epsilon ] \, ,
\end{align}
we must have that $\delta S_{\text{WZ}} = 0$ assuming that $\epsilon \big\vert_{\Sigma} = 0$, where $\big\vert_{\Sigma}$ indicates the restriction to the $2d$ boundary. We review why this is indeed the case in appendix \ref{app:wz}, where we compute the variation of the WZ term. Finally, in order to facilitate comparison to the ordinary principal chiral model, note that the PCM limit is recovered by taking $\kay = 0$ and $\hay = 1$.

Next we will deform the action (\ref{PCM_WZ}) by coupling to auxiliary fields. We do this in the obvious way: we replace the PCM part of the action with the auxiliary field sigma model (\ref{deformed_family}), while leaving the Wess-Zumino term unchanged, to arrive at
\begin{align}\label{family_with_WZ}
    S_{\text{AFSM-WZ}} &= \hay \int d^2 \sigma \, \left( \frac{1}{2} \mathrm{tr} ( j_+ j_- ) + \mathrm{tr} ( v_+ v_- ) + \mathrm{tr} ( j_+ v_- + j_- v_+ ) + E ( \nu_2 , \ldots , \nu_N )  \right) \nonumber \\
    &\qquad + \frac{\kay}{6} \int_{\mathcal{M}_3} d^3 x \, \epsilon^{ijk} \tr \left( j_i [ j_j, j_k ] \right) \, .
\end{align}
Here we define each argument of the interaction function $E$ by $\nu_k = \tr ( v_+^k ) \tr ( v_-^k )$, exactly as before. The action (\ref{family_with_WZ}) is doubly-deformed in the sense that it reduces to each of our singly-deformed families of models in appropriate limits: for $\hay = 1$ and $\kay = 0$ it reproduces the auxiliary field sigma model (\ref{deformed_family}), while for $E = 0$ it coincides with the PCM with Wess-Zumino term (\ref{PCM_WZ}) after integrating out the auxiliary fields.

The equation of motion for the $g$-field arising from (\ref{family_with_WZ}) is derived in appendix \ref{app:wz} and takes the form
\begin{align}\label{AFSM_WZ_g_eom}
    \left( \hay - \kay \right) \partial_+ j_- + \left( \hay + \kay \right) \partial_- j_+ + 2 \hay \left( \partial_+ v_- + \partial_- v_+ + [ j_+ , v_- ] + [ j_- ,v_+ ] \right) = 0 \, ,
\end{align}
which explicitly depends on  $\hay$ and $\kay$.
Conversely, since the auxiliary fields $v_{\pm}$ do not appear in the Wess-Zumino term, their equation of motion is (\ref{v_eom}), exactly as in the case of the auxiliary field sigma model without Wess-Zumino term.

In particular, since the analysis of section \ref{sec:implications} relied only on the form of the auxiliary field equations of motion, all of the implications which we recorded in that section are also applicable here. For instance, it is still the case that 
\begin{align}\label{fundamental_commutator_identity_repeat}
    [ v_{\mp} , j_{\pm} ] \deq [ v_{\pm} , v_{\mp} ]
\end{align}
in the AFSM with WZ term. Similarly, if we again define
\begin{align}\label{WZ_frak_J}
    \mathfrak{J}_{\pm} = - \left( j_\pm + 2 v_{\pm} \right) \, ,
\end{align}
then the relations
\begin{align}\label{imps_wz}
    [ \CC{J}_+ , j_- ] &\deq [ j_+ , \CC{J}_- ] \, , \nonumber \\
    [ \CC{J}_+ , \CC{J}_- ] &\deq [ j_+ , j_- ] \, ,
\end{align}
also still hold. In terms of $\CC{J}_\pm$, the $g$-field equation of motion (\ref{AFSM_WZ_g_eom}) can be written as
\begin{align}\label{PCM_WZ_eom_dotted}
    \partial_+ \left( \hay \mathfrak{J}_- + \kay j_- \right) + \partial_- \left( \hay \mathfrak{J}_+ - \kay j_+ \right) \deq 0 \, ,
\end{align}
when the auxiliary field equation of motion is satisfied. We note that $\CC{J}_{\pm}$ is no longer conserved, unless $\kay = 0$.

\subsubsection*{\ul{\it The WZW point}}

For the undeformed PCM, the choice $\hay = \pm \kay$ gives a special point in the parameter space which defines the Wess-Zumino-Witten (WZW) model. Since $\CC{J}_{\pm} = j_{\pm}$ in the undeformed theory, one can see from equation (\ref{PCM_WZ_eom_dotted}) that the assignments $\hay = \kay$ and $\hay = - \kay$ cause the term proportional to $j_+$ and to $j_-$, respectively, to drop out of the conservation equation, which means that the remaining component is chirally conserved. This leads to an enhancement of the symmetries of the theory, which is related to the fact that the WZW model is a conformal field theory at the quantum level.

In the case of a general interaction function $E \neq 0$, no such chiral conservation law emerges in the limit $\hay = \pm \kay$, and both of the terms in equation (\ref{PCM_WZ_eom_dotted}) remain non-zero. This may be viewed as a signal that a generic member of our deformed class of models is not conformal, although at least for certain choices of interaction functions such as the one corresponding to the $\TT$ deformation, it may be that a form of ``dressed'' conformal symmetry still remains for the theory. However, even for classically conformal choices of interaction functions, such as $E ( \nu_k ) \sim \sqrt[k]{\nu_k}$, there is still no chirally conserved current; for $k = 2$, which corresponds to the root-$\TT$ deformation, this observation appeared in \cite{Borsato:2022tmu}.

Nonetheless, some simplifications still occur at the would-be WZW points $\hay = \pm \kay$ for general choices of interaction function. One example, as we will see below in equation (\ref{WZ_lax_commutator_intermediate}), is that the commutator $[\CC{L}_+ , \CC{L}_-]$ of the components of the Lax connection vanishes at this point of parameter space.

\subsection{Zero-Curvature Representation}

Let us now show that the auxiliary field sigma model with Wess-Zumino term is also weakly integrable, in the sense that it satisfies the condition (\ref{aux_weak_integrability}). In this case, the appropriate Lax connection is
\begin{align}\label{AFSM_WZ_lax}
    \mathfrak{L}_{\pm} = \frac{ \left( j_{\pm} \mp \frac{\kay}{\hay} \mathfrak{J}_{\pm} \right) \pm z \left( \mathfrak{J}_{\pm} \mp \frac{\kay}{\hay} j_{\pm} \right) }{1 - z^2} \, .
\end{align}
As expected, this candidate Lax connection reduces to the AFSM Lax connection (\ref{lax_connection}) upon taking $\kay = 0$ and $\hay = 1$. We will proceed as in \ref{sec:weak_integrability} by computing the curvature $d_{\CC{L}} \CC{L}$ of (\ref{AFSM_WZ_lax}) and showing that it vanishes for all values of the spectral parameter $z$ if and only if the equation of motion (\ref{PCM_WZ_eom_dotted}) is satisfied. In this calculation, we will freely use the Maurer-Cartan identity for $j_{\pm}$, which holds identically due to the definition of $j$.

We first compute the commutator of $\CC{L}_+$ with $\CC{L}_-$, which is
\begin{align}\label{WZ_lax_commutator_intermediate}
    [ \mathfrak{L}_+ , \mathfrak{L}_- ] &= \frac{1}{( 1 - z^2 )^2} \left[ j_{+} - \frac{\kay}{\hay} \mathfrak{J}_{+} + z \left( \mathfrak{J}_{+} - \frac{\kay}{\hay} j_{+} \right)  ,  j_{-} + \frac{\kay}{\hay} \mathfrak{J}_{-} - z \left( \mathfrak{J}_{-} + \frac{\kay}{\hay} j_{-} \right)  \right] \nonumber \\
    &= \frac{1}{( 1 - z^2 )^2}  \Bigg( [j_+ , j_-] + \frac{\kay}{\hay} \left( [ j_+ , \mathfrak{J}_- ] - [ \mathfrak{J}_+ , j_- ]  \right) - \frac{\kay^2}{\hay^2} [ \mathfrak{J}_+ , \mathfrak{J}_- ] \nonumber \\
    &\qquad \qquad \qquad + z \left( [ \mathfrak{J}_+ , j_- ] + \frac{\kay}{\hay} \left( [ \mathfrak{J}_+ , \mathfrak{J}_- ] - [ j_+ , j_- ] \right) - \frac{\kay^2}{\hay^2} [ j_+ , \mathfrak{J}_- ] \right) \nonumber \\
    &\qquad \qquad \qquad - z \left( [ j_+ , \mathfrak{J}_- ] - \frac{\kay}{\hay} \left( [ \mathfrak{J}_+ , \mathfrak{J}_- ] - [ j_+ , j_- ] \right) - \frac{\kay^2}{\hay^2} \left( [ \mathfrak{J}_+ , j_- ] \right) \right) \nonumber \\
    &\qquad \qquad \qquad - z^2 \left( [ \mathfrak{J}_+ , \mathfrak{J}_-] + \frac{\kay}{\hay} \left( [ \mathfrak{J}_+ , j_- ] - [ j_+ , \mathfrak{J}_- ] \right) - \frac{\kay^2}{\hay^2} [ j_+ , j_- ]  \right) \Bigg) \, .
\end{align}
We have emphasized that the implications (\ref{imps_wz}) of the auxiliary field equation of motion still hold for the AFSM with WZ term, just as in the usual AFSM. We use these relations to simplify the commutators involving $\CC{J}_{\pm}$ in (\ref{WZ_lax_commutator_intermediate}), finding
\begin{align}\label{WZ_lax_commutator_intermediate_two}
    [ \mathfrak{L}_+ , \mathfrak{L}_- ] &\deq \frac{1}{( 1 - z^2 )^2}  \Bigg( \left( 1 - \frac{\kay^2}{\hay^2} \right) [j_+ , j_-] - z^2 \left( 1 - \frac{\kay^2}{\hay^2} \right) [j_+ , j_-] \Bigg) \nonumber \\
    &= \frac{1 - \frac{\kay^2}{\hay^2}}{1 - z^2} [ j_+ , j_- ] \, .
\end{align}
As we alluded to above, we note that the commutator (\ref{WZ_lax_commutator_intermediate_two}) vanishes at the points $\hay = \pm \kay$. Having obtained this useful intermediate result, we are now prepared to compute the curvature two-form of the Lax connection, which is given by
\begin{align}\label{WZ_dL_intermediate}
    d_{\CC{L}} \mathfrak{L} &= \partial_+ \mathfrak{L}_- - \partial_- \mathfrak{L}_+ + [ \mathfrak{L}_+ , \mathfrak{L}_- ] \nonumber \\
    &\deq \frac{1}{1 - z^2} \Bigg( \partial_+ \left(  j_{-} + \frac{\kay}{\hay} \mathfrak{J}_{-} - z \left( \mathfrak{J}_{-} + \frac{\kay}{\hay} j_{-} \right) \right) - \partial_- \left( j_{+} - \frac{\kay}{\hay} \mathfrak{J}_{+} + z \left( \mathfrak{J}_{+} - \frac{\kay}{\hay} j_{+} \right)  \right) \nonumber \\
    &\qquad \qquad \qquad + \left( 1 - \frac{\kay^2}{\hay^2} \right) [ j_+ , j_- ]  \Bigg) \nonumber \\
    &= \frac{1}{1 - z^2} \Bigg( \partial_+ j_- - \partial_- j_+ + [ j_+ , j_- ] - \frac{\kay^2}{\hay^2} [ j_+ , j_- ] - z \frac{\kay}{\hay} \partial_+ j_- + z \frac{\kay}{\hay} \partial_- j_+ \nonumber \\
    &\qquad \qquad + \left( \frac{\kay}{\hay} - z \right) \partial_+ \mathfrak{J}_- + \left( \frac{\kay}{\hay} - z \right) \partial_- \mathfrak{J}_+ \Bigg) \, .
\end{align}
We now use the Maurer-Cartan identity for $j_{\pm}$, which causes the first three terms in parentheses in the penultimate line of (\ref{WZ_dL_intermediate}) to vanish. This identity also allows us to replace $[j_+, j_-]$ in the fourth term of the penultimate line, which gives
\begin{align}\label{PCM_WZ_flat_lax_final}
    d_{\CC{L}} \mathfrak{L} & \deq \frac{1}{1 - z^2} \Bigg( \frac{\kay^2}{\hay^2} \left( \partial_+ j_- - \partial_- j_+ \right) - z \frac{\kay}{\hay} \partial_+ j_- + z \frac{\kay}{\hay} \partial_- j_+ + \left( \frac{\kay}{\hay} - z \right) \partial_+ \mathfrak{J}_- + \left( \frac{\kay}{\hay} - z \right) \partial_- \mathfrak{J}_+ \Bigg) \nonumber \\
    &= \frac{1}{1 - z^2} \left( \left( \frac{\kay}{\hay} - z \right) \partial_+ \left( \frac{\kay}{\hay} j_- + \mathfrak{J}_- \right) + \left( \frac{\kay}{\hay} - z \right) \partial_- \left( \mathfrak{J}_+ - \frac{\kay}{\hay} j_+ \right)  \right) \nonumber \\
    &= \frac{\frac{\kay}{\hay} - z }{\hay ( 1 - z^2 ) } \left( \partial_+ \left( \hay \mathfrak{J}_- + \kay j_- \right) + \partial_- \left( \hay \mathfrak{J}_+ - \kay j_+ \right) \right) \, .
\end{align}
The last line of (\ref{PCM_WZ_flat_lax_final}) vanishes, for arbitrary $z \in \mathbb{C}$, if and only if
\begin{align}
    \partial_+ \left( \hay \mathfrak{J}_- + \kay j_- \right) + \partial_- \left( \hay \mathfrak{J}_+ - \kay j_+ \right) \deq 0 \, ,
\end{align}
which is the equation of motion (\ref{PCM_WZ_eom_dotted}). We conclude that every member of our family of higher-spin auxiliary field models with Wess-Zumino term satisfies the weak integrability condition (\ref{aux_weak_integrability}), that is, that their classical equations of motion are equivalent to the flatness of the Lax connection in equation (\ref{AFSM_WZ_lax}).

The arguments of this section serve as a proof-of-concept that, at least in one example, our auxiliary field deformations can be combined with other deformations of the PCM in a way that still admits a zero-curvature representation for the equations of motion. In order to conclusively establish classical integrability of the doubly-deformed models (\ref{family_with_WZ}), we would also need to demonstrate the existence of an infinite set of conserved charges in involution.\footnote{In the special case where the interaction function $E$ depends only on $\nu_2$, involution of the charges can also be shown from the $4d$ Chern-Simons presentation of the AFSM developed in \cite{Fukushima:2024nxm}.} For instance, one could do this by computing the Poisson bracket of the spatial component of the Lax connection and showing that it takes the non-ultralocal Maillet form, as we did in section \ref{sec:maillet} for the AFSM without Wess-Zumino term. We leave this task to future work, since our present goal is simply to argue for the feasibility of combining the AFSM with other integrable deformations, rather than to carry out a detailed analysis.

\section{Conclusion}\label{sec:conclusion}

In this work, we have generalized the auxiliary field deformations of \cite{Ferko:2024ali} to a larger family which is parameterized by a general multivariate function $E$ of higher-spin combinations of auxiliary fields. We have argued that, at least to leading order, this class of models includes deformations of the PCM by higher-spin current bilinears of the form considered in \cite{Smirnov:2016lqw}. Every model in this family is classically integrable, both in the weak sense of admitting a Lax representation for its equations of motion, and in the strong sense of possessing an infinite set of conserved charges in involution. We have also constructed the non-Abelian T-duals of the deformed models in this family, studied their features, and argued that they are classically integrable. Finally, we have constructed the ``doubly-deformed'' models obtained by simultaneously activating our auxiliary field deformations along with a Wess-Zumino term, and exhibited a Lax connection for these theories.

There remain several interesting directions for future research. Perhaps the most obvious is to combine our higher-spin auxiliary field deformations with other integrable deformations of the PCM, such as $\eta$- and $\lambda$-deformations, as we briefly mentioned at the end of section \ref{sec:wess_zumino}. Another direction is to seek explicit expressions for the deformed higher-spin conserved currents, generalizing the conserved quantities $\mathcal{J}_{s\pm} = \tr ( j_\pm^s )$ in the PCM, in an auxiliary field sigma model with arbitrary interaction function. Still other avenues for research include attempting to couple multiple sigma models via sequential auxiliary field deformations, similar to those introduced in \cite{Ferko:2022dpg}, or seeking analogues of these higher-spin deformations in the related $4d$ and $6d$ auxiliary field formalisms, which would extend results on stress tensor deformations of these theories presented in \cite{Ferko:2023wyi} and \cite{Ferko:2024zth}, respectively.

Below we briefly sketch a few additional lines of inquiry for further investigation. We hope to return to some of these in future work.

\subsubsection*{\ul{\it W-gravity}}

It seems likely that the class of models considered in this work includes deformations of the PCM by general functions of both the energy-momentum tensor and the higher-spin conserved currents. Such deformations can often be endowed with a geometrical interpretation. This was first established for the $2d$ $\TT$ deformation, which is classically equivalent to enacting a certain field-dependent change of coordinates in the undeformed theory \cite{Conti:2018tca}, and a similar statement applies for so-called $T \overbar{T}_s$ deformations, which are built from both the stress tensor and higher-spin currents of the type that we consider here \cite{Conti:2019dxg}. These observations were later elaborated and extended in various directions, both to higher dimensions \cite{Conti:2022egv,Morone:2024ffm} and to other stress tensor deformations like root-$\TT$ \cite{Babaei-Aghbolagh:2024hti,Tsolakidis:2024wut}; in these more general cases, the deformations are typically viewed as a coupling to some gravitational sector, rather than just a change of coordinates. See also \cite{Tolley:2019nmm,Ferko:2024yhc} for related work.

It would be very interesting if generic deformations by collections of higher-spin currents could be given a unified geometrical interpretation which treats all such flows on similar footing. One framework which may be useful for such an endeavor is $W$-gravity or $W$-geometry; see \cite{Hull:1993kf,Pope:1991ig} for reviews. Within the formalism of $W$-gravity, the ordinary metric tensor is promoted to a $W$-metric which, in addition to a linearized spin-$2$ metric fluctuation $h_{\alpha \beta}$ that couples to the conserved stress tensor $T_{\alpha \beta}$, contains a spin-$3$ component $W_{\alpha \beta \gamma}$ that couples to a totally symmetric spin-$3$ current $J_{\alpha \beta \gamma}$, and so on for all positive integer spins. It is conceivable that, just as a stress tensor deformation can be interpreted as coupling a theory to a field-dependent metric, a higher-spin deformation might be classically equivalent to a coupling to a field-dependent $W$-metric. The use of $W$-gravity may also help in obtaining higher-spin extensions of Hilbert's energy-momentum tensor to then engineer $\TT$-like flows based on the operators in equation (\ref{OK_defn_intro}).

\subsubsection*{\ul{\it Twisted cylinder boundary conditions}}

Although the collection of integrable deformations considered in this article is quite large -- for instance, in the PCM with gauge group $SU(N)$, they are in correspondence with functions of $N - 1$ variables 
-- they also act quite simply on the integrable structure of the theory, as evidenced by the fact that they do not affect the twist function $\varphi ( z )$. It also seems that these deformations are, in some sense, fairly ``universal'' in that they can be applied to any member of a large class of integrable sigma models, such as the principal chiral model with Wess-Zumino term, the PCM subject to a Yang-Baxter deformation  \cite{yb_toappear}, or to semi-symmetric space sigma models \cite{toappear}.

This suggests that there should be a uniform physical interpretation for the effect of our deformations which applies to many different models. One possibility for such an interpretation, at least for theories on the cylinder $\Sigma = S^1 \times \mathbb{R}$, is in terms of twisted boundary conditions in the spatial direction.\footnote{We thank Alessandro Sfondrini for this suggestion.} Some evidence for this conjecture is provided by the fact that fairly general deformations of $2d$ QFTs by combinations of higher-spin conserved charges can be interpreted by implementing twists of the fields' boundary conditions along the cylinder, as shown in \cite{Hernandez-Chifflet:2019sua}; a generalized flow equation for deformed charges under such a flow was also derived in this work (see also \cite{Cordova:2021fnr,Camilo:2021gro} for analyses of these higher-spin deformations using thermodynamic Bethe ansatz). It may be that our auxiliary field deformations are likewise implementing such twisted boundary conditions, which might explain the fact that these deformations leave $\varphi ( z )$ unchanged, much like homogeneous Yang-Baxter deformations \cite{Kawaguchi:2014qwa,vanTongeren:2015soa,Borsato:2021fuy} (which are also related to such twisted boundary conditions).

\subsubsection*{\ul{\it Holographic Interpretation}}

The special class of auxiliary field models where the interaction function $E$ depends only on $\nu_2$ is known \cite{Ferko:2024ali} to contain all deformations of the principal chiral model by functions of the energy-momentum tensor. At least two such deformations, the irrelevant $\TT$ deformation and its marginal root-$\TT$ cousin, admit interpretations in terms of modified variational principles in $\mathrm{AdS}_3 / \mathrm{CFT}_2$ holography. In the $\TT$ case, this variational principle was written down in \cite{Guica:2019nzm}, whereas the analogue for root-$\TT$ appeared in \cite{Ebert:2023tih} (see also \cite{Tian:2024vln,Ebert:2024fpc,Babaei-Aghbolagh:2024hti}).

It is natural to wonder whether the higher-spin deformations considered in this work can also be understood from a holographic perspective. One way to proceed might be to write the $\mathrm{AdS}_3$ gravity theory as an $SL ( 2 , \mathbb{R} ) \times SL ( 2 , \mathbb{R} )$ Chern-Simons theory (deformations by $\TT$ have also been studied in this context \cite{Llabres:2019jtx,Ebert:2022ehb}). The advantage of this re-framing is that one can straightforwardly promote the bulk gauge group to $SL ( N , \mathbb{R} ) \times SL ( N , \mathbb{R} )$, which has higher-spin degrees of freedom in addition to the spin-$2$ metric; see the review \cite{Campoleoni:2024ced}, or the papers \cite{Campoleoni:2010zq,Henneaux:2010xg,Campoleoni:2012hp} and references therein, for more details. It would be intriguing if a modified variational principle for such a higher-spin bulk Chern-Simons theory could engineer deformations of the boundary theory of the type considered in this work.

\section*{Acknowledgements}
We thank Peter Bouwknegt, Mattia Cesaro, Chris Hull, Johanna Knapp, Yolanda Lozano, Tommaso Morone, Jock McOrist, Silvia Penati, Anayeli Ramirez, Savdeep Sethi, Alessandro Sfondrini, Dmitri Sorokin, Roberto Tateo, and Martin Wolf for fruitful discussions. 
D.\,B. is supported by Thailand NSRF via PMU-B, grant number B13F670063.
C.\,F. is supported by U.S. Department of Energy grant DE-SC0009999 and funds from the University of California. 
L.\,S. is supported by a postgraduate scholarship at the University of Queensland.
G.\,T.-M. has been supported by the Australian Research Council (ARC) Future Fellowship FT180100353, ARC Discovery
Project DP240101409, and the Capacity Building Package of the University of Queensland. This research was supported in part by grant NSF PHY-2309135 to the Kavli Institute for Theoretical Physics (KITP).

\appendix

\section{Derivation of Equations of Motion}\label{app:eom}

To streamline the presentation in the body of this article, in this appendix we collect the derivation of the Euler-Lagrange equations for the various models considered in this work.

\subsection{Higher-Spin AFSM}\label{app:AFSM}

The main class of models which we study in this manuscript are defined by Lagrangians of the form (\ref{deformed_family}), which we repeat here for convenience:
\begin{align}\label{deformed_action}
    S = \int d^2 \sigma \, \left( \frac{1}{2} \mathrm{tr} ( j_+ j_- ) + \mathrm{tr} ( v_+ v_- ) + \mathrm{tr} ( j_+ v_- + j_- v_+ ) + E ( \nu_2 , \nu_3 , \ldots , \nu_N ) \right) \, ,
\end{align}
where $E$ is a differentiable function of the variables $\nu_k = \tr ( v_+^k ) \tr ( v_-^k )$.

\subsubsection*{\ul{\it Physical field equation of motion}}

Let us first derive the equation of motion for the group-valued field $g : \Sigma \to G$. To do this, we demand stationarity of the action under an infinitesimal variation of $g$, which can be parameterized by an element $\epsilon \in \mathfrak{g}$ of the Lie algebra associated with $G$ as
\begin{align}
    g \to g e^{\epsilon} = g \left( 1 + \epsilon \right) \, ,
\end{align}
where we work to leading order in $\epsilon$. Thus the variation of the field is $\delta g = g \epsilon$.

Using the fact that $g g^{-1} = 1$ so $\delta ( g g^{-1} ) = ( \delta g ) g^{-1} + g \delta g^{-1} = 0$, one can compute the variation of the inverse group element, which gives
\begin{align}
    \delta g^{-1} = - \epsilon g^{-1} \, .
\end{align}
The action (\ref{deformed_action}) is written in terms of the left-invariant Maurer-Cartan form $j_{\pm} = g^{-1} \partial_{\pm} g$, so we would like to also find the variation of this object. One finds
\begin{align}\label{afsm_variation}
    \delta j_{\pm} &= \delta \left( g^{-1} \partial_{\pm} g \right) \nonumber \\
    &= \left( \delta g^{-1} \right) \partial_{\pm} g + g^{-1} \partial_{\pm} \left( \delta g \right) \nonumber \\
    &= - \epsilon g^{-1} \partial_{\pm} g + g^{-1} \partial_{\pm} \left( g \epsilon \right) \nonumber \\
    &= - \epsilon j_{\pm} + g^{-1} \left( \partial_{\pm} g \right) \epsilon + g^{-1} g \partial_{\pm} \epsilon \nonumber \\
    &= [ j_{\pm}, \epsilon ] + \partial_{\pm} \epsilon \, .
\end{align}
We can now find the variation of the action (\ref{deformed_action}) under a fluctuation $\delta g = g \epsilon$:
\begin{align}
    \delta S = \int d^2 \sigma \, &\Bigg( \frac{1}{2} \tr \Big( \left( [ j_+, \epsilon ] + \partial_+ \epsilon \right) j_- + j_+ \left( [ j_-, \epsilon ] + \partial_- \epsilon \right) \Big) \nonumber \\
    &\qquad + \tr \Big( \left( [ j_+, \epsilon ] + \partial_+ \epsilon \right) v_- + \left( [ j_-, \epsilon ] + \partial_- \epsilon \right) v_+ \Big) \Bigg) \, .
\end{align}
Note that, for any Lie algebra valued quantities $X$, $Y$, and $Z$, one has
\begin{align}\label{cycle_trace_commutators}
    \tr \left( [X, Y] Z \right) &= \tr \left( X Y Z - Y X Z \right) \nonumber \\
    &= \tr \left( Y Z X - Y X Z \right) \nonumber \\
    &= \tr ( [ Z, X ] Y) \, ,
\end{align}
where we have used cyclicity of the trace. We repeatedly use the identity (\ref{cycle_trace_commutators}) to move factors of $\epsilon$ outside of commutators in (\ref{afsm_variation}), and also integrate by parts, to find
\begin{align}\label{afsm_var_intermediate}
    \delta S = \int d^2 \sigma \, &\Bigg( \frac{1}{2} \tr \Big( \epsilon  [ j_-, j_+ ] - \epsilon \partial_+ j_- + \epsilon [ j_+, j_- ] - \epsilon \partial_- j_+ \Big)  \nonumber \\
    &\qquad + \tr \Big( \epsilon [ v_-, j_+ ] - \epsilon \partial_+ v_- + \epsilon [ v_+, j_- ] - \epsilon \partial_- v_+ \Big) \Bigg) \, .
\end{align}
The two terms involving commutators $[j_{\pm}, j_{\mp}]$ in the first line of (\ref{afsm_var_intermediate}) cancel, and after collecting terms, we are left with
\begin{align}\label{afsm_var_intermediate_two}
    \delta S = - \int d^2 \sigma \, \tr \left( \epsilon \left( \frac{1}{2} \partial_+ j_- + \frac{1}{2} \partial_- j_+ + \partial_+ v_- + \partial_- v_+ - [ v_- , j_+ ] - [ v_+, j_- ] \right) \right) \, .
\end{align}
Demanding that (\ref{afsm_var_intermediate_two}) vanish for arbitrary $\epsilon$ gives
\begin{align}
    \partial_+ \left( j_- + 2 v_- \right) + \partial_- \left( j_+ + 2 v_+ \right) = 2 \left( [ v_-, j_+ ] + [ v_+ , j_- ] \right) \, ,
\end{align}
which justifies the $g$-field equation of motion (\ref{j_eom}) quoted in the main text.

\subsubsection*{\ul{\it Auxiliary field equation of motion}}

Next let us derive the algebraic equation of motion associated with the auxiliary fields $v_{\pm}$. It is helpful to expand these fields in a basis of generators $T_A$ for the Lie algebra $\mathfrak{g}$,
\begin{align}
    v_{\pm} = v_{\pm}^A T_A \, .
\end{align}
Recall that the interaction function $E$ depends on a collection of variables $\nu_n = \tr ( v_+^n ) \tr ( v_-^n )$. The derivative of one of these variables with respect to $v_{\pm}^A$ is given by
\begin{align}\label{nu_deriv}
    \frac{\partial \nu_n}{\partial v_{\pm}^A} &= \frac{\partial}{\partial v_{\pm}^A} \left( \tr ( v_+^n ) \tr ( v_-^n ) \right) \nonumber \\
    &= \tr ( v_{\mp}^n ) \frac{\partial}{\partial v_{\pm}^A} \left( v_{\pm}^{A_1} \ldots v_{\pm}^{A_n} \tr ( T_{A_1} \ldots T_{A_n} ) \right) \nonumber \\
    &= n \tr ( v_{\mp}^n ) v_{\pm}^{A_1} \ldots v_{\pm}^{A_{n-1}} \tr ( T_{(A} T_{A_1} \ldots T_{A_{n-1} )} ) \, .
\end{align}
Therefore, when we compute the Euler-Lagrange equation
\begin{align}
    0 &= \frac{\partial \mathcal{L}}{\partial v_{\pm}^A} \nonumber \\
    &= j_{\mp A} + v_{\mp A} + \sum_{n=2}^{N} \frac{\partial E}{\partial \nu_n} \frac{\partial \nu_n}{\partial v_{\pm}^A} \, ,
\end{align}
we may substitute (\ref{nu_deriv}) for each of the $\nu_k$ derivatives to find
\begin{align}\label{higher_spin_aux_eom}
    0 \deq j_{\pm}^A + v_{\pm}^A + \sum_{n = 2}^{N} n E_n \tr ( v_{\pm}^n ) v_{\mp}^{A_1} \ldots v_{\mp}^{A_{n-1}} \gamma^{AB} \tr ( T_{(B} T_{A_1} \ldots T_{A_{n-1} )} ) \, .
\end{align}
We remind the reader that we use the symbol $\deq$ to emphasize that this equation holds when the auxiliary field equation of motion is satisfied (in the present setting, this is a tautology, as equation (\ref{higher_spin_aux_eom}) \emph{is} the auxiliary field equation of motion). We have also defined
\begin{align}
    E_n = \frac{\partial E}{\partial \nu_n} \, .
\end{align}
Finally, we may also contract equation (\ref{higher_spin_aux_eom}) against $T_A$ to write
\begin{align}\label{higher_spin_aux_eom_contracted}
    0 \deq j_{\pm} + v_{\pm} + \sum_{n = 2}^{N} n E_n \tr ( v_{\pm}^n ) v_{\mp}^{A_1} \ldots v_{\mp}^{A_{n-1}} T^B \tr ( T_{(B} T_{A_1} \ldots T_{A_{n-1} )} ) \, .
\end{align}
This agrees with the $v_{\pm}$ equation of motion (\ref{v_eom}) which we mentioned before. 

\subsection{T-dual AFSM}\label{app:td_afsm}

Next we find the Euler-Lagrange equations of the T-dual auxiliary field action (\ref{TD_AFSM}). 

\subsubsection*{\ul{\it Physical field equation of motion}}

We begin with the equation of motion for the Lie algebra valued field $\Lambda_A$. The action involves inverses of the matrix $M_{AB} = \gamma_{AB} - \tensor{f}{_A_B^C} \Lambda_C$, so we will need the formula for the variation of the inverse of a matrix,
\begin{align}\label{inverse_variation}
    \delta \tensor{\left( M^{-1} \right)}{_C^D} = - M^{-1}_{CA} \cdot \delta M^{AB} \cdot \tensor{\left( M^{-1} \right)}{_B^D} \, .
\end{align}
Therefore one has
\begin{align}\label{Minv_variation}
    \frac{\partial \left( M^{-1} \right)^{CD}}{\partial \Lambda^A} 
    &= -  \tensor{\left( M^{-1} \right)}{^C_E} \cdot \frac{\partial M^{EB}}{\partial \Lambda^A} \cdot \tensor{\left( M^{-1} \right)}{_B^D}  \nonumber \\
    &=  \tensor{\left( M^{-1} \right)}{^C_E} \cdot \tensor{f}{^E^B_A} \cdot \tensor{\left( M^{-1} \right)}{_B^D} \, .
\end{align}
Using (\ref{Minv_variation}), we find
\begin{align}\label{tdafsm_dl_dLambda}
    \frac{\partial \mathcal{L}}{\partial \Lambda^A}
    &= \frac{1}{2} \left( \partial_- \Lambda_C \right) \left( \tensor{\left( M^{-1} \right)}{^C_E} \cdot \tensor{f}{^E^B_A} \cdot \tensor{\left( M^{-1} \right)}{_B^D} \right) \left( \partial_+ \Lambda_D \right) \nonumber \\
    &\quad - 2 v_{-, C} \left( \tensor{\left( M^{-1} \right)}{^C_E} \cdot \tensor{f}{^E^B_A} \cdot \tensor{\left( M^{-1} \right)}{_B^D} \right) v_{+, D}  \nonumber \\
    &\quad + v_{-, A} \left( \tensor{\left( M^{-1} \right)}{^C_E} \cdot \tensor{f}{^E^B_A} \cdot \tensor{\left( M^{-1} \right)}{_B^D} \right) \left( \partial_+ \Lambda_D \right) 
    \nonumber \\
    &\quad
    - \left( \partial_- \Lambda_C \right) \left( \tensor{\left( M^{-1} \right)}{^C_E} \cdot \tensor{f}{^E^B_A} \cdot \tensor{\left( M^{-1} \right)}{_B^D} \right) v_{+, B} \, .
\end{align}
Note that in the T-dual AFSM -- to be contrasted with the AF-TDSM (\ref{AFTDSM}), obtained by coupling the undeformed T-dual PCM to auxiliaries -- the scalars $\nu_k$ are independent of $M^{-1}$, and hence $\frac{\partial \mathcal{L}}{\partial \Lambda^A}$ does not receive any contribution from the interaction function.

Throughout our analysis of the T-dual AFSM equations of motion, it will be convenient to use the fields $\CC{K}_\pm$ defined in (\ref{frak_K_defn}). In terms of these quantities, (\ref{tdafsm_dl_dLambda}) can be written as
\begin{align}
    \frac{\partial \mathcal{L}}{\partial \Lambda^A} &= - \frac{1}{2} \mathfrak{K}_{-, E} \tensor{f}{^E^B_A} \mathfrak{K}_{+, B}  + 2 \left( M^{-1} \right)_{AE} v_-^A \tensor{f}{^E^B_A} \left( M^{-1} \right)_{BD} v_+^D \nonumber \\
    &\qquad - 2 v_{-, C} \tensor{\left( M^{-1} \right)}{^C_E} \cdot \tensor{f}{^E^B_A} \tensor{\left( M^{-1} \right)}{_B^D} v_{+, D} \, .
\end{align}
and one also finds
\begin{align}
    \frac{\partial \mathcal{L}}{\partial ( \partial_+ \Lambda_A) } &= \frac{1}{2} \left( M^{-1} \right)^{BA} \left( \partial_- \Lambda_B + 2 v_{-, B} \right) = - \frac{1}{2} \mathfrak{K}_-^A \, , \nonumber \\
    \frac{\partial \mathcal{L}}{\partial ( \partial_- \Lambda_A ) } &= \frac{1}{2} \left( M^{-1} \right)^{AB} \left(  \partial_+ \Lambda_B  - 2 v_{+ , B} \right) = \frac{1}{2} \mathfrak{K}_+^A \, .
\end{align}
The equation of motion for $\Lambda_A$ is therefore
\begin{align}\label{TD_Lambda_eom}
    0 &= \partial_+ \left( \frac{\partial \mathcal{L}_{\text{TD-PCM}}}{\partial ( \partial_+ \Lambda^A ) } \right) + \partial_- \left( \frac{\partial \mathcal{L}_{\text{TD-PCM}}}{\partial ( \partial_- \Lambda^A ) } \right)  - \frac{\partial \mathcal{L}_{\text{TD-PCM}}}{\partial \Lambda^A} \nonumber \\
    &= - \frac{1}{2} \partial_+ \mathfrak{K}_-^A + \frac{1}{2} \partial_- \mathfrak{K}_+^A + \frac{1}{2} \mathfrak{K}_{-, E} \tensor{f}{^E^B_A} \mathfrak{K}_{+, B} \, .
\end{align}
Since we have that
\begin{align}
    \left( [ \mathfrak{K}_+ , \mathfrak{K}_- ] \right)^A = \mathfrak{K}_+^E \mathfrak{K}_-^B  \tensor{f}{_E_B^A} \, ,
\end{align}
one finds that the equation of motion can be written as
\begin{align}\label{TD_frakK_flat}
    0 &= \partial_+ \mathfrak{K}_- - \partial_- \mathfrak{K}_+ + [ \mathfrak{K}_+ , \mathfrak{K}_- ] \, ,
\end{align}
which is the flatness of $\mathfrak{K}_{\pm}$. Note that the auxiliary field equation of motion was not used in arriving at this conclusion.

\subsubsection*{\ul{\it Auxiliary field equation of motion}}

Now let us vary the auxiliary fields $v_{\pm}^A$. One finds
\begin{align}
    \frac{\partial \mathcal{L}}{\partial v_+^B} &= - 2 v_{-, A} \left( M^{-1} \right)^{AB} + v_{-, B} - \left( \partial_- \Lambda_A \right) \left( M^{-1} \right)^{AB} + \frac{\partial E}{\partial v_+^B} \, , \nonumber \\
    \frac{\partial \mathcal{L}}{\partial v_-^B} &= - 2 \left( M^{-1} \right)^{BA} v_{+, A} + \left( M^{-1} \right)^{BA} \left( \partial_+ \Lambda_A \right) + v_{+, B} + \frac{\partial E}{\partial v_-^B} \, .
\end{align}
The associated equations of motion can therefore be written as
\begin{align}
    0 = \mathfrak{K}_\pm +  v_{\pm} + T^B \frac{\partial E}{\partial v_\mp^B} \, ,
\end{align}
where we have again used the definition (\ref{frak_K_defn}) of $\mathfrak{K}_\pm$. 

The quantities
\begin{align}
    \frac{\partial E}{\partial v_{\pm}^B} = \sum_{n=2}^{N} \frac{\partial E}{\partial \nu_n} \frac{\partial \nu_n}{\partial v_{\pm}^A} \, ,
\end{align}
can be expressed in terms of (\ref{nu_deriv}), precisely as in the case of the derivation of the auxiliary field equation of motion for the ordinary AFSM. In particular, this means that the $v_\pm$ Euler-Lagrange equation can be written as
\begin{align}
    0 \deq \mathfrak{K}_{\pm} + v_{\pm} + \sum_{n = 2}^{N} n E_n \tr ( v_{\pm}^n ) v_{\mp}^{A_1} \ldots v_{\mp}^{A_{n-1}} T^B \tr ( T_{(B} T_{A_1} \ldots T_{A_{n-1} )} ) \, ,
\end{align}
which takes exactly the same form as (\ref{higher_spin_aux_eom_contracted}) with $j_{\pm}$ replaced by $\mathfrak{K}_{\pm}$. 

\subsection{AFSM with Wess-Zumino Term}\label{app:wz}

We now turn to the equations of motion for the auxiliary field sigma model along with a Wess-Zumino term, whose action is given in (\ref{family_with_WZ}) and reproduced here:
\begin{align}\label{afsm_wz_app}
    S_{\text{AFSM-WZ}} &= \hay \int d^2 \sigma \, \left( \frac{1}{2} \mathrm{tr} ( j_+ j_- ) + \mathrm{tr} ( v_+ v_- ) + \mathrm{tr} ( j_+ v_- + j_- v_+ ) + E ( \nu_2 , \ldots , \nu_N )  \right) \nonumber \\
    &\qquad + \frac{\kay}{6} \int_{\mathcal{M}_3} d^3 x \, \epsilon^{ijk} \tr \left( j_i [ j_j, j_k ] \right) \, .
\end{align}
Up to an overall multiplicative factor, the variation of the first line of (\ref{afsm_wz_app}) is identical to that of the ordinary higher-spin auxiliary field sigma model, which we have considered in appendix \ref{app:AFSM}. We therefore focus on the variation of the Wess-Zumino term, in the second line of (\ref{afsm_wz_app}), and its associated contribution to the equations of motion.

We have seen in appendix \ref{app:AFSM} that an infinitesimal variation $g \to g e^{\epsilon}$ for $\epsilon \in \mathfrak{g}$ leads to a variation of the Maurer-Cartan form
\begin{align}
    \delta j_i = [ j_i , \epsilon ] + \partial_i \epsilon \, ,
\end{align}
where now we use middle lowercase Latin letters like $i$, $j$, $k$ which label indices on $\mathcal{M}_3$. The variation of the Wess-Zumino term is therefore
\begin{align}
    \delta S_{\text{WZ}} &= \delta \left( \frac{1}{6} \int_{\mathcal{M}_3} d^3 x \, \epsilon^{ijk} \tr \left( j_i [ j_j, j_k ] \right) \right) \nonumber \\
    &= \frac{1}{2} \int_{\mathcal{M}_3} d^3 x \, \epsilon^{ijk} \tr \left( \left( \partial_i \epsilon + [ j_i , \epsilon ] \right) [ j_j , j_k ] \right) \, .
\end{align}
We use the product rule to rewrite 
\begin{align}
    \left( \partial_i \epsilon \right) [ j_j , j_k ] = \partial_i \left( \epsilon [ j_j , j_k ] \right) - \epsilon \partial_i [ j_j , j_k ]  \, ,
\end{align}
which yields
\begin{align}
    \delta S_{\text{WZ}} = \frac{1}{2} \int_{\mathcal{M}_3} d^3 x \, \epsilon^{ijk} \tr \left( \partial_i \left( \epsilon [ j_j , j_k ] \right) - \epsilon \partial_i [ j_j , j_k ] + [ j_i , \epsilon ] [ j_j, j_k ] \right) \, .
\end{align}
We evaluate the final term using cyclicity of the trace and the symmetries of the Levi-Civita tensor $\epsilon^{ijk}$, along with the ability to relabel dummy indices:
\begin{align}\label{levi_civita_commutator_product_identity}
    \epsilon^{ijk} \tr ( [ j_i , \epsilon ] [ j_j , j_k ] ) &= \epsilon^{ijk} \tr \left( \left( j_i \epsilon - \epsilon j_i \right) \left( j_j j_k - j_k j_j \right) \right) \nonumber \\
    &= \epsilon^{ijk} \tr \left( j_i \epsilon j_j j_k - j_i \epsilon j_k j_j - \epsilon j_i j_j j_k + \epsilon j_i j_k j_j \right) \nonumber \\
    &= \epsilon^{ijk} \tr \left( \epsilon \left( j_i [ j_j, j_k ] - [ j_j , j_k ] j_i ] \right) \right) \nonumber \\
    &= \epsilon^{ijk} \tr \left( \epsilon [ j_i , [ j_j , j_k ] ] \right) \, .
\end{align}
However,
\begin{align}
    \epsilon^{ijk} [ j_i , [ j_j , j_k ] ] = 0
\end{align}
by the Jacobi identity. Therefore this term vanishes, and we are left with
\begin{align}\label{delta_WZ_intermediate}
    \delta S_{\text{WZ}} &= \frac{1}{2} \int_{\mathcal{M}_3} d^3 x \, \epsilon^{ijk} \tr \left( \partial_i \left( \epsilon [ j_j , j_k ] \right) - \epsilon \partial_i [ j_j , j_k ] \right) \, \nonumber \\
    &= \frac{1}{2} \int_{\mathcal{M}_3} d^3 x \, \epsilon^{ijk} \tr \left( \partial_i \left( \epsilon [ j_j , j_k ] \right) + \epsilon \partial_i \left( \partial_j j_k - \partial_k j_j \right) \right) \, ,
\end{align}
where we have replaced the commutator using the Maurer-Cartan identity in the form
\begin{align}
    [ j_j , j_k ] = - \left( \partial_j j_k - \partial_k j_j  \right) \, .
\end{align}
But now the final term of (\ref{delta_WZ_intermediate}) vanishes by symmetry, since the second derivatives like $\partial_i \partial_j j_k$ are symmetric in the exchange $i \leftrightarrow j$ while the Levi-Civita symbol is antisymmetric:
\begin{align}
    \epsilon^{ijk}  \partial_i \partial_j j_k = \epsilon^{ijk} \partial_i \partial_k j_j = 0 \, .
\end{align}
Therefore we conclude
\begin{align}\label{final_delta_SWZ}
    \delta S_{\text{WZ}} = \frac{1}{2} \int_{\mathcal{M}_3} d^3 x \, \epsilon^{ijk} \partial_i \tr \left( \epsilon [ j_j , j_k ] \right) \, ,
\end{align}
which is the integral of a total derivative and thus localizes to the boundary $\partial \mathcal{M}_3 = \Sigma$.

Explicitly, using Stokes' theorem and writing the result in light-cone coordinates, the $3d$ integral (\ref{final_delta_SWZ}) is equivalent to the $2d$ integral
\begin{align}\label{delta_SWZ_localized}
    \delta S_{\text{WZ}} &= - \frac{1}{2} \int d^2 \sigma \, \tr ( \epsilon [ j_+ , j_- ] ) \nonumber \\
    &= + \frac{1}{2} \int d^2 \sigma \, \tr \left( \epsilon \left( \partial_+ j_- - \partial_- j_+ \right) \right) \, ,
\end{align}
where we have used the Maurer-Cartan identity in going to the second line.

Let us pause to make one remark. We commented, in the paragraph following equation (\ref{PCM_WZ}), that well-definedness of the PCM with Wess-Zumino term requires that the variation of $S_{\text{WZ}}$ must vanish for any fluctuation $\epsilon$ with the property that $\epsilon \Big\vert_{\Sigma} = 0$. Indeed, this is now manifest from the form (\ref{delta_SWZ_localized}) of $\delta S_{\text{WZ}}$, which is written as an integral over $\Sigma$. Therefore, if the parameter $\epsilon$ of the $g$-variation has support only away from the boundary of $\mathcal{M}_3$, the Wess-Zumino term $S_{\text{WZ}}$ is invariant under such a fluctuation, as required.

Let us conclude by recording the full equation of motion arising from the combined action (\ref{afsm_wz_app}). To do this, we add up the contribution from variation of the AFSM term, which takes the form (\ref{afsm_var_intermediate_two}) multiplied by a factor of $\hay$, and the variation of the Wess-Zumino term, which is given by (\ref{delta_SWZ_localized}) multiplied by the constant $\kay$. The result is
\begin{align}
    \delta S = \int &d^2 \sigma \, \tr \Bigg( \epsilon \Bigg( \hay \left( [ v_- , j_+ ] + [ v_+, j_- ] - \frac{1}{2} \partial_+ j_- - \frac{1}{2} \partial_- j_+  - \partial_+ v_- - \partial_- v_+  \right) \nonumber \\
    &\qquad \qquad \qquad + \frac{\kay}{2} \left( \partial_+ j_- - \partial_- j_+ \right) \Bigg) \Bigg) \, ,
\end{align}
and demanding that this vanish for any $\epsilon$ gives
\begin{align}
    \left( \hay - \kay \right) \partial_+ j_- + \left( \hay + \kay \right) \partial_- j_+ + 2 \hay \left( \partial_+ v_- + \partial_- v_+ + [ j_+ , v_- ] + [ j_- ,v_+ ] \right) = 0 \, .
\end{align}
This gives the $g$-field equation of motion for the AFSM with WZ term. The auxiliary field equation of motion is unaffected by the Wess-Zumino coupling, and thus still takes the same form (\ref{higher_spin_aux_eom_contracted}) as in the AFSM.

\section{Generalized Jacobi Identity}\label{app:jac}

The goal of this appendix is to state some simple results concerning traces of products of generators of a Lie algebra. Although these facts are well-known and elementary -- indeed, they simply follows from ad-invariance of the trace, or equivalently, the fact that invariant tensors are inert under changes of coordinates -- they turn out to be important in developing some consequences of the equations of motion for auxiliary field sigma models, as we see in section \ref{sec:implications}. Following the terminology of \cite{vanRitbergen:1998pn}, we refer to the first of these statements as the ``generalized Jacobi identity'' for Lie algebras.
\begin{lemma}[Generalized Jacobi Identity]\label{lem:gen_jacobi}
    Fix a representation $R$ of a semi-simple Lie algebra $\mathfrak{g}$ associated to a Lie group $G$. Choose a basis $T_A$ of generators for $\mathfrak{g}$ in the representation $R$, let $\Tr$ denote the trace in this representation, and let $\tensor{f}{_A_B^C}$ be the structure constants which satisfy $[ T_A, T_B ] = \tensor{f}{_A_B^C} T_C$. Then for any trace of $n$ generators, one has
    \begin{align}\label{generalized_jacobi}
        \sum_{i=1}^{n} \tensor{f}{_C_{A_i}^B} \Tr \left( T_{A_1} \ldots T_{A_{i-1}} T_B T_{A_{i+1}} \ldots T_{A_n} \right) = 0 \, ,
    \end{align}
    where in each term of the sum, the index $A_i$ of the $i$-th generator within the trace has been replaced by the index $B$.
\end{lemma}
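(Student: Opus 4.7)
The statement is essentially the infinitesimal version of the conjugation invariance of the trace, so the proof plan is to realize the left-hand side as the trace of a commutator and then use cyclicity. The plan is to start by considering the product $X = T_{A_1} T_{A_2} \cdots T_{A_n}$ as an element of the universal enveloping algebra, and to compute $[T_C, X]$ by repeated application of the Leibniz rule for commutators, which gives
\begin{align}
[T_C, T_{A_1} \cdots T_{A_n}] = \sum_{i=1}^{n} T_{A_1} \cdots T_{A_{i-1}} [T_C, T_{A_i}] T_{A_{i+1}} \cdots T_{A_n} \, .
\end{align}
This identity is elementary and can be verified by induction on $n$, since each application of the Leibniz rule introduces exactly one new term in which the commutator acts on the next factor.

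Next, I would substitute $[T_C, T_{A_i}] = \tensor{f}{_C_{A_i}^B} T_B$ into each summand, which produces exactly the sum on the left-hand side of (\ref{generalized_jacobi}) inside the argument of a trace:
\begin{align}
\Tr\left( [T_C, T_{A_1} \cdots T_{A_n}] \right) = \sum_{i=1}^{n} \tensor{f}{_C_{A_i}^B} \Tr \left( T_{A_1} \cdots T_{A_{i-1}} T_B T_{A_{i+1}} \cdots T_{A_n} \right) \, .
\end{align}
The final step is to observe that the left-hand side vanishes by cyclicity of the trace, since $\Tr(T_C X - X T_C) = \Tr(T_C X) - \Tr(T_C X) = 0$. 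This immediately yields the desired identity (\ref{generalized_jacobi}).

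I do not anticipate any substantial obstacle: the proof is a two-line argument once the Leibniz rule for commutators is invoked. The only point that requires mild care is the inductive justification of the Leibniz rule itself, but this is standard and does not depend on any property of the Lie algebra beyond the associativity of the product in the enveloping algebra. Note that semi-simplicity of $\mathfrak{g}$ is not actually used at this stage; it enters only when one wants to raise and lower adjoint indices using the Killing-Cartan form, which is what is needed to derive the corollary (\ref{gen_jac_body}) in the body of the paper after totally symmetrizing over $A_1, \ldots, A_{n-1}, C$.
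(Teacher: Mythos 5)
Your proof is correct, and it rests on the same mechanism as the paper's: ad-invariance of the trace, which for a matrix representation is just cyclicity. The only difference is presentational — the paper conjugates each generator by a finite group element $g = e^{\epsilon}$ and extracts the $\mathcal{O}(\epsilon)$ terms, whereas you work directly at the infinitesimal level by writing the sum as $\Tr\left( [T_C, T_{A_1}\cdots T_{A_n}]\right)$ via the Leibniz rule and killing it by cyclicity, which is a slightly more economical route to the identical identity (and your remark that semi-simplicity is not needed for the lemma itself, only for raising indices later, is also accurate).
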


\begin{proof}
    First recall that, if $g \in G$ is an arbitrary element of the Lie group $G$, then any trace of a product of generators $T_{A_i} \in \mathfrak{g}$ is invariant under the simultaneous replacements
    \begin{align}\label{adjoint_action_G}
        T_{A_i} \to \mathrm{Ad}_g \left( T_{A_i} \right) \, .
    \end{align}
    In the case of a matrix Lie group, we may view this as a consequence of the cyclicity of the trace; under the adjoint action (\ref{adjoint_action_G}), each generator $T_{A_i}$ transforms as
    \begin{align}
        T_{A_i} \to g T_{A_i} g^{-1} \, , 
    \end{align}
    and thus
    \begin{align}\label{invariance_trace_generators}
        \Tr \left( T_{A_1} T_{A_2} \ldots T_{A_n} \right) &\to \Tr \left( g T_{A_1} g^{-1} g T_{A_2} g^{-1} \ldots g T_{A_n} g^{-1} \right) \nonumber \\
        &= \Tr \left( T_{A_1} T_{A_2} \ldots T_{A_n} \right) \, .
    \end{align}
    In these expressions, products like $g T_{A} g^{-1}$ are simply products of matrices; such an expression is interpreted as enacting a change of coordinates on the matrix $T_{A}$, parameterized by the matrix $g$. For an abstract Lie group where such a matrix product is not available, the expression $g T_{A} g^{-1}$ is not (strictly speaking) a product of any three quantities. However, for simplicity we will still use the notation $g T_{A} g^{-1}$ for $\mathrm{Ad}_g \left( T_{A} \right)$, and we note that even in such an abstract Lie group, the invariance of any trace under (\ref{adjoint_action_G}) still holds.

    Let us now specialize to the case where $g \in G$ is infinitesimal and can be parameterized as $g = e^{\epsilon} = 1 + \epsilon$ for $\epsilon \in \mathfrak{g}$, where we work to leading order near the identity, so that $g^{-1} = e^{-\epsilon} = 1 - \epsilon$. Using product notation, the invariance condition (\ref{invariance_trace_generators}) then reads
    \begin{align}\label{epsilons_ad_invariance}
        \Tr \left( T_{A_1} T_{A_2} \ldots T_{A_n} \right) = \Tr \Big( ( 1 + \epsilon ) T_{A_1} ( 1 - \epsilon ) ( 1 + \epsilon ) T_{A_2} ( 1 - \epsilon ) \ldots ( 1 + \epsilon ) T_{A_n} ( 1 - \epsilon ) \Big) \, .
    \end{align}
    We then collect all terms of order $\epsilon$ that arise from expanding the various factors in (\ref{epsilons_ad_invariance}). For each generator $T_{A_i}$, there will be one pair of terms involving
    \begin{align}
        \epsilon T_{A_i} - T_{A_i} \epsilon = [ \epsilon, T_{A_i} ] = \epsilon^C [ T_C, T_{A_i} ] = \epsilon^C \tensor{f}{_C_{A_i}^B} T_B \, ,
    \end{align}
    which appears inside of the trace, along with the other generators $T_{A_j}$ for $j \neq i$. One could have also inferred this result using the abstract notation, without appealing to the product structure of a matrix Lie group, since
    \begin{align}
        \mathrm{ad}_\epsilon T_{A_i} = [ \epsilon, T_{A_i} ] \, ,
    \end{align}
    where we write $\mathrm{ad}_X$ for the adjoint action of $X \in \mathfrak{g}$ to distingusih from $\mathrm{Ad}_g$ for $g \in G$.
    
    Demanding that the resulting $\mathcal{O} ( \epsilon )$ terms vanish for any $\epsilon$, we conclude that
    \begin{align}\label{generalized_jacobi_proved}
        \sum_{i=1}^{n} \tensor{f}{_C_{A_i}^B} \Tr \left( T_{A_1} \ldots T_{A_{i-1}} T_B T_{A_{i+1}} \ldots T_{A_n} \right) = 0 \, ,
    \end{align}
    which is what we set out to prove.    
\end{proof}
The result (\ref{generalized_jacobi}) holds generically for any trace of a product of generators in any semi-simple Lie algebra. However, one can say more by restricting attention to a \emph{symmetrized} trace of generators, which is the case that we encounter in section \ref{sec:implications}. We state the relevant consequence of Lemma \ref{lem:gen_jacobi} as a corollary.

\begin{corollary}
    Let
    \begin{align}
        M_{A_1 \ldots A_n} = \tr \left( T_{(A_1} \ldots T_{A_n)} \right) \, .
    \end{align}
    be a totally symmetrized trace of generators $T_{A_i}$ for a semi-simple Lie algebra $\mathfrak{g}$ in a representation $R$, and let $\Tr$ and $\tensor{f}{_A_B^C}$ be the trace and structure constants as before. Then
    \begin{align}
        0 = \tensor{f}{_C_{( A_1}^B} M_{A_{2} \ldots A_{n} ) B} \, .
    \end{align}
\end{corollary}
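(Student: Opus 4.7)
The plan is to deduce the corollary directly from Lemma \ref{lem:gen_jacobi} by noting that the same ad-invariance argument which established the generalized Jacobi identity for an \emph{ordered} trace applies equally well to the fully symmetrized combination $M_{A_1 \ldots A_n}$, and then converting the resulting identity into the desired symmetrized form by elementary bookkeeping.

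First I would observe that the totally symmetric tensor $M_{A_1 \ldots A_n}$ is just a sum of unsymmetrized traces, one for each permutation $\sigma \in S_n$. Because Lemma \ref{lem:gen_jacobi} applies to each of these traces, summing the identities over $\sigma$ and dividing by $n!$ yields
\begin{align}
    \sum_{i=1}^{n} \tensor{f}{_C_{A_i}^B} \, M_{A_1 \ldots A_{i-1} B A_{i+1} \ldots A_n} \;=\; 0 \, .
\end{align}
Equivalently, this is the statement that $M$ is $\mathrm{ad}$-invariant: the infinitesimal transformation $T_A \to T_A + \epsilon^C \tensor{f}{_C_A^B} T_B$ preserves the symmetrized trace.

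Next I would exploit the total symmetry of $M$ to move the replaced index $B$ from position $i$ to the final slot. This gives
\begin{align}
    \sum_{i=1}^{n} \tensor{f}{_C_{A_i}^B} \, M_{A_1 \ldots \hat{A}_i \ldots A_n B} \;=\; 0 \, ,
\end{align}
where $\hat{A}_i$ denotes omission of the $i$-th index.

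Finally I would unpack the symmetrization on the left-hand side of the corollary. Writing out the definition
\begin{align}
    n!\, \tensor{f}{_C_{(A_1}^B} M_{A_2 \ldots A_n) B} \;=\; \sum_{\sigma \in S_n} \tensor{f}{_C_{A_{\sigma(1)}}^B} \, M_{A_{\sigma(2)} \ldots A_{\sigma(n)} B} \, ,
\end{align}
and using that $M$ is symmetric in its last $n-1$ indices, each of the $(n-1)!$ permutations with fixed value $\sigma(1) = i$ contributes the same term $\tensor{f}{_C_{A_i}^B} M_{A_1 \ldots \hat{A}_i \ldots A_n B}$. Collecting these contributions identifies the symmetrized combination with $\tfrac{1}{n}$ times the vanishing sum from the previous step, establishing the claim. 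There is no genuine obstacle here: the only point requiring care is ensuring that the re-indexing of the symmetrization and the use of Lemma \ref{lem:gen_jacobi} produce compatible summations, which is handled by the total symmetry of $M$.
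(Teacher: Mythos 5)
Your proof is correct and follows essentially the same route as the paper: both deduce the result from the generalized Jacobi identity of Lemma \ref{lem:gen_jacobi} by symmetrizing over the free indices and using the total symmetry of $M$ to relocate the contracted index $B$. Your intermediate step, summing the Lemma over all orderings to exhibit the ad-invariance $\sum_{i} \tensor{f}{_C_{A_i}^B} M_{A_1 \ldots B \ldots A_n} = 0$ and then counting the $(n-1)!$-fold degeneracy, merely makes the paper's symmetrization bookkeeping more explicit.
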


\begin{proof}
    By Lemma \ref{lem:gen_jacobi}, we have
    \begin{align}
        \sum_{i=1}^{n} \tensor{f}{_C_{A_i}^B} \Tr \left( T_{A_1} \ldots T_{A_{i-1}} T_B T_{A_{i+1}} \ldots T_{A_n} \right) = 0 \, ,
    \end{align}
    and symmetrizing this equation over the indices $A_1$, $\ldots$, $A_n$ gives
    \begin{align}
        \sum_{i=1}^{n} \tensor{f}{_C_{( A_i}^B} M_{A_1 \ldots A_{i-1} | B |  A_{i+1} \ldots A_n ) } = 0 \, .
    \end{align}
    We may now use the complete symmetry of the tensor $M$ to move the index $B$ to the final slot in every term of the sum, finding
    \begin{align}
        \sum_{i=1}^{n} \tensor{f}{_C_{( A_i}^B} M_{A_1 \ldots A_{i-1} A_{i+1} \ldots A_n ) B } = 0 \, ,
    \end{align}
    and subsequently use the symmetrization over the indices $A_i$ to arrange these indices in ascending order in every term,
    \begin{align}
        \sum_{i=1}^{n} \tensor{f}{_C_{( A_1}^B} M_{A_2 \ldots A_n ) B } = 0 \, ,
    \end{align}
    which is the desired result.
    
\end{proof}

\section{Details of Poisson Bracket Calculations}\label{app:pb}

In this appendix, we present the intermediate steps of computations whose results are used in the proof of strong integrability for higher-spin auxiliary field sigma models in section \ref{sec:maillet}. Although these manipulations are standard and well-known in the literature, we include a considerable level of detail for completeness and to make the present work self-contained.

\subsection{Proofs of Basic Results}\label{sec:app_basic}

Before beginning the main calculations of the required Poisson brackets, let us collect a few simple observations which will be useful later.

\subsubsection*{\ul{\it Variant of Maurer-Cartan identity}}

It will be helpful to record an identity which is a consequence of the Maurer-Cartan identity (\ref{mc_identity}). We begin with the usual identity in the form
\begin{align}
    \partial_\alpha j_\beta - \partial_\beta j_\alpha + [ j_\alpha, j_\beta ] = 0 \, .
\end{align}
By pushing forward all of the indices on the two-dimensional spacetime $\Sigma$ to those on the target space $G$, this identity can be written as
\begin{align}
    \partial_\mu j_\nu - \partial_\nu j_\mu + [ j_\mu , j_\nu ] = 0 \, ,
\end{align}
where $j_\mu \partial_\alpha \phi^\mu = j_\alpha$. Expanding as
\begin{align}
    j_\mu = j_\mu^A T_A
\end{align}
then gives
\begin{align}
    \left( \partial_\mu j_\nu^A \right) T_A - \left( \partial_\nu j_\mu^A \right) T_A + j_\mu^A j_\nu^B [ T_A, T_B ] = 0 \, .
\end{align}
Evaluating the commutator in terms of structure constants,
\begin{align}
    [ T_A, T_B ] = \tensor{f}{_A_B^C} T_C \, ,
\end{align}
we find
\begin{align}
    \left( \partial_\mu j_\nu^A \right) T_A - \left( \partial_\nu j_\mu^A \right) T_A + j_\mu^A j_\nu^B \tensor{f}{_A_B^C} T_C = 0 \, ,
\end{align}
or after collecting terms and re-indexing,
\begin{align}
    \left( \partial_\mu j_\nu^A - \partial_\nu j_\mu^A + j_\mu^B j_\nu^C \tensor{f}{_B_C^A} \right) T_A = 0 \, .
\end{align}
This equation must hold independently for the component along each generator $T_A$, so
\begin{align}\label{mc_push_forward}
    \partial_\mu j_\nu^A - \partial_\nu j_\mu^A  = - j_\mu^B j_\nu^C \tensor{f}{_B_C^A} \, ,
\end{align}
for each $A$.

\subsubsection*{\ul{\it Delta function identities}}

Poisson brackets for continuum field theories necessarily involve Dirac delta functions, as in the fundamental Poisson bracket (\ref{fundamental_poisson}) for the auxiliary field sigma model. Therefore it will be expedient to briefly record some standard identities concerning $\delta$ functions.

Functional derivatives are defined such that
\begin{align}
    \frac{\delta f ( x ) }{\delta f ( y ) } = \delta  ( x - y ) \, .
\end{align}
One also has the functional derivative identity
\begin{align}\label{deriv_of_delta_identity}
    \frac{\delta ( \partial_x f ( x ) ) }{\delta f ( y ) } = - \partial_y \delta ( y - x ) \, ,
\end{align}
which can be proved by integration by parts,
\begin{align}
    \frac{\delta ( \partial_x f ( x ) ) }{\delta f ( y ) } &= \frac{\delta}{\delta f ( y ) } \int dz \, \left( \partial_z f ( z ) \right) \delta ( z - x ) \nonumber \\
    &= - \frac{\delta}{\delta f ( y ) } \int dz \, f ( z ) \partial_z  \delta ( z - x ) \nonumber \\
    &= - \int dz \, \delta ( z - y ) \partial_z  \delta ( z - x ) \nonumber \\
    &= - \partial_y \delta ( y - x ) \, .
\end{align}
The composition of the delta function with another function $F(x)$ behaves as
\begin{align}
    \delta ( F ( x ) ) = \sum_i \frac{\delta ( x - x_i )}{| F' ( x_i ) | } \, ,
\end{align}
where the sum runs over all simple roots of $F(x)$, which implies that $\delta ( x - y ) = \delta ( y - x )$, and by exploiting a change of variables, that
\begin{align}\label{delta_flip_sign}
    \partial_x \delta ( x - y ) = - \partial_y \delta ( y - x ) \, .
\end{align}
Finally, we will need the identity
\begin{align}\label{nice_delta_identity}
    f ( y ) \partial_x \delta ( x - y ) &= f ( x ) \partial_x \delta ( x - y ) + f' ( x ) \delta ( x - y ) \, .
\end{align}
To prove (\ref{nice_delta_identity}), we assume that the expression on the left side sits under an integral, and then Taylor expand the function $f ( y )$ to leading order near $y = x$ as
\begin{align}\label{derive_nice}
    \int dy \, f ( y ) \partial_x \delta ( x - y ) &= \int dy \,  \left( f ( x ) + f'(x) \cdot ( y - x ) \right) \partial_x \delta ( x - y ) \nonumber \\
    &= \int dy \,  \left( f ( x ) \partial_x \delta ( x - y ) - f'
     ( x ) \cdot ( y - x ) \cdot \partial_y \delta ( y - x ) \right) \nonumber \\
    &=  \int dy \,  \left( f ( x ) \partial_x \delta ( x - y ) + \delta ( y - x ) \partial_y \left( f'
     ( x ) \cdot ( y - x ) \right) \right) \nonumber \\
     &= \int dy \,  \left( f ( x ) \partial_x \delta ( x - y ) +  f' ( x )  \delta ( x - y )  \right) \, .
\end{align}
In the second line we have used (\ref{delta_flip_sign}) to exchange the $x$-derivative for a $y$-derivative, in the third step we have integrated by parts with respect to $y$, and in the final step we use $\delta ( y - x ) = \delta ( x - y )$. Higher-order terms in the Taylor expansion of $f(y)$ near $y = x$ vanish, since they are proportional to $(y - x)$ and multiply a delta function $\delta ( y - x )$ after integration by parts. Comparing the integrands of the first and last lines of (\ref{derive_nice}) gives the distributional identity (\ref{nice_delta_identity}).

\subsubsection*{\ul{\it Properties of tensor product quantities}}

In equation (\ref{doubled_X}), we have introduced the ``doubled'' quantities $X_1$ and $X_2$, which map any element $X \in \mathfrak{g}$ into $\mathfrak{g} \otimes \mathfrak{g}$ by right- or left-tensor-multiplying with the identity matrix.\footnote{Of course, the identity matrix $1$ is not an element of the Lie algebra $\mathfrak{g}$, but we are implicitly conflating $\mathfrak{g}$ with its universal enveloping algebra $U ( \mathfrak{g} )$. Properly speaking, $X_1$ and $X_2$ belong to $U ( \mathfrak{g} ) \otimes U ( \mathfrak{g} )$.}

These doubled quantities interact in a straightforward way with the Poisson bracket. If $X = X^A ( \phi , \pi ) T_A$, $Y = Y^A ( \phi , \pi ) T_A$ are any Lie algebra valued quantities which depend on the phase space variables $\phi^\mu$ and $\pi_\mu$, then the Poisson bracket of $X_1$ and $Y_2$ satisfies
\begin{align}\label{poisson_brackets_doubled_identity}
    &\left\{ X_1 ( \sigma ) , Y_2 ( \sigma' ) \right\} \nonumber \\
    &\qquad = \left\{ X ( \sigma ) \otimes 1 , 1 \otimes Y ( \sigma' ) \right\} \nonumber \\
    &\qquad = \int d \sigma'' \Bigg( \left( \frac{\delta X ( \sigma )}{\delta \phi^\rho ( \sigma'' ) } \otimes 1 \right) \cdot \left( 1 \otimes \frac{\delta Y ( \sigma' ) }{\delta \pi_\rho ( \sigma'' ) } \right)  - \left( \frac{\delta X ( \sigma )}{\delta \pi_\rho ( \sigma'' ) } \otimes 1 \right) \cdot \left( 1 \otimes \frac{\delta Y ( \sigma' ) }{\delta \phi^\rho ( \sigma'' ) } \right) \Bigg) \nonumber \\
    &\qquad = \int d \sigma'' \Bigg( \frac{\delta X^A ( \sigma )}{\delta \phi^\rho ( \sigma'' ) } \frac{\delta Y^B ( \sigma' ) }{\delta \pi_\rho ( \sigma'' ) }  - \frac{\delta X^A ( \sigma )}{\delta \pi_\rho ( \sigma'' ) } \frac{\delta Y^B ( \sigma' ) }{\delta \phi^\rho ( \sigma'' ) } \Bigg) T_A \otimes T_B \nonumber \\
    &\qquad = \left\{ X^A, Y^B \right\} T_A \otimes T_B \, .
\end{align}
Also recall from equation (\ref{casimir_defn}) that we have defined the Casimir element
\begin{align}
    C_{12} = \gamma^{AB} T_A \otimes T_B \in \mathfrak{g} \otimes \mathfrak{g} \, .
\end{align}
Note that, by the bilinearity of the tensor product, for any $X_1$ one has
\begin{align}\label{C12_X1}
    [ C_{12} , X_1 ] &= \gamma^{AB} [ T_A \otimes T_B , X^C T_C \otimes 1 ] \nonumber \\
    &= \gamma^{AB} X^C \left( \left( [ T_A, T_C ] \right) \otimes T_B \right) \nonumber \\
    &= \gamma^{AB} X^C \tensor{f}{_A_C^D} \left( T_D \otimes T_B \right) \, ,
\end{align}
while on the other hand a similar manipulation gives
\begin{align}\label{C12_X2}
    [ C_{12}, X_2 ] &= \gamma^{AB} [ T_A \otimes T_B , 1 \otimes  X^C T_C ] \nonumber \\
    &= \gamma^{AB} X^C \left( T_A \otimes \left( [T_B, T_C ] \right) \right) \nonumber \\
    &= \gamma^{AB} X^C \tensor{f}{_B_C^D} \left( T_A \otimes T_D \right) \, .
\end{align}
Comparing the last line of (\ref{C12_X1}) to the last line of (\ref{C12_X2}), using the symmetry of $\gamma^{AB}$ and anti-symmetry of $\tensor{f}{_A_B^C}$, and relabeling indices, we conclude
\begin{align}\label{C12_X1_to_C12_X2}
    [ C_{12} , X_1 ] = - [ C_{12}, X_2 ] \, .
\end{align}
There is one more simple observation which will be useful. Consider general quantities $X^A ( \sigma ) $ and $Y^B ( \sigma' )$ which have the property that their Poisson bracket is antisymmetric under the simultaneous interchange of $\sigma$ with $\sigma'$ and of $A$ with $B$, that is
\begin{align}\label{symmetry_assumption}
    \left\{ X^A ( \sigma ) , Y^B ( \sigma' ) \right\} = - \left\{ X^B ( \sigma' ) , Y^A ( \sigma ) \right\} \, .
\end{align}
Then using the above identity (\ref{poisson_brackets_doubled_identity}), along with antisymmetry of the Poisson bracket and the symmetry assumption (\ref{symmetry_assumption}), we see that
\begin{align}\label{steps_to_swap_bracket}
    \left\{ X_1 ( \sigma ) , Y_2 ( \sigma' ) \right\} &= \left\{ X^A ( \sigma ) , Y^B ( \sigma' ) \right\} T_A \otimes T_B \nonumber \\
    &= - \left\{ X^B ( \sigma' ) , Y^A ( \sigma ) \right\} T_A \otimes T_B \nonumber \\
    &= + \left\{ Y^A ( \sigma ) , X^B ( \sigma' )  \right\} T_A \otimes T_B \nonumber \\
    &= \left\{ Y_1 ( \sigma ) , X_2 ( \sigma' ) \right\} \, .
\end{align}

\subsection{Computation of Brackets Involving $j_\alpha^A$ and $\fJ_\alpha^A$}\label{app:uncontracted_brackets}

We now begin the calculations of each of the canonical (Poisson or Dirac) brackets which appear in the arguments of section \ref{sec:maillet}.

\subsubsection*{Calculation of $\{ \fJ_\tau^A, \fJ_\tau^B \}$ }

Let us first compute the Poisson bracket $\{ \fJ_\tau^A ( \sigma ) , \fJ_\tau^B ( \sigma' ) \}$. Using (\ref{J_to_pi}) this is
\begin{align}
    \{ \fJ_\tau^A ( \sigma ) , \fJ_\tau^B ( \sigma' ) \} = \eta^{AC} \eta^{BD} \left\{ j^\mu_C ( \phi ( \sigma ) ) \pi_\mu ( \sigma ) , j^\nu_D ( \phi ( \sigma' ) )  \pi_\nu ( \sigma' ) \right\} \, .
\end{align}
Using the definition (\ref{bracket_defn}) of the Poisson bracket, we then have
\begin{align}\label{J_J_bracket_step_one}
    \{ \fJ_\tau^A ( \sigma ) , \fJ_\tau^B ( \sigma' ) \} &= \eta^{AC} \eta^{BD} \int d \sigma'' \Bigg( \frac{\delta \left( j^\mu_C ( \phi (  \sigma ) ) \pi_\mu ( \sigma ) \right) }{\delta \phi^\rho ( \sigma'' )} \frac{\delta \left( j^\nu_D ( \phi ( \sigma' ) )  \pi_\nu ( \sigma' ) \right) }{\delta \pi_\rho ( \sigma'' ) } \nonumber \\
    &\hspace{100pt} - \frac{ \delta \left( j^\nu_D ( \phi (  \sigma' ) )   \pi_\nu ( \sigma' ) \right) }{\delta \phi^\rho ( \sigma'' )} \frac{\delta \left( j^\mu_C ( \phi ( \sigma ) ) \pi_\mu ( \sigma ) \right) }{\delta \pi_\rho ( \sigma'' ) } \Bigg) \nonumber \\
    &= \eta^{AC} \eta^{BD} \int d \sigma'' \Bigg( \partial_\rho j^\mu_C ( \phi ( \sigma ) ) \pi_\mu ( \sigma ) j^\rho_D ( \phi ( \sigma' ) ) \delta ( \sigma' - \sigma'' ) \delta ( \sigma - \sigma'' ) \nonumber \\
    &\hspace{100pt} - \partial_\rho j^\nu_D ( \phi ( \sigma' ) ) \pi_\nu ( \sigma' ) j^\rho_C ( \phi ( \sigma ) ) \delta ( \sigma - \sigma'' ) \delta ( \sigma' - \sigma'' ) \Bigg) \nonumber \\
    &= \eta^{AC} \eta^{BD} \delta ( \sigma - \sigma' ) \pi_\mu ( \sigma ) \Big( \partial_\rho j^\mu_C ( \phi ( \sigma ) ) j_D^\rho ( \phi ( \sigma' ) ) - \partial_\rho j^\mu_D ( \phi ( \sigma' ) ) j^\rho_C ( \phi ( \sigma ) ) \Big) \, .
\end{align}
Because all quantities in the last line of (\ref{J_J_bracket_step_one}) appear multiplying a delta function $\delta ( \sigma - \sigma' )$, from here onward we will not distinguish between quantities evaluated at $\sigma$ and those evaluated at $\sigma'$. Next we will insert a resolution of the identity in the form
\begin{align}
    \tensor{\delta}{^\nu_\mu} = j_E^\nu j^E_\mu \, ,
\end{align}
which gives
\begin{align}\label{fJ_fJ_intermediate}
    \{ \fJ_\tau^A ( \sigma ) , \fJ_\tau^B ( \sigma' ) \} &= \eta^{AC} \eta^{BD} \delta ( \sigma - \sigma' ) \pi_\nu  j_E^\nu j^E_\mu \left( \partial_\rho j^\mu_C  j_D^\rho  - \partial_\rho j^\mu_D j^\rho_C \right) \nonumber \\
    &= \eta^{AC} \eta^{BD} \delta ( \sigma - \sigma' ) \pi_\nu \left( j_E^\nu j^E_\mu \partial_\rho j^\mu_C  j_D^\rho - j_E^\nu j^E_\mu \partial_\rho j^\mu_D j^\rho_C \right) \, .
\end{align}
Note that we have
\begin{align}
    0 = \partial_\rho \tensor{\delta}{^E_D} = \partial_\rho \left( j^E_\mu j^\mu_D \right) \, ,
\end{align}
and thus
\begin{align}\label{switch_j_identity}
    j^E_\mu \partial_\rho j^\mu_D  = - j^\mu_D \partial_\rho j^E_\mu \, .
\end{align}
Applying the identity (\ref{switch_j_identity}) twice in (\ref{fJ_fJ_intermediate}), we find
\begin{align}
    \{ \fJ_\tau^A ( \sigma ) , \fJ_\tau^B ( \sigma' ) \} &= - \eta^{AC} \eta^{BD} \delta ( \sigma - \sigma' ) \pi_\nu \left( j_E^\nu j^\mu_C \partial_\rho j^E_\mu  j_D^\rho - j_E^\nu j^\mu_D \partial_\rho j^E_\mu j^\rho_C \right) \nonumber \\
    &= - \eta^{AC} \eta^{BD} \delta ( \sigma - \sigma' ) \pi_\nu \left( \partial_\rho j^E_\mu -   \partial_\mu j^E_\rho \right) j_E^\nu j^\mu_C  j_D^\rho \, ,
\end{align}
where in the last step we have relabeled indices. We now apply equation (\ref{mc_push_forward}) to obtain
\begin{align}
    \{ \fJ_\tau^A ( \sigma ) , \fJ_\tau^B ( \sigma' ) \} &= \eta^{AC} \eta^{BD} \delta ( \sigma - \sigma' ) \pi_\nu \left( j_\rho^F j_\mu^G \tensor{f}{_F_G^E} \right) j_E^\nu j^\mu_C  j_D^\rho \, \nonumber \\
    &=  \eta^{AC} \eta^{BD} \delta ( \sigma - \sigma' ) \pi_\nu  \tensor{f}{_F_G^E} j_E^\nu \tensor{\delta}{_C^G} \tensor{\delta}{_D^F} \nonumber \\
    &= \eta^{AC} \eta^{BD} \delta ( \sigma - \sigma' ) \pi_\nu \tensor{f}{_D_C^E} j^\nu_E \, .
\end{align}
Then using the formula (\ref{J_to_pi}), we find
\begin{align}
    \{ \fJ_\tau^A ( \sigma ) , \fJ_\tau^B ( \sigma' ) \} &= - \eta^{AC} \eta^{BD} \delta ( \sigma - \sigma' ) \fJ_\tau^F \tensor{f}{_D_C_F} \nonumber \\
    &= - \delta ( \sigma - \sigma' ) \tensor{f}{^B^A_F} \fJ_\tau^F \, ,
\end{align}
which finally using the antisymmetry of the structure constants yields
\begin{align}\label{final_Jt_Jt_bracket}
    \{ \fJ_\tau^A ( \sigma ) , \fJ_\tau^B ( \sigma' ) \} = \tensor{f}{^A^B_C} \fJ_\tau^C \delta ( \sigma - \sigma' ) \, .
\end{align}
Equation (\ref{final_Jt_Jt_bracket}) is our final result for the first of our useful brackets.

\subsubsection*{Calculation of $\{ \fJ_\tau^A, j_\sigma^B \}$}

Next we would like to compute
\begin{align}
    \left\{ \fJ_\tau^A ( \sigma ) , j_\sigma^B ( \sigma' ) \right\} = - \gamma^{AC} \left\{ j^\mu_C ( \phi ( \sigma ) ) \pi_\mu ( \sigma ) , j_\nu^B ( \phi ( \sigma' ) ) \frac{\partial \phi^\nu ( \sigma' )}{\partial \sigma'} \right\} \, .
\end{align}
Using the definition of the Poisson bracket in equation (\ref{bracket_defn}), we have
\begin{align}
    \left\{ \fJ_\tau^A ( \sigma ) , j_\sigma^B ( \sigma' ) \right\} &= - \gamma^{AC} \int d \sigma'' \Bigg( \frac{\delta \left( j^\mu_C ( \phi ( \sigma ) ) \pi_\mu ( \sigma ) \right) }{\delta \phi^\rho ( \sigma'' )} \frac{\delta \left( j_\nu^B ( \phi ( \sigma' ) ) \frac{\partial \phi^\nu ( \sigma' )}{\partial \sigma'} \right) }{\delta \pi_\rho ( \sigma'' ) } \nonumber \\
    &\hspace{100pt} - \frac{ \delta \left( j_\nu^B ( \phi ( \sigma' ) ) \frac{\partial \phi^\nu ( \sigma' )}{\partial \sigma'} \right) }{\delta \phi^\rho ( \sigma'' )} \frac{\delta \left( j^\mu_C ( \phi ( \sigma ) ) \pi_\mu ( \sigma ) \right) }{\delta \pi_\rho ( \sigma'' ) } \Bigg) \, .
\end{align}
The first term vanishes, and in the second we use the identity (\ref{deriv_of_delta_identity}) to write
\begin{align}\label{JT_js_bracket_intermediate}
    \left\{ \fJ_\tau^A ( \sigma ) , j_\sigma^B ( \sigma' ) \right\} &= \gamma^{AC} \int d \sigma'' \Bigg( \frac{ \delta \left( j_\nu^B ( \phi ( \sigma' ) ) \frac{\partial \phi^\nu ( \sigma' )}{\partial \sigma'} \right) }{\delta \phi^\rho ( \sigma'' )} \frac{\delta \left( j^\mu_C ( \phi ( \sigma ) ) \pi_\mu ( \sigma ) \right) }{\delta \pi_\rho ( \sigma'' ) } \Bigg) \, \nonumber \\
    &= \gamma^{AC} \int d \sigma'' \, j^\rho_C ( \phi ( \sigma ) ) \delta ( \sigma - \sigma'' ) \Bigg( \partial_\rho j_\nu^B ( \phi ( \sigma' ) ) \frac{\partial \phi^\nu ( \sigma' )}{\partial \sigma'} \delta ( \sigma' - \sigma'' ) \nonumber \\
    &\hspace{200pt} - j_\rho^B ( \phi ( \sigma' ) ) \partial^{(\sigma'')} \delta ( \sigma'' - \sigma' ) \Bigg) \nonumber \\
    &= \gamma^{AC} j_C^\rho ( \phi ( \sigma ) ) \partial_\rho j_\nu^B ( \phi ( \sigma' ) ) \frac{\partial \phi^\nu ( \sigma' )}{\partial \sigma'} \delta ( \sigma' - \sigma ) \nonumber \\
    &\qquad - \gamma^{AC} j_C^\rho ( \phi ( \sigma ) ) j_\rho^B ( \phi ( \sigma' ) ) \partial_\sigma \delta ( \sigma - \sigma' ) \, . 
\end{align}
Furthermore, using the delta function identity (\ref{nice_delta_identity}), we can express
\begin{align}
    j_\rho^B ( \phi ( \sigma' ) ) \partial_\sigma \delta ( \sigma - \sigma' ) = j_\rho^B ( \phi ( \sigma ) ) \partial_\sigma \delta ( \sigma - \sigma' ) + \partial_\sigma j_\rho^B ( \phi ( \sigma ) ) \delta ( \sigma - \sigma' ) \, .
\end{align}
Using this result in (\ref{JT_js_bracket_intermediate}) gives
\begin{align}\label{JT_js_bracket_intermediate_two}
    \left\{ \fJ_\tau^A ( \sigma ) , j_\sigma^B ( \sigma' ) \right\} &= \gamma^{AC} j_C^\rho ( \phi ( \sigma ) ) \partial_\rho j_\nu^B ( \phi ( \sigma' ) ) \frac{\partial \phi^\nu ( \sigma' )}{\partial \sigma'} \delta ( \sigma' - \sigma ) \nonumber \\
    &\qquad - \gamma^{AC} j_C^\rho ( \phi ( \sigma ) ) \left( j_\rho^B ( \phi ( \sigma ) ) \partial_\sigma \delta ( \sigma - \sigma' ) + \partial_\sigma j_\rho^B ( \phi ( \sigma ) ) \delta ( \sigma - \sigma' ) \right) \nonumber \\
    &= \gamma^{AC} j_C^\rho \partial_\rho j_\nu^B \partial_\sigma \phi^\nu \delta ( \sigma' - \sigma ) - \gamma^{AC} j_C^\rho ( \phi ( \sigma ) ) j_\rho^B ( \phi ( \sigma ) ) \partial_\sigma \delta ( \sigma - \sigma ' ) \nonumber \\
    &\qquad - \gamma^{AC} j_C^\rho \partial_\sigma j_\rho^B \delta ( \sigma - \sigma' ) \, ,
\end{align}
where we no longer distinguish between fields evaluated at $\sigma$ and those evaluated at $\sigma'$ in terms that multiply $\delta ( \sigma' - \sigma )$. We use $j^B_\rho ( \phi ( \sigma ) ) j^\rho_C ( \phi ( \sigma ) ) = \tensor{\delta}{^B_C}$ and collect terms to write
\begin{align}
    \left\{ \fJ_\tau^A ( \sigma ) , j_\sigma^B ( \sigma' ) \right\} = \gamma^{AC} j_C^\rho \partial_\sigma \phi^\nu \delta ( \sigma - \sigma' ) \left( \partial_\rho j_\nu^B - \partial_\nu j_\rho^B \right) - \gamma^{AC} \tensor{\delta}{_C^B} \partial_\sigma \delta ( \sigma - \sigma' ) \, .
\end{align}
Again applying the Maurer-Cartan identity in the form (\ref{mc_push_forward}), we find
\begin{align}
    \left\{ \fJ_\tau^A ( \sigma ) , j_\sigma^B ( \sigma' ) \right\} = \gamma^{AC} j_C^\rho \partial_\sigma \phi^\nu \delta ( \sigma - \sigma' ) \left( - j_\rho^D j_\nu^E \tensor{f}{_D_E^B} \right) - \gamma^{AB} \partial_\sigma \delta ( \sigma - \sigma' ) \, ,
\end{align}
and again using $j^\rho_C j^D_\rho = \tensor{\delta}{_C^D}$ gives
\begin{align}
    \left\{ \fJ_\tau^A ( \sigma ) , j_\sigma^B ( \sigma' ) \right\} = - \gamma^{AC} \partial_\sigma \phi^\nu \delta ( \sigma - \sigma' ) j_\nu^E \tensor{f}{_C_E^B}  - \gamma^{AB} \partial_\sigma \delta ( \sigma - \sigma' ) \, .
\end{align}
Simplifying using antisymmetry of the structure constants and $\partial_\sigma \phi^\nu j_\nu^E = j_\sigma^E$, we conclude
\begin{align}\label{JT_js_final}
    \left\{ \fJ_\tau^A ( \sigma ) , j_\sigma^B ( \sigma' ) \right\} = \tensor{f}{^A^B_C} j_\sigma^C \delta ( \sigma - \sigma' )   - \gamma^{AB} \partial_\sigma \delta ( \sigma - \sigma' ) \, ,
\end{align}
which is the final form of the brackets we wished to derive in this subsection.

Let us also remark on the behavior of (\ref{JT_js_final}) under the simultaneous replacements
\begin{align}\label{simult_replacements}
    A \longleftrightarrow B \, , \quad \sigma \longleftrightarrow \sigma' \, .
\end{align}
Since $\tensor{f}{^A^B_C} = - \tensor{f}{^B^A_C}$ and $\delta ( \sigma - \sigma' ) = \delta ( \sigma' - \sigma )$, the first term on the right side of (\ref{JT_js_final}) suffers a sign flip under these replacements. Similarly, since
\begin{align}
    \partial_\sigma \delta ( \sigma - \sigma' ) = - \partial^{(\sigma')} \delta ( \sigma' - \sigma ) \, ,
\end{align}
by equation (\ref{delta_flip_sign}), while $\gamma^{AB} = \gamma^{BA}$, the second term on the right side of (\ref{JT_js_final}) is also multiplied by $-1$ under the replacements (\ref{simult_replacements}). We conclude that
\begin{align}
    \left\{ \fJ_\tau^A ( \sigma ) , j_\sigma^B ( \sigma' ) \right\} = - \left\{ \fJ_\tau^B ( \sigma ' ) , j_\sigma^A ( \sigma ) \right\} \, ,
\end{align}
and as a result of the general argument around equation (\ref{steps_to_swap_bracket}), it follows that
\begin{align}\label{app_swap_j_FJ}
    \left\{ j_{\sigma, 1} ( \sigma ) , \fJ_{\tau, 2} ( \sigma' ) \right\} = \left\{ \fJ_{\tau, 1} ( \sigma ) , j_{\sigma, 2} ( \sigma' ) \right\} \, .
\end{align}

\subsubsection*{Calculation of $\{ j_\sigma^A, j_\sigma^B \}$}

Finally, let us consider the bracket
\begin{align}
    \left\{ j_\sigma^A ( \sigma ) , j_\sigma^B ( \sigma' ) \right\} &= \left\{ j_\mu^A ( \phi ( \sigma ) ) \frac{\partial \phi^\mu ( \sigma ) }{\partial \sigma} , j_\nu^B ( \phi ( \sigma' ) ) \frac{\partial \phi^\nu ( \sigma' )}{\partial \sigma'} \right\} \, .
\end{align}
Both of the entries in the brackets are functions only of the canonical fields $\phi^\mu$ but not of the conjugate momenta $\pi_\mu$. That is,
\begin{align}\label{jsigma_no_pie}
    \frac{\delta}{\delta \pi^\rho ( \sigma'' ) } \left( j_\mu^A ( \phi ( \sigma ) ) \frac{\partial \phi^\mu ( \sigma ) }{\partial \sigma} \right) = 0 = \frac{\delta}{\delta \pi^\rho ( \sigma'' ) } \left( j_\nu^B ( \phi ( \sigma' ) ) \frac{\partial \phi^\nu ( \sigma' )}{\partial \sigma'} \right) \, .
\end{align}
Therefore, using the definition (\ref{bracket_defn}), we find that this bracket vanishes:
\begin{align}\label{js_js_bracket_final}
    \left\{ j_\sigma^A ( \sigma ) , j_\sigma^B ( \sigma' ) \right\} = 0 \, .
\end{align}

\subsection{Contractions with Generators}

We will now translate the results derived in section \ref{app:uncontracted_brackets} to statements about Poisson brackets for ``doubled'' tensor product quantities.

First let us contract the equation (\ref{final_Jt_Jt_bracket}) with $T_A \otimes T_B$ to find
\begin{align}\label{start_Jt_Jt_bracket_doubled}
    \{ \fJ_\tau^A ( \sigma ) , \fJ_\tau^B ( \sigma' ) \} T_A \otimes T_B = \tensor{f}{^A^B_C} T_A \otimes T_B \,\fJ_\tau^C \delta ( \sigma - \sigma' ) \, .
\end{align}
Using the bilinearity of the tensor product and the cyclic symmetry of $\tensor{f}{^A^B_C}$, we have
\begin{align}\label{structure_doubled_intermediate}
    \tensor{f}{^A^B_C} T_A \otimes T_B &= T_A \otimes \left( \tensor{f}{^A^B_C} T_B \right) \nonumber \\
    &= \gamma^{AD} \gamma^{BE} T_A \otimes \left( \tensor{f}{_D_E_C} T_B \right) \nonumber \\
    &= \gamma^{AD} \gamma^{BE} T_A \otimes \left( \tensor{f}{_C_D_E} T_B \right) \, \nonumber \\
    &= \gamma^{AD} T_A \otimes \left( \tensor{f}{_C_D^B} T_B \right) \, .
\end{align}
Recall that our conventions for the structure constants are $[ T_C, T_D ] = \tensor{f}{_C_D^B} T_B$, so
\begin{align}\label{structure_constants_tensor_product_doubled}
    \tensor{f}{^A^B_C} T_A \otimes T_B = T_A \otimes [ T_C, T^A ] \, .
\end{align}
Using (\ref{structure_constants_tensor_product_doubled}) on the right side of (\ref{start_Jt_Jt_bracket_doubled}), and simplifying the left side using (\ref{poisson_brackets_doubled_identity}), we have 
\begin{align}
    \{ \fJ_{\tau,1} ( \sigma ) , \fJ_{\tau,2} ( \sigma' ) \} &= \left( T_A \otimes [ T_C, T^A ] \right) \fJ_\tau^C \delta ( \sigma - \sigma' ) \nonumber \\
    &= \left( T_A \otimes [ \fJ_\tau^C  T_C, T^A ] \right) \delta ( \sigma - \sigma' ) \nonumber \\
    &= \left( T_A \otimes [ \fJ_\tau , T^A ] \right) \delta ( \sigma - \sigma' ) \, .
\end{align}
On the other hand, one trivially has
\begin{align}\label{trivial_commutator_identity}
    T_A \otimes [ \fJ_\tau , T^A ] &= T_A \otimes ( \fJ_\tau  T^A ) - T_A \otimes (  T^A \fJ_\tau  ) \nonumber \\
    &= \left( 1 \otimes \fJ_\tau \right) \left( T_A \otimes T^A \right) - \left( T_A \otimes T^A \right) \left( 1 \otimes \fJ_\tau \right) \nonumber \\
    &= [ \fJ_{\tau, 2} , C_{12} ] \, .
\end{align}
We conclude that
\begin{align}\label{Jt_Jt_contracted_final}
    \{ \fJ_{\tau,1} ( \sigma ) , \fJ_{\tau,2} ( \sigma' ) \} = [ \fJ_{\tau, 2} , C_{12} ]  \delta ( \sigma - \sigma' ) \, .
\end{align}
Next let us contract equation (\ref{JT_js_final}) with $T_A \otimes T_B$, which yields
\begin{align}\label{JT_js_doubled_start}
    \left\{ \fJ_\tau^A ( \sigma ) , j_\sigma^B ( \sigma' ) \right\} T_A \otimes T_B = \left( \tensor{f}{^A^B_C} j_\sigma^C \delta ( \sigma - \sigma' )   - \gamma^{AB} \partial_\sigma \delta ( \sigma - \sigma' ) \right) T_A \otimes T_B \, ,
\end{align}
where using (\ref{poisson_brackets_doubled_identity}) on the left and (\ref{structure_doubled_intermediate}) in the first term on the right gives
\begin{align}
    \left\{ \fJ_{\tau, 1} ( \sigma ) , j_{\sigma, 2} ( \sigma' ) \right\} &= \left( T_A \otimes [ T_C, T^A ] \right) j_\sigma^C \delta ( \sigma - \sigma' ) - C_{12} \partial_\sigma \delta ( \sigma - \sigma' ) \, ,
\end{align}
and then further simplifying using manipulations similar to (\ref{trivial_commutator_identity}), we conclude
\begin{align}\label{Jt_js_contracted}
    \left\{ \fJ_{\tau, 1} ( \sigma ) , j_{\sigma, 2} ( \sigma' ) \right\} &= [ j_{\sigma, 2}, C_{12} ] \delta ( \sigma - \sigma' ) - C_{12} \partial_\sigma \delta ( \sigma - \sigma' ) \, .
\end{align}
Finally, by contracting (\ref{js_js_bracket_final}) against $T_A \otimes T_B$, we have
\begin{align}\label{js_js_contracted}
    \left\{ j_{\sigma, 1} ( \sigma ) , j_{\sigma, 2} ( \sigma' ) \right\} = 0 \, .
\end{align}

\section{Matrix Identities for $M = 1 + \mathrm{ad}_\Lambda$}\label{app:MAB}

In our analysis of the T-dual auxiliary field sigma model (TD-AFSM) in section \ref{sec:t_duality}, we often need to use certain identities for the matrix
\begin{align}\label{M_app_defn}
    M_{AB} = \gamma_{AB} - \tensor{f}{_A_B^C} \Lambda_C \, ,
\end{align}
where $\Lambda = \Lambda^C T_C$ is a Lie algebra valued field which is the fundamental degree of freedom in the T-dual model. We remind the reader that capital Latin letters (like $A$, $B$, $C$) are used for Lie algebra indices, as (for example) when writing a basis $T_A$ of generators for $\mathfrak{g}$, and that our conventions are such that $\gamma_{AB} = \Tr ( T_A T_B )$ and $[ T_A, T_B ] = \tensor{f}{_A_B^C} T_C$.

It is convenient to alternately use both the definition (\ref{M_app_defn}) which carries explicit Lie algebra indices, as well as the index-free definition
\begin{align}\label{index_free}
    M = 1 + \mathrm{ad}_\Lambda \, .
\end{align}
To see that these definitions are equivalent, note for any Lie algebra valued quantity $X$, the index-free operator (\ref{index_free}) acts as 
\begin{align}
    M X = X + \mathrm{ad}_\Lambda X = X + [ \Lambda, X ] \, , 
\end{align}
and after expanding in a basis of generators $T_A$,
\begin{align}
    \left( M X \right)^A T_A &= X^A T_A + [ \Lambda^A T_A , X^B T_B ] \nonumber \\
    &= X^A T_A + \Lambda^A X^B \tensor{f}{_A_B^C} T_C \nonumber \\
    &= \left(  \gamma_{AB} - \tensor{f}{_A_B^C} \Lambda_C \right) X^B T^A \, .
\end{align}
As we will see, the required identities for the operator $M$ are in fact more general, and hold for any linear operator on $\mathfrak{g}$ of the form
\begin{align}\label{general_1_plus_A}
    M = 1 + \mathcal{A} \, ,
\end{align}
where $\mathcal{A}$ is antisymmetric, or in index notation,
\begin{align}
    M_{AB} = \gamma_{AB} + \mathcal{A}_{AB} \, , \qquad \mathcal{A}_{AB} = - \mathcal{A}_{BA} \, .
\end{align}
Indeed, we recover the case (\ref{index_free}) by taking $\mathcal{A}_{AB} = - \tensor{f}{_A_B^C} \Lambda_C$, or in index-free notation, $\mathcal{A} = \mathrm{ad}_\Lambda$. In this appendix, we will derive the required identities for this operator by working with the general expression (\ref{general_1_plus_A}) and using antisymmetry of $\mathcal{A}$.

Let us first set some more notation. We will also be interested in the transpose of the matrix $M$ in its Lie algebra indices, which can be written as
\begin{align}
    \left( M^T \right)_{AB} = \gamma_{AB} - \mathcal{A}_{AB} \, , 
\end{align}
due to the antisymmetry of $\mathcal{A}$. In index-free notation, we instead have
\begin{align}\label{MT_defn}
    M^T = 1 - \mathcal{A} \, .
\end{align}
The matrix inverse of $M_{AB}$ will be denoted $\left( M^{-1} \right)^{AB}$, and satisfies
\begin{align}
    \left( M^{-1} \right)^{AB} M_{BC} = \tensor{\delta}{^A_C} = M^{AB} \left( M^{-1} \right)_{BC} \, .
\end{align}
In index-free notation, we write this inverse as
\begin{align}
    M^{-1} = \frac{1}{1 + \mathcal{A}} \, ,
\end{align}
whose action on an element $X \in \mathfrak{g}$ is defined by the Taylor series expansion of $\frac{1}{1 + x}$,
\begin{align}\label{ad_taylor}
     \left( \frac{1}{1 + \mathcal{A}} \right) X = \sum_{n=0}^{\infty} ( - 1 )^n \mathcal{A}^n X \, .
\end{align}
It will also be useful to separate the Taylor series expansion (\ref{ad_taylor}) into even and odd parts:
\begin{align}\label{ad_taylor_even_odd}
     \frac{1}{1 + \mathcal{A}} &= \sum_{n = 0}^{\infty} \left( \mathcal{A} \right)^{2n} - \sum_{n = 0}^{\infty} \left( \mathcal{A} \right)^{2n+1} \nonumber \\
     &= \frac{1}{1 - \mathcal{A}^2} - \frac{\mathcal{A}}{1 - \mathcal{A}^2} \, .
\end{align}
Using the superscript ${}^{(S)}$ for the symmetric part of an operator and ${}^{(A)}$ for the antisymmetric part, we can then write
\begin{align}\label{minv_idxfree_defns}
    M^{-1} &= \left( M^{-1} \right)^{(S)} + \left( M^{-1} \right)^{(A)} \, , \nonumber \\
    \left( M^{-1} \right)^{(S)} &= \frac{1}{1 - \mathcal{A}^2} \, , \nonumber \\
    \left( M^{-1} \right)^{(A)} &= - \frac{\mathcal{A}}{1 - \mathcal{A}^2} \, .
\end{align}
In matrix notation, we write the components of $\left( M^{-1} \right)^{(S)}$ as $\left( M^{-1} \right)^{(AB)}$ and the components of $\left( M^{-1} \right)^{(A)}$ as $\left( M^{-1} \right)^{[AB]}$, as in our conventions
\begin{align}
    \left( M^{-1} \right)^{(AB)} &= \frac{1}{2} \left( \left( M^{-1} \right)^{AB} + \left( M^{-1} \right)^{BA} \right) \, , \nonumber \\
    \left( M^{-1} \right)^{[AB]} &= \frac{1}{2} \left( \left( M^{-1} \right)^{AB} - \left( M^{-1} \right)^{BA} \right) \, .
\end{align}
Let us now turn to the identities of interest. First consider the combination $\left( M^{- 1} \right)^T \cdot M^{-1}$. Since $\left( M^{- 1} \right)^T = \left( M^T \right)^{-1}$, and $M^T = 1 - \mathcal{A}$, this product is given by
\begin{align}\label{first_identity_product}
    \left( M^{- 1} \right)^T \cdot M^{-1} &= \frac{1}{1 - \mathcal{A}} \cdot \frac{1}{1 + \mathcal{A}} \nonumber \\
    &= \frac{1}{1 - \mathcal{A}^2} \, ,
\end{align}
which is the same as the expression for $\left( M^{-1} \right)^{(S)}$ in equation (\ref{minv_idxfree_defns}). Therefore
\begin{align}\label{first_identity_index_free}
    \left( M^{- 1} \right)^T \cdot M^{-1} = \left( M^{-1} \right)^{(S)} \, ,
\end{align}
or in index notation, that
\begin{align}
    \left( M^{-1} \right)^{CA} \tensor{\left( M^{-1} \right)}{_C^B} = \left( M^{-1} \right)^{(AB)} \, .
\end{align}
We also see that the order of the two operators in the product (\ref{first_identity_product}) is immaterial, since
\begin{align}
    \left( M^{- 1} \right)^T \cdot M^{-1} = \frac{1}{1 - \mathcal{A}} \cdot \frac{1}{1 + \mathcal{A}} =  \frac{1}{1 + \mathcal{A}} \cdot \frac{1}{1 - \mathcal{A}} = M^{-1} \cdot \left( M^{- 1} \right)^T \, ,
\end{align}
so in fact we have the stronger result
\begin{align}\label{second_identity_index_free}
    \left( M^{- 1} \right)^T \cdot M^{-1} = \left( M^{-1} \right)^{(S)} = M^{-1} \cdot \left( M^{- 1} \right)^T  \, ,
\end{align}
or with explicit indices,
\begin{align}\label{stronger_result_one}
    \left( M^{-1} \right)^{CA} \tensor{\left( M^{-1} \right)}{_C^B} = \left( M^{-1} \right)^{(AB)} = \tensor{\left( M^{-1} \right)}{^A_D} \left( M^{-1} \right)^{BD} \, .
\end{align}
Now let us consider the combination
\begin{align}\label{partial_fractions}
    \left( M^{-1} \right)^T \cdot M^{-1} \cdot \left( M^{-1} \right)^T &= \frac{1}{1 - \mathcal{A}} \cdot \frac{1}{1 + \mathcal{A}} \cdot \frac{1}{1 - \mathcal{A}} \nonumber \\
    &= \frac{1}{1 - \mathcal{A}^2} \cdot \frac{1}{1 - \mathcal{A}} \nonumber \\
    &= \frac{1}{2} \left( \frac{1}{1 + \mathcal{A}} + \frac{1}{1 - \mathcal{A}} \right) \cdot \frac{1}{1 - \mathcal{A}} \nonumber \\
    &= \frac{1}{2} \frac{1}{1 - \mathcal{A}^2} + \frac{1}{2} \left( \frac{1}{1 - \mathcal{A}} \right)^2 \nonumber \\
    &= \frac{1}{2} \left( M^{-1} \right)^{(S)} + \frac{1}{2} \left( M^T \right)^{-2} \, .
\end{align}
In the third line of (\ref{partial_fractions}) we have used a partial fractions decomposition, and in the final step we have recognized the definitions of $\left( M^{-1} \right)^{(S)}$ in  (\ref{minv_idxfree_defns})  and of $M^T$ in (\ref{MT_defn}).

Restoring indices, the manipulation (\ref{partial_fractions}) gives the identity
\begin{align}\label{triple_product}
    \left( M^{-1} \right)^{C A}  \left( M^{-1} \right)_{CD} \left( M^{-1} \right)^{BD} = \frac{1}{2} \left( M^{-1} \right)^{(AB)} + \frac{1}{2} \tensor{\left( M^{-1} \right)}{_C^A} \left( M^{-1} \right)^{B C} \, .
\end{align}
We will need a few more such relations. Consider
\begin{align}
    \left( M^{-1} \right)^{(S)} \cdot M^T \cdot M \cdot \left( M^{-1} \right)^{(S)} &= \frac{1}{1 - \mathcal{A}^2} \cdot \left( 1 - \mathcal{A} \right) \cdot \left( 1 + \mathcal{A} \right) \cdot \frac{1}{1 - \mathcal{A}^2} \nonumber \\
    &= \frac{1}{1 - \mathcal{A}^2} \nonumber \\
    &= \left( M^{-1} \right)^{(S)} \, ,
\end{align}
which proves the identity
\begin{align}\label{id_4M_sym_sym}
    \left( M^{-1} \right)_{(CB)}  M^{AC} \tensor{M}{_A^D} \left( M^{-1} \right)_{(DE)} &= \left( M^{-1} \right)_{(BE)} \, .
\end{align}
The analogue of this result for the antisymmetric part of $M^{-1}$ is
\begin{align}
    \left( M^{-1} \right)^{(A)} \cdot M^T \cdot M \cdot \left( M^{-1} \right)^{(A)} &= \frac{\mathcal{A}}{1 - \mathcal{A}^2} \cdot \left( 1 - \mathcal{A} \right) \cdot \left( 1 + \mathcal{A} \right) \cdot \frac{\mathcal{A}}{1 - \mathcal{A}^2} \nonumber \\
    &= \frac{\mathcal{A}^2}{1 - \mathcal{A}^2} \nonumber \\
    &= 1 - \frac{1}{1 - \mathcal{A}^2} \, \nonumber \\
    &= 1 - \left( M^{-1} \right)^{(S)} \, ,
\end{align}
which translates to the index-notation identity
\begin{align}\label{id_4m_asm_asm}
    \left( M^{-1} \right)_{[CB]} M^{AC} \tensor{M}{_A^D} \left( M^{-1} \right)_{[DE]} &= \gamma_{BE} - \left( M^{-1} \right)_{(BE)} \, .
\end{align}
We can also consider a product with one symmetric and one antisymmetric part of $M^{-1}$,
\begin{align}
    \left( M^{-1} \right)^{(A)} \cdot M^T \cdot M \cdot \left( M^{-1} \right)^{(S)} &= - \frac{\mathcal{A}}{1 - \mathcal{A}^2} \cdot \left( 1 - \mathcal{A} \right) \cdot \left( 1 + \mathcal{A} \right) \cdot \frac{1}{1 - \mathcal{A}^2} \nonumber \\
    &= - \frac{\mathcal{A}}{1 - \mathcal{A}^2} \nonumber \\
    &= \left( M^{-1} \right)^{(A)} \, ,
\end{align}
or with indices,
\begin{align}\label{id_4m_asym_sym}
    \left( M^{-1} \right)_{[BA]} M^{C A} \tensor{M}{_C^D} \left( M^{-1} \right)_{(DE)} = \left( M^{-1} \right)_{[BE]} \, .
\end{align}
We will conclude this appendix by deriving two mixed identities for products involving both $M^{-1}$ and $\mathcal{A}$. Note that
\begin{align}
    \left( M^{-1} \right)^{(S)} \cdot \mathcal{A} &= \frac{\mathcal{A}}{1 - \mathcal{A}^2} = - \left( M^{-1} \right)^{(A)} \, ,
\end{align}
or
\begin{align}\label{MA_id_one}
    \left( M^{-1} \right)_{(AE)} \tensor{\mathcal{A}}{^A_B} = - \left( M^{-1} \right)_{[EB]} \, ,
\end{align}
and similarly
\begin{align}
    \left( M^{-1} \right)^{(A)} \cdot \mathcal{A} &= - \frac{\mathcal{A}}{1 - \mathcal{A}^2} \cdot \mathcal{A} \nonumber \\
    &= - \frac{\mathcal{A}^2}{1 - \mathcal{A}^2} \nonumber \\
    &= 1 - \left( M^{-1} \right)^{(S)} \, , 
\end{align}
which proves the index notation identity
\begin{align}\label{MA_id_two}
    \left( M^{-1} \right)_{[EA]} \tensor{\mathcal{A}}{^A_B}  = \gamma_{EB}  - \left( M^{-1} \right)_{(EB)} \, .
\end{align}

\section{Details of T-Dual AFSM Calculations}\label{app:tdual_details}

In this appendix, we collect some intermediate steps in the calculations whose results are presented in section \ref{sec:t_duality}.

\subsubsection*{\ul{\it Removal of gauge fields from the master action}}

First we will show some intermediate steps in the derivation of the T-dual auxiliary field sigma model action (\ref{TD_AFSM}) from integrating out the gauge fields in the gauge-fixed master action (\ref{AFSM_master_GF_two}). Let us divide the calculation into steps, computing each of the terms in (\ref{AFSM_master_GF_two}) sequentially after substituting the solutions (\ref{omega_solns}) for the gauge fields. The first term is
\begin{align}\label{combine_one}
    \frac{1}{2} \omega_+^A \omega_{-, A} &= - \frac{1}{2} \left( M^{-1} \right)^{AB} \left( \partial_+ \Lambda_B - 2 v_{+, B} \right) \left( M^{-1} \right)_{CA} \left( \partial_- \Lambda^C + 2 v_{-}^C \right) \nonumber \\
    &= - \frac{1}{2} \tensor{\left( M^{-2} \right)}{_C^B} \left( \partial_+ \Lambda_B - 2 v_{+, B} \right) \left( \partial_- \Lambda^C + 2 v_-^C \right) \, .
\end{align}
Likewise, the mixed terms involving auxiliary field couplings are
\begin{align}\label{mixed_one}
     \omega_+^A v_{-, A} + \omega_-^A v_{+, A}  &= \left( M^{-1} \right)^{AB} \left( \partial_+ \Lambda_B - 2 v_{+, B} \right) v_{-, A} - \left( M^{-1} \right)^{BA} \left( \partial_- \Lambda_B + 2 v_{-, B} \right) v_{+, A} \, .
\end{align}
and
\begin{align}\label{mixed_two}
    \omega_+^A \partial_- \Lambda_A - \omega_-^A \partial_+ \Lambda_A &= \left( M^{-1} \right)^{AB} \left( \partial_+ \Lambda_B - 2 v_{+, B} \right) \partial_- \Lambda_A + \left( M^{-1} \right)^{BA} \left( \partial_- \Lambda_B + 2 v_{-, B} \right) \partial_+ \Lambda_A \, ,
\end{align}
after using the $\omega$ equation of motion. The structure constant term is
\begin{align}\label{combine_two}
    \tensor{f}{_A_B^C} \omega_+^A \omega_-^B \Lambda_C &= - \tensor{f}{_A_B^C} \left( M^{-1} \right)^{AD} \left( \partial_+ \Lambda_D - 2 v_{+, D} \right) \left( M^{-1} \right)^{EB} \left( \partial_- \Lambda_E + 2 v_{-, E} \right) \Lambda_C \nonumber \\
    &= - \left( M^{-1} \right)^{AD} f_{ABC} \left( M^{-1} \right)^{EB} \left( \partial_+ \Lambda_D - 2 v_{+, D} \right) \left( \partial_- \Lambda_E + 2 v_{-, E} \right) \Lambda^C \, .
\end{align}
We see that the results can be succinctly expressed in terms of the fields 
\begin{align}
    \mathfrak{K}_+^B = \left( M^{-1} \right)^{BA} \left( \partial_+ \Lambda_A - 2 v_{+, A} \right) \, , \qquad \mathfrak{K}_-^B = - \left( M^{-1} \right)^{AB} \left( \partial_- \Lambda_A + 2 v_{-, A} \right) \, ,
\end{align}
defined in (\ref{frak_K_defn}). Using this notation, we note that the term (\ref{combine_one}) combines with (\ref{combine_two}) as
\begin{align}\hspace{-10pt}\label{L_TDAFSM_with_K}
    &\frac{1}{2} \omega_+^A \omega_{-, A} + \frac{1}{2} \tensor{f}{_A_B^C} \omega_+^A \omega_-^B \Lambda_C \nonumber \\
    &\quad = - \frac{1}{2} \tensor{\left( M^{-2} \right)}{^C^B} \left( \partial_+ \Lambda_B - 2 v_{+, B} \right) \left( \partial_- \Lambda_C + 2 v_{-, C} \right) \nonumber \\
    &\qquad \qquad - \frac{1}{2} \left( M^{-1} \right)^{AD} f_{ABC} \left( M^{-1} \right)^{EB} \left( \partial_+ \Lambda_D - 2 v_{+, D} \right) \left( \partial_- \Lambda_E + 2 v_{-, E} \right) \Lambda^C  \nonumber \\
    &\quad = \frac{1}{2} \mathfrak{K}_+^C \mathfrak{K}_C + \frac{1}{2} f_{ABC} \Lambda^C \mathfrak{K}_+^A \mathfrak{K}_-^B \nonumber \\
    &\quad = \frac{1}{2} \left( \gamma_{AB} + f_{ABC} \Lambda^C \right) \mathfrak{K}_+^A \mathfrak{K}_-^B \nonumber \\
    &\quad = \frac{1}{2} \mathfrak{K}_+^A M_{BA} \mathfrak{K}_-^B \, .
\end{align}
Similarly, the mixed contributions (\ref{mixed_one}) and (\ref{mixed_two}) can also be expressed in terms of $\mathfrak{K}_{\pm}$. Assembling the various contributions which arise in this way, one finds
\begin{align}
    \mathcal{L}_{\text{TD-AFSM}} &= \frac{1}{2} \mathfrak{K}_+^A M_{BA} \mathfrak{K}_-^B + \mathfrak{K}_+^A \left( v_{-, A} + \frac{1}{2} \partial_- \Lambda_A \right) + \mathfrak{K}_-^A \left( v_{+, A}  - \frac{1}{2} \partial_+ \Lambda_A  \right) \nonumber \\
    &\qquad \qquad + v_+^A v_-^A + E ( \nu_2 , \ldots , \nu_N ) \, \nonumber \\
    &= \frac{1}{2} \mathfrak{K}_+^A M_{BA} \mathfrak{K}_-^B - \frac{1}{2} \mathfrak{K}_+^A M_{BA} \mathfrak{K}_{-}^B - \frac{1}{2} \mathfrak{K}_-^A M_{AB} \mathfrak{K}_{+}^{B} + v_+^A v_-^A + E ( \nu_2 , \ldots , \nu_N ) \nonumber \\
    &= - \frac{1}{2} \mathfrak{K}_+^A M_{BA} \mathfrak{K}_-^B + v_+^A v_-^A + E ( \nu_2 , \ldots , \nu_N ) \, .
\end{align}
Alternatively, if we expand out the $\mathfrak{K}_{\pm}$ using their definition, we get
\begin{align}\label{remove_gauge_fields_result}
    \mathcal{L}_{\text{TD-AFSM}} &= \frac{1}{2} \left( \partial_+ \Lambda_A - 2 v_{+, A} \right) \left( M^{-1} \right)^{BA} \left( \partial_- \Lambda_A + 2 v_{-, A} \right)  + v_+^A v_{-, A} + E ( \nu_2 , \ldots, \nu_N ) \nonumber \\
    &= \frac{1}{2} \partial_+ \Lambda^A \left( M^{-1} \right)^{BA} - 2 v_{+, A} \left( M^{-1} \right)^{BA} v_{-, A} - v_{+, A} \left( M^{-1} \right)^{BA} \partial_- \Lambda_A  \nonumber \\
    &\qquad + \partial_+ \Lambda_A \left( M^{-1} \right)^{BA} v_{-, A} + v_+^A v_{-,A} + E ( \nu_2 , \ldots, \nu_N ) \, .
\end{align}
This matches (\ref{TD_AFSM}), as desired.

\subsubsection*{\ul{\it Conservation of $k_{\pm}$ on-shell}}

We now demonstrate that, when the equations of motion for both $\Lambda_A$ and $v_\alpha$ in the T-dual AFSM are satisfied, the field $k_{\pm}$ defined in equation (\ref{kpmA_defn}) is conserved. 

To show this, it will be convenient to first invert the definition (\ref{frak_K_defn}) of $\mathfrak{K}_{\pm}^A$, which gives
\begin{align}\label{inverted_frak_K}
    M_{CB} \CC{K}_+^B &= \partial_+ \Lambda_C - 2 v_{+, C} \, , \nonumber \\
    M_{BC} \CC{K}_-^B &= - \partial_- \Lambda_C - 2 v_{-, C} \, .
\end{align}
\hspace{-3pt}\mbox{We differentiate the first line of (\ref{inverted_frak_K}) with respect to $\sigma^-$ and the second with respect to $\sigma^+$:}
\begin{align}\label{k_cons_intermediate}
    \left( \partial_- M_{CB} \right) \CC{K}_+^B + M_{CB} \partial_- \CC{K}_+^B &= \partial_+ \partial_- \Lambda_C - 2 \partial_- v_{+, C} \, , \nonumber \\
    \left( \partial_+ M_{BC} \right) \CC{K}_-^B + M_{BC} \partial_+ \CC{K}_-^B &= - \partial_+ \partial_- \Lambda_C - 2 \partial_+ v_{-, C} \, .
\end{align}
On the other hand,
\begin{align}
    \partial_- M_{AB} = - \tensor{f}{_A_B^C} \partial_- \Lambda_C \, , \qquad \partial_+ M_{BA} = - \tensor{f}{_B_A^C} \partial_+ \Lambda_C \, .
\end{align}
Using these equations in (\ref{k_cons_intermediate}) gives
\begin{align}\label{k_cons_intermediate_two}
    - \tensor{f}{_C_B^D} \partial_- \Lambda_D \CC{K}_+^B + M_{CB} \partial_- \CC{K}_+^B &= \partial_+ \partial_- \Lambda_C - 2 \partial_- v_{+, C} \, , \nonumber \\
    - \tensor{f}{_B_C^D} \partial_+ \Lambda_D \CC{K}_-^B + M_{BC} \partial_+ \CC{K}_-^B &= - \partial_+ \partial_- \Lambda_C - 2 \partial_+ v_{-, C} \, .
\end{align}
On the left side of each equation in (\ref{k_cons_intermediate_two}) we use $M_{AB} = \gamma_{AB} - \tensor{f}{_A_B^C} \Lambda_C$ to find
\begin{align}\label{k_cons_intermediate_three}
    \partial_- \CC{K}_{+, C} - \tensor{f}{_C_B^D} \Lambda_D \partial_- \CC{K}_+ &= \partial_+ \partial_- \Lambda_C - 2 \partial_- v_{+, C} + \tensor{f}{_C_B^D} \partial_- \Lambda_D \CC{K}_+^B  \, , \nonumber \\
    \partial_+ \CC{K}_{-, C} - \tensor{f}{_B_C^D} \Lambda_D \partial_+ \CC{K}_-^B &= - \partial_+ \partial_- \Lambda_C - 2 \partial_+ v_{-, C} + \tensor{f}{_B_C^D} \partial_+ \Lambda_D \CC{K}_-^B  \, .
\end{align}
Summing the two previous equations gives
\begin{align}
    \partial_- \left( \CC{K}_{+, C} + 2 v_{+, C} \right) + \partial_+ \left( \CC{K}_{-, C} + 2 v_{-, C} \right) &= \tensor{f}{_C_B^D} \Lambda_D \partial_- \CC{K}_+ + \tensor{f}{_B_C^D} \Lambda_D \partial_+ \CC{K}_-^B  + \tensor{f}{_C_B^D} \partial_- \Lambda_D \CC{K}_+^B \nonumber \\
    &\qquad + \tensor{f}{_B_C^D} \partial_+ \Lambda_D \CC{K}_-^B  \, . 
\end{align}
On the left side, we recognize the definition of $k_{\pm}$ in equation (\ref{kpmA_defn}). On the right, we replace the derivatives $\partial_{\pm} \Lambda_D$ using the inverted definitions (\ref{inverted_frak_K}). Doing this, one finds
\begin{align}\label{k_cons_intermediate_later}
    - \left( \partial_- k_{+, C} + \partial_+ k_{-, C} \right) &= \tensor{f}{_C_B^D} \Lambda_D \partial_- \CC{K}_+ + \tensor{f}{_B_C^D} \Lambda_D \partial_+ \CC{K}_-^B  - \tensor{f}{_C_B^D} \left( M_{ED} \CC{K}_-^E + 2 v_{-, D} \right) \CC{K}_+^B \nonumber \\
    &\qquad + \tensor{f}{_B_C^D} \left( M_{D E} \CC{K}_+^E + 2 v_{+, D} \right)  \CC{K}_-^B \,.
\end{align}
We now assume that the equation of motion (\ref{TD_frakK_flat}) for $\Lambda_A$ is satisfied, which means that we may simplify the first two terms using the flatness equation:
\begin{align}
    \tensor{f}{_C_B^D} \Lambda_D \partial_- \CC{K}_+ + \tensor{f}{_B_C^D} \Lambda_D \partial_+ \CC{K}_-^B  &= \tensor{f}{_B_C^D} \Lambda_D  \left( \partial_+ \CC{K}_-^B - \partial_- \CC{K}_+^B \right) \nonumber \\
    &= - \tensor{f}{_B_C^D} \Lambda_D  \tensor{f}{_E_F^B} \CC{K}_+^E \CC{K}_-^F \, .
\end{align}
Substituting this into (\ref{k_cons_intermediate_later}) gives
\begin{align}\label{k_cons_intermediate_later_two}
    \partial_- k_{+, C} + \partial_+ k_{-, C} &= \tensor{f}{_B_C^D} \Lambda_D  \tensor{f}{_E_F^B} \CC{K}_+^E \CC{K}_-^F + \tensor{f}{_C_B^D} \left( M_{ED} \CC{K}_-^E + 2 v_{-, D} \right) \CC{K}_+^B \nonumber \\
    &\qquad - \tensor{f}{_B_C^D} \left( M_{D E} \CC{K}_+^E + 2 v_{+, D} \right)  \CC{K}_-^B \, . 
\end{align}
Let us focus on the two terms on the right side which involve auxiliary fields. We note that
\begin{align}
    \tensor{f}{_C_B^D}  v_{-, D} \CC{K}_+^B - \tensor{f}{_B_C^D} 2 v_{+, D} \CC{K}_-^B &= \left( [ \CC{K}_+ , v_- ] + [ \CC{K}_- , v_+ ] \right)^C \nonumber \\
    &\deq 0 \, ,
\end{align}
where in the last step we have used the implication (\ref{tdual_fundamental_commutator}) of the auxiliary field equation of motion. Therefore, the terms containing auxiliary fields in (\ref{k_cons_intermediate_later_two}) cancel with one another. 

Next let us consider the terms in (\ref{k_cons_intermediate_later_two}) which involve $M_{ED}$ and $M_{DE}$. The expansion $M_{ED} = \gamma_{ED} + \tensor{f}{_E_D^F} \Lambda_F$ will generate two terms, one involving the Killing-Cartan form $\gamma$ and another involving structure constants. However, the terms involving $\gamma$ will drop out by symmetry. Indeed, these terms are
\begin{align}
    \tensor{f}{_C_B^D} \gamma_{ED} \CC{K}_-^E \CC{K}_+^B - \tensor{f}{_B_C^D} \gamma_{D E} \CC{K}_+^E\CC{K}_-^B = \tensor{f}{_C_B_D} \left( \CC{K}_{-}^{D} \CC{K}_+^B + \CC{K}_{+}^{D} \CC{K}_-^B \right) = 0 \, ,
\end{align}
since the combination in parentheses is symmetric in $B$ and $D$, while the structure constants are antisymmetric. Therefore, after removing both the $v_{\pm}$ terms and the $\gamma_{ED}$ terms, the only surviving contributions in (\ref{k_cons_intermediate_later_two}) are
\begin{align}\label{k_cons_intermediate_later_three}
    \partial_- k_{+, C} + \partial_+ k_{-, C} &= \tensor{f}{_B_C^D} \Lambda_D  \tensor{f}{_E_F^B} \CC{K}_+^E \CC{K}_-^F + \tensor{f}{_C_B^D} \tensor{f}{_D_E^F} \Lambda_F \CC{K}_-^E \CC{K}_+^B + \tensor{f}{_B_C^D} \tensor{f}{_D_E^F} \Lambda_F \CC{K}_+^E  \CC{K}_-^B \, \nonumber \\
    &= \tensor{f}{^D_B_C} \tensor{f}{_E_F^B} \left( \Lambda_D \CC{K}_+^E \CC{K}_-^F + \mathfrak{K}_{+, D} \mathfrak{K}_-^E \Lambda^F + \mathfrak{K}_{-, D} \Lambda^E \mathfrak{K}_+^F \right) \, ,
\end{align}
where we have used the symmetry of the structure constants and relabeled indices.

We now recognize this combination as the one associated with the Jacobi identity,
\begin{align}
    [ X , [ Y, Z ] ] + [ Y, [ Z, X ] ] + [ Z, [ X, Y ] ] = 0 \, ,
\end{align}
or in components,
\begin{align}
    \tensor{f}{_A_B^C} \tensor{f}{^B_D_E} \left( X^A  Y^D Z^E + Y^A  Z^D X^E + Z^A X^D Y^E \right) = 0 \, .
\end{align}
This allows us to conclude that
\begin{align}\label{k_cons_final}
    \partial_- k_{+, C} + \partial_+ k_{-, C} = 0 \, , 
\end{align}
when both the $\Lambda_A$ equation of motion and the auxiliary field equation of motion are satisfied.

\subsubsection*{\ul{\it Relations for momenta from type 1 generating functional}}

We now verify that the type 1 generating functional (\ref{TD_PCM_gen_F}) indeed leads to the relations (\ref{tdual_canonical}) for the canonical momenta in terms of the fields of the two models.

We recall that, for a type 1 generating functional, the momenta are given in terms of the functional derivatives listed in equation (\ref{type_1_field_theory_momenta}). The momentum $\Pi^A$ which is canonically conjugate to $\Lambda_A$, for instance, is
\begin{align}
    \Pi^A &= - \frac{\delta}{\delta \Lambda_A ( \sigma' )} \left( - \int d \sigma \, \Lambda_B ( \sigma ) j_\sigma^B ( \sigma ) \right) \nonumber \\
    &= \int d \sigma \, j_\sigma^A ( \sigma ) \delta ( \sigma - \sigma' ) \nonumber \\
    &= j_\sigma^A ( \sigma' ) \, ,
\end{align}
which can also be expressed using the pull-back map $\partial_\sigma \phi^\mu$ as
\begin{align}
    \Pi_A = j_\mu^A \partial_\sigma \phi^\mu \, .
\end{align}
This confirms the expression on the second line of (\ref{tdual_canonical}).

To find the corresponding result for $\pi_\mu$, we consider a small variation in the target-space coordinates $\phi^\mu ( \sigma, \tau ) $. Using the inverse vielbein $j^\mu_A$, we can convert between an infinitesimal fluctuation $\delta \phi^\mu$ of the target-space coordinates and a variation $\epsilon = \epsilon^A T_A \in \mathfrak{g}$ as
\begin{align}
    \delta \phi^\mu = j^\mu_A \epsilon^A \, .
\end{align}
Furthermore, we have already seen around equation (\ref{afsm_variation}) that, under such a fluctuation by $\epsilon$, the associated variation of $j_\sigma$ is
\begin{align}
    \delta j_\sigma = [ j_\sigma,  \epsilon ] + \partial_\sigma \epsilon \, .
\end{align}
Thus the variation of the generating functional $F$ is
\begin{align}
    \delta F &= - \int d \sigma \, \tr \left( \left( [ j_\sigma, \epsilon ] + \partial_\sigma \epsilon \right) \Lambda \right) \nonumber \\
    &= \int d \sigma \, \tr \left( \epsilon \left( \partial_\sigma \Lambda - [ \Lambda, j_\sigma ] \right) \right) \, ,
\end{align}
where we have integrated by parts and used the relation (\ref{cycle_trace_commutators}).

Writing the commutator using explicit Lie algebra indices and then converting back to target-space indices using the vielbein as $\epsilon^A = j^A_\mu \delta \phi^\mu$, this is
\begin{align}
    \delta F &= \int d \sigma \, \left( \epsilon^A \left( \partial_\sigma \Lambda^A - \tensor{f}{_B_C^A} \Lambda^B j_\sigma^C \right) \right) \nonumber \\
    &= \int d \sigma \, \left( \delta \phi^\mu j_\mu^A \left( \partial_\sigma \Lambda_A - \tensor{f}{_B_C_A} \Lambda^B j_\sigma^C \right) \right) \, .
\end{align}
We conclude that
\begin{align}
    \pi_\mu = \frac{\delta F}{\delta \phi^\mu} = j_\mu^A \partial_\sigma \Lambda_A - j_\mu^A \tensor{f}{_B_C_A} \Lambda^B j_\sigma^C \, ,
\end{align}
which can be written as
\begin{align}\label{final_F_to_pi_check}
    \pi_\mu = j_\mu^A \partial_\sigma \Lambda_A - \tensor{f}{_A_B^C} j_\nu^A j_\mu^B \Lambda_C \partial_\sigma \phi^\nu \, ,
\end{align}
in agreement with the first line of (\ref{tdual_canonical}).

\subsubsection*{\ul{\it Derivation of Hamiltonian for T-dual auxiliary field sigma model}}

Let us perform the Legendre transform of the Lagrangian for the T-dual auxiliary field sigma model, whose action was derived in equation (\ref{TD_AFSM}). We repeat this Lagrangian for convenience:
\begin{align}\label{td_afsm_lag_appendix}
    \mathcal{L}_{\text{TD-AFSM}} &= \frac{1}{2} ( \partial_- \Lambda_C )  \left( M^{-1} \right)^{CD} ( \partial_+ \Lambda_D ) - 2 v_{-, C} \left( M^{-1} \right)^{CD} v_{+, D}  \nonumber \\
    &\qquad + v_{-, C} \left( M^{-1} \right)^{CD} \partial_+ \Lambda_D  - \partial_- \Lambda_C \left( M^{-1} \right)^{CD}  v_{+, D} + v_+^A \gamma_{AB} v_-^B 
     \nonumber \\
     &\qquad
     + E ( \nu_2 , \ldots , \nu_N ) \, .
\end{align}
In our conventions for light-cone coordinates, given any vectors $u_{\alpha, C}$ and $w_{\alpha, C}$, one has
\begin{align}
    u_{-, C} \left( M^{-1} \right)^{CD} w_{+, D} &= \left( u_{\tau, C} - u_{\sigma, C} \right) \left( M^{-1} \right)^{CD} \left( w_{\tau, D} + w_{\sigma, D} \right) \nonumber \\
    &= u_{\tau, C} \left( M^{-1} \right)^{CD} w_{\tau, D} - u_{\sigma, C} \left( M^{-1} \right)^{CD} w_{\sigma, D} + u_{\tau, C} \left( M^{-1} \right)^{CD} w_{\sigma, D} \nonumber \\
    &\qquad - u_{\sigma, C} \left( M^{-1} \right)^{CD} w_{\tau, D} \, .
\end{align}
Using this to expand the Lagrangian (\ref{td_afsm_lag_appendix}) explicitly gives
\begin{align}\hspace{-10pt}
    \mathcal{L}_{\text{TD-AFSM}} &= \frac{1}{2} ( \partial_\tau \Lambda_C ) \left( M^{-1} \right)^{CD} ( \partial_\tau \Lambda_D ) - \frac{1}{2} ( \partial_\sigma \Lambda_C ) \left( M^{-1} \right)^{CD} ( \partial_\sigma \Lambda_D ) + ( \partial_\tau \Lambda_C ) \left( M^{-1} \right)^{[CD]} ( \partial_\sigma \Lambda_D ) \nonumber \\
    &\qquad - 2 v_{\tau, C} \left( M^{-1} \right)^{CD} v_{\tau, D} + 2 v_{\sigma, C} \left( M^{-1} \right)^{CD} v_{\sigma, D} - 4 v_{\tau, C} \left( M^{-1} \right)^{[CD]} v_{\sigma, D} \nonumber \\
    &\qquad + v_{\tau, C} \left( M^{-1} \right)^{CD} ( \partial_\tau \Lambda_D ) - v_{\sigma, C} \left( M^{-1} \right)^{CD} ( \partial_\sigma \Lambda_D ) + v_{\tau, C} \left( M^{-1} \right)^{CD} ( \partial_\sigma \Lambda_D ) \nonumber \\
    &\qquad \qquad \qquad - v_{\sigma, C} \left( M^{-1} \right)^{CD} ( \partial_\tau \Lambda_D )\nonumber \\
    &\qquad - ( \partial_\tau \Lambda_C ) \left( M^{-1} \right)^{CD} v_{\tau, D} + ( \partial_\sigma \Lambda_C ) \left( M^{-1} \right)^{CD} v_{\sigma, D} - ( \partial_\tau \Lambda_C ) \left( M^{-1} \right)^{CD} v_{\sigma, D} \nonumber \\
    &\qquad \qquad \qquad + ( \partial_\sigma \Lambda_C ) \left( M^{-1} \right)^{CD} v_{\tau, D} \nonumber \\
    &\qquad + v_\tau^A v_{\tau, A} - v_\sigma^A v_{\sigma, A} + E ( \nu_2 , \ldots , \nu_N ) \, .
\end{align}
The momentum $\Pi^C$ which is canonically conjugate to $\Lambda^C$ is
\begin{align}
    \Pi^C &= \frac{\partial \mathcal{L}}{\partial ( \partial_\tau \Lambda_C ) } \nonumber \\
    &= \left( M^{-1} \right)^{(CD)} ( \partial_\tau \Lambda_D ) + \left( M^{-1} \right)^{[CD]} ( \partial_\sigma \Lambda_D ) + ( v_{\tau, D} - v_{\sigma, D} ) \left( M^{-1} \right)^{DC}  \nonumber \\
    &\qquad \qquad - \left( M^{-1} \right)^{CD} ( v_{\tau, D} + v_{\sigma, D} )\, .
\end{align}
The Hamiltonian is then defined by the Legendre transform
\begin{align}
    &\mathcal{H}_{\text{TD-AFSM}} = \Pi_C ( \partial_\tau \Lambda^C ) - \mathcal{L} \nonumber \\
    &= ( \partial_\tau \Lambda_C ) \left( M^{-1} \right)^{(CD)} ( \partial_\tau \Lambda_D ) + ( \partial_\tau \Lambda_C ) \left( M^{-1} \right)^{[CD]} ( \partial_\sigma \Lambda_D )  + (  v_{\tau, C}  - v_{\sigma, C} ) \left( M^{-1} \right)^{CD} ( \partial_\tau \Lambda_D ) \nonumber \\
    &\quad - ( \partial_\tau \Lambda_C ) \left( M^{-1} \right)^{CD} ( v_{\tau, D}  +  v_{\sigma, D} )  - \Bigg[ \frac{1}{2} ( \partial_\tau \Lambda_C ) \left( M^{-1} \right)^{CD} ( \partial_\tau \Lambda_D ) - \frac{1}{2} ( \partial_\sigma \Lambda_C ) \left( M^{-1} \right)^{CD} ( \partial_\sigma \Lambda_D ) \nonumber \\
    &\quad + ( \partial_\tau \Lambda_C ) \left( M^{-1} \right)^{[CD]} ( \partial_\sigma \Lambda_D )  - 2 v_{\tau, C} \left( M^{-1} \right)^{CD} v_{\tau, D} + 2 v_{\sigma, C} \left( M^{-1} \right)^{CD} v_{\sigma, D}  \nonumber \\
    &\quad - 4 v_{\tau, C} \left( M^{-1} \right)^{[CD]} v_{\sigma, D} + v_{\tau, C} \left( M^{-1} \right)^{CD} ( \partial_\tau \Lambda_D )  - v_{\sigma, C} \left( M^{-1} \right)^{CD} ( \partial_\sigma \Lambda_D ) \nonumber \\
    &\quad \quad \quad + v_{\tau, C} \left( M^{-1} \right)^{CD} ( \partial_\sigma \Lambda_D )  -  v_{\sigma, C} \left( M^{-1} \right)^{CD} ( \partial_\tau \Lambda_D )  \nonumber \\
    &\quad - ( \partial_\tau \Lambda_C ) \left( M^{-1} \right)^{CD} v_{\tau, D}   + ( \partial_\sigma \Lambda_C ) \left( M^{-1} \right)^{CD} v_{\sigma, D} -  ( \partial_\tau \Lambda_C ) \left( M^{-1} \right)^{CD} v_{\sigma, D}  \nonumber \\
    &\quad \quad \quad + ( \partial_\sigma \Lambda_C ) \left( M^{-1} \right)^{CD} v_{\tau, D} + v_\tau^A v_{\tau, A} - v_\sigma^A v_{\sigma, A} + E ( \nu_2 , \ldots , \nu_N ) \Bigg] \, ,
\end{align}
and after simplifying, this becomes
\begin{align}\label{cancel_hamiltonian}
    \mathcal{H}_{\text{TD-AFSM}} &= \frac{1}{2} ( \partial_\tau \Lambda_C ) \left( M^{-1} \right)^{CD} ( \partial_\tau \Lambda_D ) + \frac{1}{2} ( \partial_\sigma \Lambda_D ) \left( M^{-1} \right)^{CD} ( \partial_\sigma \Lambda_D ) \nonumber \\
    &\qquad + 2 v_{\tau, C} \left( M^{-1} \right)^{CD} v_{\tau, D} - 2 v_{\sigma, C} \left( M^{-1} \right)^{CD} v_{\sigma, D} + 4 v_{\tau, C} \left( M^{-1} \right)^{[CD]} v_{\sigma, D} \nonumber \\
    &\qquad + ( v_{\sigma, C} - v_{\tau, C} ) \left( M^{-1} \right)^{CD} ( \partial_\sigma \Lambda_D ) - ( \partial_\sigma \Lambda_C ) \left( M^{-1} \right)^{CD} ( v_{\sigma, D} + v_{\tau, D} ) \nonumber \\
    &\qquad - v_{\tau}^A v_{\tau, A} + v_{\sigma}^A v_{\sigma, A} - E ( \nu_2 , \ldots , \nu_N ) \, ,
\end{align}
which can be rewritten as
\begin{align}\label{cancel_hamiltonian_rewritten}
    \mathcal{H}_{\text{TD-AFSM}} &= \frac{1}{2} ( \partial_\tau \Lambda_C ) \left( M^{-1} \right)^{CD} ( \partial_\tau \Lambda_D ) + \frac{1}{2} ( \partial_\sigma \Lambda_D ) \left( M^{-1} \right)^{CD} ( \partial_\sigma \Lambda_D ) \nonumber \\
    &\qquad + 2 v_{\tau, C} \left( M^{-1} \right)^{CD} v_{\tau, D} - 2 v_{\sigma, C} \left( M^{-1} \right)^{CD} v_{\sigma, D} + 4 v_{\tau, C} \left( M^{-1} \right)^{[CD]} v_{\sigma, D} \nonumber \\
    &\qquad + 2 ( v_{\sigma, C} ) \left( M^{-1} \right)^{[CD]} ( \partial_\sigma \Lambda_D ) - 2 ( v_{\tau, C} ) \left( M^{-1} \right)^{(CD)} ( \partial_\sigma \Lambda_D ) \nonumber \\
    &\qquad - v_{\tau}^A v_{\tau, A} + v_{\sigma}^A v_{\sigma, A} - E ( \nu_2 , \ldots , \nu_N ) \, .
\end{align}
Next, our task is to express everything in terms of the conjugate momentum $\Pi_A$. To do this, it is convenient to first rewrite
\begin{align}\hspace{-10pt}\label{tdpcm_pi}
    \Pi^C &= \left( M^{-1} \right)^{(CD)} ( \partial_\tau \Lambda_D ) + \left( M^{-1} \right)^{[CD]} ( \partial_\sigma \Lambda_D ) + ( v_{\tau, D} - v_{\sigma, D} ) \left( M^{-1} \right)^{DC}  - \left( M^{-1} \right)^{CD} ( v_{\tau, D} + v_{\sigma, D} ) \nonumber \\
    &= \left( M^{-1} \right)^{(CD)} ( \partial_\tau \Lambda_D ) + \left( M^{-1} \right)^{[CD]} ( \partial_\sigma \Lambda_D ) + 2 v_{\tau, D} \left( M^{-1} \right)^{[DC]} - 2 v_{\sigma, D} \left( M^{-1} \right)^{(CD)} \nonumber \\
    &= \left( M^{-1} \right)^{(CD)} \left( \partial_\tau \Lambda_D - 2 v_{\sigma, D} \right) + \left( M^{-1} \right)^{[CD]} \left( \partial_\sigma \Lambda_D - 2 v_{\tau, D} \right) \, .
\end{align}
We now claim that (\ref{cancel_hamiltonian_rewritten}) is equivalent to
\begin{align}\label{final_tdpcm_ham_app}
    \mathcal{H}_{\text{claim}} &= \frac{1}{2} \Pi_C M^{AC} \tensor{M}{_A^D} \Pi_D - \Pi_A \tensor{f}{^A_B_C} \Lambda^C \left( \partial_\sigma \Lambda^B - 2 v_\tau^B \right) + \frac{1}{2} ( \partial_\sigma \Lambda^A ) ( \partial_\sigma \Lambda_A )  \nonumber \\
    &\qquad 
    + v_\tau^A v_{\tau, A} + v_\sigma^A v_{\sigma, A}
    + 2 v_\sigma^A \Pi_A  - 2 v_\tau^A \partial_\sigma \Lambda_A - E ( \nu_2 , \ldots , \nu_N ) \, .
\end{align}
To show this, we substitute the expression (\ref{tdpcm_pi}) for $\Pi_C$ into (\ref{final_tdpcm_ham_app}) and confirm that the result reduces to (\ref{cancel_hamiltonian_rewritten}). Let us break the calculation into steps and compute each combination involving $\Pi$ in the Hamiltonian (\ref{final_tdpcm_ham_app}). The term quadratic in momenta is
\begin{align}\hspace{-10pt}
    \frac{1}{2} \Pi_C M^{AC} \tensor{M}{_A^D} \Pi_D &= \frac{1}{2} \left( \left( M^{-1} \right)_{(CB)} \left( \partial_\tau \Lambda^B - 2 v_{\sigma}^{B} \right) + \left( M^{-1} \right)_{[CB]} \left( \partial_\sigma \Lambda^B - 2 v_{\tau}^{B} \right) \right) \nonumber \\
    &\qquad \qquad M^{AC} \tensor{M}{_A^D} \left( \left( M^{-1} \right)_{(DE)} \left( \partial_\tau \Lambda^E - 2 v_{\sigma}^{E} \right) + \left( M^{-1} \right)_{[DE]} \left( \partial_\sigma \Lambda^E - 2 v_{\tau}^{E} \right) \right) \, \nonumber \\
    &= \frac{1}{2} \left( \partial_\tau \Lambda^B - 2 v_{\sigma}^{B} \right) \left( M^{-1} \right)_{(CB)}  M^{AC} \tensor{M}{_A^D} \left( M^{-1} \right)_{(DE)} \left( \partial_\tau \Lambda^E - 2 v_{\sigma}^{E} \right) \nonumber \\
    &\quad + \frac{1}{2} \left( \partial_\sigma \Lambda^B - 2 v_{\tau}^{B} \right) \left( M^{-1} \right)_{[CB]} M^{AC} \tensor{M}{_A^D} \left( M^{-1} \right)_{[DE]} \left( \partial_\sigma \Lambda^E - 2 v_{\tau}^{E} \right) \nonumber \\
    &\quad + \left( \partial_\tau \Lambda^B - 2 v_{\sigma}^{B} \right)  \left( M^{-1} \right)_{[BA]} M^{C A} \tensor{M}{_C^D} \left( M^{-1} \right)_{(DE)} \left( \partial_\sigma \Lambda^E - 2 v_{\tau}^{E} \right)  \, .
\end{align}
We use the identities (\ref{id_4M_sym_sym}), (\ref{id_4m_asm_asm}), and (\ref{id_4m_asym_sym}) to simplify the products of matrices, finding
\begin{align}\label{Pi_Pi_coupling_simp}
    \frac{1}{2} \Pi_C M^{AC} \tensor{M}{_A^D} \Pi_D &= \frac{1}{2} \left( \partial_\tau \Lambda^B - 2 v_{\sigma}^{B} \right) \left( M^{-1} \right)_{(BE)} \left( \partial_\tau \Lambda^E - 2 v_{\sigma}^{E} \right) \nonumber \\
    &\quad + \frac{1}{2} \left( \partial_\sigma \Lambda^B - 2 v_{\tau}^{B} \right) \left( \gamma_{BE} - \left( M^{-1} \right)_{(BE)} \right)  \left( \partial_\sigma \Lambda^E - 2 v_{\tau}^{E} \right) \nonumber \\
    &\quad + \left( \partial_\tau \Lambda^B - 2 v_{\sigma}^{B} \right)  \left( M^{-1} \right)_{[BE]} \left( \partial_\sigma \Lambda^E - 2 v_{\tau}^{E} \right)  \, .
\end{align}
Next, we must also compute
\begin{align}
    \Pi_A \tensor{f}{^A_B_C} \Lambda^C \left( \partial_\sigma \Lambda^B - 2 v_\tau^B \right) &= \left( \left( M^{-1} \right)_{(AE)} \left( \partial_\tau \Lambda^E - 2 v_{\sigma}^{E} \right) + \left( M^{-1} \right)_{[AE]} \left( \partial_\sigma \Lambda^E - 2 v_{\tau}^{E} \right) \right) \nonumber \\
    &\qquad \qquad \qquad \cdot \tensor{f}{^A_B_C} \Lambda^C \left( \partial_\sigma \Lambda^B - 2 v_\tau^B \right) \, ,
\end{align}
where we use the identities
\begin{align}
    \left( M^{-1} \right)_{(AE)} \tensor{f}{^A_B_C} \Lambda^C = \left( M^{-1} \right)_{[EB]} \, , \qquad \left( M^{-1} \right)_{[AE]} \tensor{f}{^A_B_C} \Lambda^C = \gamma_{EB}  - \left( M^{-1} \right)_{(EB)} \, ,
\end{align}
which follow from (\ref{MA_id_one}) and (\ref{MA_id_two}) with $\mathcal{A}_{AB} = - \tensor{f}{_A_B^C} \Lambda_C$. This gives
\begin{align}\label{Pi_f_coupling_simp}
    \Pi_A \tensor{f}{^A_B_C} \Lambda^C \left( \partial_\sigma \Lambda^B - 2 v_\tau^B \right) &= \left( \partial_\tau \Lambda^E - 2 v_{\sigma}^{E} \right) \left( M^{-1} \right)_{[EB]} 
 \left( \partial_\sigma \Lambda^B - 2 v_\tau^B \right) \nonumber \\
    &\qquad + \left( \partial_\sigma \Lambda^E - 2 v_{\tau}^{E} \right) \left(  \gamma_{EB} - \left( M^{-1} \right)_{(EB)}  \right) \left( \partial_\sigma \Lambda^B - 2 v_\tau^B \right) \, .
\end{align}
Finally, we need the expression
\begin{align}\label{Pi_v_coupling_simp}
    v_{\sigma,C} \Pi^C = v_{\sigma, C} \left( M^{-1} \right)^{(CD)} \left( \partial_\tau \Lambda_D - 2 v_{\sigma, D} \right) + v_{\sigma, C} \left( M^{-1} \right)^{[CD]} \left( \partial_\sigma \Lambda_D - 2 v_{\tau, D} \right) \, .
\end{align}
Substituting (\ref{Pi_Pi_coupling_simp}), (\ref{Pi_f_coupling_simp}), and (\ref{Pi_v_coupling_simp}) into (\ref{final_tdpcm_ham_app}) gives
\begin{align}\label{tdpcm_ham_computing}
    \mathcal{H}_{\text{claim}} &= \frac{1}{2} \left( \partial_\tau \Lambda^B - 2 v_{\sigma}^{B} \right) \left( M^{-1} \right)_{(BE)} \left( \partial_\tau \Lambda^E - 2 v_{\sigma}^{E} \right) \nonumber \\
    &\quad + \frac{1}{2} \left( \partial_\sigma \Lambda^B - 2 v_{\tau}^{B} \right) \left( \gamma_{BE} - \left( M^{-1} \right)_{(BE)} \right)  \left( \partial_\sigma \Lambda^E - 2 v_{\tau}^{E} \right)  \nonumber \\
    &\quad + \left( \partial_\tau \Lambda^B - 2 v_{\sigma}^{B} \right)  \left( M^{-1} \right)_{[BE]} \left( \partial_\sigma \Lambda^E - 2 v_{\tau}^{E} \right) \nonumber \\
    &\quad - \left( \partial_\tau \Lambda^E - 2 v_{\sigma}^{E} \right) \left( M^{-1} \right)_{[EB]} \left( \partial_\sigma \Lambda^B - 2 v_\tau^B \right) \nonumber \\
    &\quad -  \left( \partial_\sigma \Lambda^E - 2 v_{\tau}^{E} \right) \left( \gamma_{EB} - \left( M^{-1} \right)_{(EB)} \right) \left( \partial_\sigma \Lambda^B - 2 v_\tau^B \right) \nonumber \\
    &\quad + \frac{1}{2} ( \partial_\sigma \Lambda^A ) ( \partial_\sigma \Lambda_A ) + v_\tau^A v_{\tau, A} + v_\sigma^A v_{\sigma, A} + 2 v_{\sigma, C} \left( M^{-1} \right)^{(CD)} \left( \partial_\tau \Lambda_D - 2 v_{\sigma, D} \right) \nonumber \\
    &\qquad  + 2 v_{\sigma, C} \left( M^{-1} \right)^{[CD]} \left( \partial_\sigma \Lambda_D - 2 v_{\tau, D} \right)  - 2 v_\tau^A \partial_\sigma \Lambda_A - E ( \nu_2 , \ldots , \nu_N )\, ,
\end{align}
or after collecting terms and simplifying,
\begin{align}\label{tdpcm_ham_computing_two}
    \mathcal{H}_{\text{claim}} &= \frac{1}{2} \left( \partial_\tau \Lambda^B - 2 v_{\sigma}^{B} \right) \left( M^{-1} \right)_{(BE)} \left( \partial_\tau \Lambda^E - 2 v_{\sigma}^{E} \right) \nonumber \\
    &\quad - \frac{1}{2} \left( \partial_\sigma \Lambda^B - 2 v_{\tau}^{B} \right) \left( \gamma_{BE} - \left( M^{-1} \right)_{(BE)} \right)  \left( \partial_\sigma \Lambda^E - 2 v_{\tau}^{E} \right) \nonumber \\
    &\quad + \frac{1}{2} ( \partial_\sigma \Lambda^A ) ( \partial_\sigma \Lambda_A ) + v_\tau^A v_{\tau, A} + v_\sigma^A v_{\sigma, A} + 2 v_{\sigma, C} \left( M^{-1} \right)^{(CD)} \left( \partial_\tau \Lambda_D - 2 v_{\sigma, D} \right) \nonumber \\
    &\qquad  + 2 v_{\sigma, C} \left( M^{-1} \right)^{[CD]} \left( \partial_\sigma \Lambda_D - 2 v_{\tau, D} \right)  - 2 v_\tau^A \partial_\sigma \Lambda_A  - E ( \nu_2 , \ldots , \nu_N ) \, .
\end{align}
Expanding out various terms and performing more algebra, we obtain
\begin{align}\label{tdpcm_ham_computing_three}
    \mathcal{H}_{\text{claim}} &= \frac{1}{2} ( \partial_\tau \Lambda^B ) \left( M^{-1} \right)_{(BE)} ( \partial_\tau \Lambda^E )  - 2 v_\sigma^B \left( M^{-1} \right)_{(BE)} ( \partial_\tau \Lambda^E )  + 2 v_\sigma^B \left( M^{-1} \right)_{(BE)} v_\sigma^E  \nonumber \\
    &\quad  - \frac{1}{2}\partial_\sigma \Lambda^B \partial_\sigma \Lambda_B   + 2 \partial_\sigma \Lambda^B v_{\tau, B} - 2 v_\tau^B v_{\tau, B } + \frac{1}{2} ( \partial_\sigma \Lambda^B ) \left( M^{-1} \right)_{(BE)} ( \partial_\sigma \Lambda^E )  \nonumber \\
    &\quad - 2 v_\tau^B \left( M^{-1} \right)_{(BE)} ( \partial_\sigma \Lambda^E ) + 2 v_\tau^B \left( M^{-1} \right)_{(BE)} v_\tau^E + \frac{1}{2} ( \partial_\sigma \Lambda^A ) ( \partial_\sigma \Lambda_A ) \nonumber \\
    &\quad  + v_\tau^A v_{\tau, A}  + v_\sigma^A v_{\sigma, A} + 2 v_{\sigma, C} \left( M^{-1} \right)^{(CD)} \left( \partial_\tau \Lambda_D \right) - 4 v_{\sigma, C} \left( M^{-1} \right)^{(CD)} \left( v_{\sigma, D} \right) \nonumber \\
    &\quad  + 2 v_{\sigma, C} \left( M^{-1} \right)^{[CD]} \left( \partial_\sigma \Lambda_D  \right) - 4 v_{\sigma, C} \left( M^{-1} \right)^{[CD]} \left(  v_{\tau, D} \right)  - 2 v_\tau^A \partial_\sigma \Lambda_A - E ( \nu_2 , \ldots , \nu_N ) \, \nonumber \\
    &= \frac{1}{2} ( \partial_\tau \Lambda^B ) \left( M^{-1} \right)_{(BE)} ( \partial_\tau \Lambda^E ) - 2 v_\sigma^B \left( M^{-1} \right)_{(BE)} v_\sigma^E + \frac{1}{2} ( \partial_\sigma \Lambda^B ) \left( M^{-1} \right)_{(BE)} ( \partial_\sigma \Lambda^E )  \nonumber \\
    &\quad - 2 v_\tau^B \left( M^{-1} \right)_{(BE)} ( \partial_\sigma \Lambda^E ) + 2 v_\tau^B \left( M^{-1} \right)_{(BE)} v_\tau^E - v_\tau^A v_{\tau, A} + v_\sigma^A v_{\sigma, A}  \nonumber \\
    &\quad  + 2 v_{\sigma, C} \left( M^{-1} \right)^{[CD]} \left( \partial_\sigma \Lambda_D  \right) + 4 v_{\tau, C} \left( M^{-1} \right)^{[CD]} \left(  v_{\sigma, D} \right)  - E ( \nu_2 , \ldots , \nu_N ) \, ,
\end{align}
which is equivalent to
\begin{align}\label{cancel_hamiltonian_rewritten_again}
    \mathcal{H}_{\text{claim}} &= \frac{1}{2} ( \partial_\tau \Lambda_C ) \left( M^{-1} \right)^{CD} ( \partial_\tau \Lambda_D ) + \frac{1}{2} ( \partial_\sigma \Lambda_D ) \left( M^{-1} \right)^{CD} ( \partial_\sigma \Lambda_D ) \nonumber \\
    &\qquad + 2 v_{\tau, C} \left( M^{-1} \right)^{CD} v_{\tau, D} - 2 v_{\sigma, C} \left( M^{-1} \right)^{CD} v_{\sigma, D} + 4 v_{\tau, C} \left( M^{-1} \right)^{[CD]} v_{\sigma, D} \nonumber \\
    &\qquad + 2 ( v_{\sigma, C} ) \left( M^{-1} \right)^{[CD]} ( \partial_\sigma \Lambda_D ) - 2 ( v_{\tau, C} ) \left( M^{-1} \right)^{(CD)} ( \partial_\sigma \Lambda_D ) \nonumber \\
    &\qquad - v_{\tau}^A v_{\tau, A} + v_{\sigma}^A v_{\sigma, A} - E ( \nu_2 , \ldots , \nu_N ) \, .
\end{align}
This is the expression of equation (\ref{cancel_hamiltonian_rewritten}). We have therefore proven that $\mathcal{H}_{\text{claim}} = \mathcal{H}_{\text{TD-AFSM}}$, establishing that (\ref{final_tdpcm_ham_app}) is the correct Hamiltonian, written in terms of momenta.

\bibliographystyle{utphys}
\bibliography{master}

\end{document}